\newif \ifproofs
\theoremstyle{definition}
\newif\ifarxiv   
\def\s{\bm{s}}
\def\n{n} 
\def\c{c} 
\def\k{k} 
\def\a{a} 
\def\m{m} 
\def\s{s} 
\def\S{\mathcal{S}} 
\def\B{B} 
\def\q{\bm{q}} 
\def\f{f} 
\def\l{l} 
\def\sigmaa{\boldsymbol{\sigma}} 
\def\ppi{\boldsymbol{\pi}} 
\def\0{\bm{0}}
\def\gen{\text{gen}}
\def\messagespace{\mathcal{M}}
\def\skspace{\mathcal{A}}
\def\pkspace{\mathcal{B}}
\def\signaturespace{\Sigma}
\def\sk{\text{sk}}
\def\pk{\text{pk}}
\def\sign{\text{sign}}
\def\verify{\text{verify}}
\def\negligible{\varepsilon}
\newcommand{\bigpar}[1]{\left( #1 \right)}
\newcommand{\bigbra}[1]{\left[ #1 \right]}
\newcommand{\bigbrace}[1]{\left\{ #1 \right\}}
\newcommand{\casewise}[1]{\left\{ #1 \right.}
\definecolor{mygreen}{RGB}{28,172,0} 
\definecolor{mylilas}{RGB}{170,55,241}
\DeclareFixedFont{\ttb}{T1}{txtt}{bx}{n}{12} 
\DeclareFixedFont{\ttm}{T1}{txtt}{m}{n}{12}  
\newtheorem{theorem}{Theorem}
\newtheorem{corollary}{Corollary}
\newtheorem{proposition}{Proposition}
\newtheorem{lemma}{Lemma}
\newtheorem{observation}{Observation}
\newtheorem{assumption}{Assumption}
\theoremstyle{definition}
\newtheorem{definition}{Definition}
\newtheorem{remark}{Remark}
\newtheorem{example}{Example}
\newenvironment{hproof}{%
  \proof}{\endproof}
\pgfplotsset{compat=1.14}
\pgfplotsset{scaled y ticks=false}
\definecolor{deepblue}{rgb}{0,0,0.5}
\definecolor{deepred}{rgb}{0.6,0,0}
\definecolor{deepgreen}{rgb}{0,0.5,0}
\title{Catch Me If You Can: Combatting Fraud in Artificial Currency Based Government Benefits Programs}
\author{
\begin{tabular}{ccc}
    \begin{tabular}{c}
        Devansh Jalota\footnote{Equal Contribution} \\
        Stanford University \\
        \texttt{djalota@stanford.edu} 
    \end{tabular}
     &
    \begin{tabular}{c}
        Matthew Tsao\footnotemark[\value{footnote}] \\
        Lyft Inc. \\
        \texttt{mtsao@lyft.com} 
    \end{tabular}
     & 
    \begin{tabular}{c}
        Marco Pavone \\
        Stanford University \\
        \texttt{pavone@stanford.edu} 
    \end{tabular}    
\end{tabular}
}
\begin{document}
\maketitle

\begin{abstract}

Artificial currencies have grown in popularity in many real-world resource allocation settings. In particular, they have gained traction in government benefits programs, e.g., food assistance or transit benefits programs, that provide support to eligible users in the population, e.g., through subsidized food or public transit. However, such programs are susceptible to several fraud mechanisms, with a notable concern being \emph{misreporting fraud}, wherein users can misreport their private attributes to gain access to more artificial currency (credits) than they are entitled to.

To address the problem of misreporting fraud in artificial currency based benefits programs, we introduce an audit mechanism that induces a two-stage game between an administrator and users. In our proposed mechanism, the administrator running the benefits program can audit users at some cost and levy fines against them for misreporting their information. For this audit game, we study the natural solution concept of a signaling game equilibrium and investigate conditions on the administrator’s budget to establish the existence of equilibria. In the regime when existence is guaranteed, we show that the least favorable signaling game equilibria for the administrator (in terms of misreporting fraud) are precisely the sender-preferred subgame perfect equilibria studied in the Bayesian persuasion literature. The computation of sender-preferred subgame perfect equilibria can be done via linear programming in our problem setting through an appropriate design of the audit rules. Our analysis also provides upper bounds that hold in any signaling game equilibrium on the expected excess payments made by the administrator and the probability that users misreport their information.

We further show that the decrease in misreporting fraud corresponding to our audit mechanism far outweighs the administrator's spending to run it by establishing that its total costs are lower than that of the status quo with no audits. Finally, to highlight the practical viability of our audit mechanism in mitigating misreporting fraud, we present a case study based on Washington D.C.'s federal transit benefits program. In this case study, the proposed audit mechanism achieves several orders of magnitude improvement in total cost compared to a no-audit strategy for some parameter ranges.

\end{abstract}

\newpage 

\tableofcontents 

\newpage 

\section{Introduction}

In many settings, the sole use of monetary pricing to mediate the allocation of scarce resources is often unacceptable, e.g., in the allocation of food to food banks~\cite{prendergast-2017}, students to courses~\cite{Budish}, and computing resources in a university~\cite{gorokh2020nonmonetary}. In each of these contexts, the use of money is often perceived as unfair as it biases the resulting allocations towards users with higher incomes. Thus, to level the playing field, there has been a growing interest in artificial currency (or ``fake'' money) mechanisms, wherein the allocated credits (currency) have no external value beyond the market for which these credits have been issued~\cite{HZ}. 

Amongst the many applications of artificial currencies in practice, they have, in particular, gained traction in government benefits programs, i.e., welfare programs designed to provide benefits to certain users. Examples of such programs include food assistance programs~\cite{food-stamp-2002} that enable low-income households to purchase food and federal transit benefits programs~\cite{kutz2007federal} that provide federal employees with a transit benefit to cover their commuting costs to work. More recent instantiations of such programs include California's \emph{Community Transport Benefits Program}~\cite{CBCP-SanMateo}, which provides low-income users with mobility credits to use tolled highway express lanes. 

While benefits programs are typically well-intentioned, they often fall short of their intended goals due to fraud~\cite{lange1979fraud}. Of particular interest in this work is the concern of \emph{misreporting fraud}, wherein users misreport their attributes to gain more credits than they are entitled to. Misreporting fraud arises as the eligibility criteria for benefits programs, specified by an administrator running the program, typically depend on private user characteristics, which may not be directly observable to the administrator.
As a result, such information asymmetry naturally creates incentives for users to game the system and strategically manipulate their private information to maximize their chances of obtaining additional credits. For instance, millions of ineligible users in India have ration cards~\cite{ration-card-millions} to obtain subsidized food, as ration card eligibility depends on easily concealable information, such as whether a user owns a motorized vehicle.  Further, in federal transit benefits programs, users often misreport their commuting costs to obtain additional credits, e.g., one employee claimed ``the maximum benefit of \$105 per month while her commute cost was only \$50''~\cite{kutz2007federal}. In fact, in Washington D.C.’s federal transit benefits program in 2006, at least \$17 million of potentially fraudulent transit benefits were claimed, which constituted about 25\% of the total claims~\cite{kutz2007federal}.

Given that misreporting fraud can result in millions of dollars of losses for an administrator, there is a need to design mechanisms to deter users' gaming and manipulation of private information. One such mechanism that has gained traction and achieved great success across many related applications, including income tax audits~\cite{irs-audit-2022}, strategic resource allocation~\cite{lundy-taylor-etal-2019}, and strategic algorithmic classification~\cite{estornell2021incentivizing,estornell2023incentivizing}, is that of conducting user audits, wherein audits reveal the true type of each user. The efficacy of audit mechanisms stems from the fact that it compels users to hedge between misreporting private information to receive additional credits and the risk of the potential consequences of being audited, including the associated penalties for misreporting.

While an audit mechanism offers a natural solution to deter users from misreporting private information, solving for equilibria in audit games and computing optimal audit strategies typically boils down to solving an NP-hard problem~\cite{estornell2021incentivizing} or a sequence of multiple large-scale optimization problems~\cite{blocki2013audit}. Moreover, conducting audits is often costly, and an administrator may only be able to audit a few users. To this end, in this work, we study a natural model for an audit game capturing the application of benefits programs (and other related applications) and leverage our problem’s structure to transform the original bi-level equilibrium computation problem into a linear program that can be solved in polynomial time in the problem parameters. Moreover, we establish bounds on the probability of user misreports and excess payments as a function of the audit costs, and our results demonstrate that a small audit budget is enough to significantly reduce misreporting fraud, thereby highlighting the efficacy of our proposed audit mechanism in reducing misreporting fraud in benefits programs applications.

\subsection{Our Contributions}

In this work, we introduce an audit mechanism to address the concern of misreporting fraud in artificial currency based government benefits programs. Within this audit mechanism, the administrator running the program can audit users at some cost and levy a fine against a user if the audit reveals that the user misreported their information (or type) (Section~\ref{sec:mainModel}). The primary goal of this audit mechanism is to reduce the probability that users misreport, thereby decreasing the excess payments to users beyond the predefined credit allocation specified by eligibility criteria set by the administrator, all while keeping audit costs low.

To achieve this goal, we consider a profit-maximizing administrator and study the resulting signaling game equilibria in an audit game between the administrator and utility-maximizing users. In this game, users signal their type to the administrator, who decides whether to audit each user. Depending on whether an audit was conducted, the administrator allocates artificial currency to users based on their reported or actual eligibility. 

For this audit game, in Section~\ref{sec:misreportingSingle}, we first study the setting when the administrator is not budget-constrained, i.e., the administrator's audit budget is enough to audit all users. In this setting, we establish the existence of signaling game equilibria and show that the least favorable signaling game equilibria for the administrator are precisely the sender-preferred subgame perfect equilibria of the corresponding Bayesian persuasion game, where instead of knowing their types, users only know the distribution of their type. Furthermore, by appropriately designing the audit rules and, consequently, the induced player payoffs, we show that these sender-preferred subgame perfect equilibria can be computed via linear programming. We emphasize that while the problem of computing equilibria in multi-stage games, e.g., Stackelberg equilibria, is generally NP-hard~\cite{korzhyk2010complexity} and often involves solving a non-convex optimization problem, we leverage our problem's structure to transform the equilibrium computation problem to a polynomial-time solvable linear program.

For the equilibria of this audit game, we further develop upper bounds on the probability with which users will misreport their type (Corollary~\ref{cor:misrepotingProbBound}) and the excess payments made by the administrator to users (Corollary~\ref{cor:equilibrium_suboptimality}) as a function of the audit parameters, i.e., the cost of conducting an audit and the fine levied on users for misreporting. These results are \textit{prescriptive} in the sense that for any given tolerance level on the probability of misreporting and excess payments, Corollaries~\ref{cor:misrepotingProbBound} and \ref{cor:equilibrium_suboptimality} respectively give sufficient conditions on the audit cost and the fine under which the tolerances will not be violated in the resulting equilibrium.

Next, in Section~\ref{sec:multiRecipientMain}, we consider the setting when the administrator is budget-constrained, limiting the amount it can spend on audits. While signaling game equilibria may, in general, not exist (Proposition~\ref{prop:counter-eg-nonExistence}), we show that if the administrator's budget scales with the size of the largest user coalition, signaling game equilibria exist and can be computed using an analogous linear programming method as in the unconstrained budget setting (Theorems~\ref{thm:eqEx-suff-budget} and~\ref{thm:coalitions-ExEq}). Since user coalitions are typically small in real world social networks~\cite{dunbar-number}, our results imply that with a small audit budget, the administrator can significantly reduce misreporting fraud and the excess payments it makes to users. 

Finally, we investigate whether the decrease in misreporting fraud due to our audit mechanism outweighs the administrator's spending to run it in Section~\ref{sec:mainCompAudit}. To this end, we show that the total cost of our audit mechanism is no more than that of the setting without audits (i) for all parameter ranges when the number of user types is two and (ii) under a mild condition on the audit fine when the number of user types is more than two. Here, the total cost of a mechanism is composed of the budget required to run the audit mechanism and the excess payments made to users relative to the setting when all users report their type truthfully. Furthermore, we present numerical experiments based on an application case of a federal transit benefits program in Washington D.C., which demonstrates the practical efficacy of our proposed audit mechanism in mitigating misreporting fraud at a low cost in a real-world context.

In the appendix, we provide omitted proofs, present additional theoretical details to emphasize certain concepts, provide additional numerical results, and introduce a mechanism leveraging cryptography to address another common fraud mechanism in benefits programs, that of \emph{black market fraud}, wherein users sell some of their credits in exchange for \emph{real} money. 

We note that while we present our model and analysis in the language of benefits programs, our results may apply to other settings where misreporting fraud may be a concern, e.g., tax evasion~\cite{ALLINGHAM1972323}.

\section{Related Literature} \label{sec:literature}

Informational asymmetries are present in many markets with strategic agents~\cite{milgrom-weber-1982,milgrom-roberts}, including job markets~\cite{spence-signaling}, \ifarxiv where employers are unsure about worker capabilities, \fi automobile markets~\cite{akerlof-1970}, \ifarxiv where buyers are less informed about car qualities than sellers, \fi and financial markets\ifarxiv, where borrowers ``know their collateral and moral rectitude better than lenders''\fi~\cite{info-asym-leland-pyle}. As in these settings, we also study a context where one group of agents, i.e., users receiving artificial currencies, has an informational advantage that it can use to improve its payoff~\cite{holmstrom-moral-hazard}. However, unlike prior treatments of signaling\ifarxiv in asymmetric information settings, wherein sending information may be costly for the sender (e.g., receiving an education is a costly signal in job markets~\cite{spence-signaling}) \fi, our model does not include signaling costs. In this regard, our proposed audit game is akin to ``cheap talk'' signaling~\cite{gneezy-2005,strategic-info-transmission}. However, compared to the equilibrium characterization results in Crawford and Sobel~\cite{strategic-info-transmission}, we develop linear programs to compute equilibria and investigate a budget-constrained setting where equilibria may not exist.

Since this work studies a law enforcement application, it is closely related to the literature on security games~\cite{tambe2011security}, where a law enforcement agency seeks to protect resources, e.g., airport checkpoints, from an adversary. In typical security game applications~\cite{pita2008deployed,kjtjft-2009}, a law enforcement agency decides on a (potentially) randomized allocation of security resources to different locations to mitigate damage by an adversary that best responds by maximizing its utility. While we also study a law-enforcement application, in contrast to typical security games, where the law enforcement agency first commits to a randomized allocation strategy to which the adversaries best respond, in our setting, the role of the players is reversed as users first commit to a signal based on which the administrator (i.e., the law enforcement agency) decides whether to audit. 

The problem of misreporting or manipulating private information to game the system is commonplace across many settings~\cite{perez2023fraud}, necessitating mechanisms that mitigate such manipulations. For instance, parents often fake their addresses to admit their children to better public schools in Denmark~\cite{bjerre2023playing} while French firms often under-report the size of their workforce to avoid certain legal obligations~\cite{RePEc:hal:psewpa:halshs-03828729}. Furthermore, in healthcare settings, doctors often manipulate the priority of their patients in organ transplantation waiting lists~\cite{mcmichael2022stealing}. However, unlike these works that primarily focus on developing empirical or mechanism design approaches to address the issue of manipulating private information, we study an audit game, wherein a law enforcement agency deploys resources to verify (or audit) information reported by an individual and imposes a punishment, e.g., a fine, on users that misreport.

In this regard, our work contributes to the auditing theory (or audit games) literature~\cite{blocki2013audit,henderson-audit}. In the context of audit games, Blocki et al.~\cite{blocki2013audit} develop a model akin to a Stackelberg security game where a defender seeks to dis-incentivize manipulations by an attacker that seeks to avoid detection while maximizing its utility. More recently, several works~\cite{lundy-taylor-etal-2019,estornell2021incentivizing,estornell2023incentivizing} have studied auditing in the context of manipulating a classification or allocation system to get access to a desired resource. 
However, unlike the Stackelberg security games approach to auditing~\cite{blocki2013audit} that develops multiple optimization problems to characterize the equilibrium audit strategy, in our setting, equilibria can be computed using a single linear program that can be solved efficiently. Moreover, unlike the strategic classification setting~\cite{estornell2021incentivizing}, we study audit games in the context of artificial currency based government benefits programs, which induces a different payoff structure for the law enforcement agency and users. Consequently, the structure of the equilibria in our studied audit game is considerably different and requires different methodological tools than prior works on audit games.

This work is also related to Becker's seminal work~\cite{becker1968crime}, as we investigate whether the decrease in the misreporting fraud of our audit mechanism outweighs the spending to run it. However, unlike Becker~\cite{becker1968crime}, the interactions in our audit mechanism are strategic. 

Our work is closely connected to the literature on Bayesian persuasion~\cite{bergemann2019information,kamenica-review-2019,asquith2013effects,segal-2010-optimal-disclosure}. While we also investigate equilibrium formation under the Bayesian persuasion model (see Section~\ref{sec:misreportingSingle}), unlike the seminal work of Kamenica and Gentzkow~\cite{kamenica2011bayesian} that re-expressed the problem of choosing an optimal signal as a search over a distribution of posteriors, we leverage our problem structure to directly compute the optimal signal using linear programming. Further, unlike the Bayesian persuasion literature, we also study a setting when the administrator is budget-constrained.

\section{Model and Preliminaries}\label{sec:mainModel}

In this section, we present a model of misreporting fraud and introduce an audit mechanism that induces an audit game between the administrator and users. To this end, we first introduce some notation (Section~\ref{sec:model:game_params}), the payoffs of the administrator and users (Section~\ref{sec:model:strategies_payoffs}), and the notion of misreporting fraud we study (Section~\ref{sec:model:fraud_definition}). We then present the equilibrium notion we study in Section~\ref{sec:model:interim_equilibria} and discuss our modeling assumptions in Section~\ref{subsec:discussion-modeling-asmpns}.

\subsection{Audit Game Parameters}\label{sec:model:game_params}

We model the problem of misreporting fraud and the corresponding audit mechanism to address it as an audit game between the administrator and users. In our audit game, at a high level, a user will signal their eligibility (type) for credits to an administrator. The administrator can either accept the signal at face value or conduct an audit to learn the user's true type. Credits are allocated to the user based on their true type or their signaled type, depending on whether an audit did or did not occur, respectively. In this and the following subsections, we formalize this description of our audit mechanism.

To further elucidate this audit game, we first introduce some notation. In particular, we let $\S$ denote the set of all possible user types. In line with existing literature, we assume that all players have a shared prior belief on the distribution of user types~\cite{strategic-info-transmission,kamenica2011bayesian}, given by the probability vector $\q = (q_{\m})_{\m \in \S}$, where $\q \geq \0$ and $\sum_{m \in \S} \q_{\m} = 1$. The function $\f:\S \rightarrow \mathbb{R}_{\geq 0}$ denotes the mapping from a user type $\m \in \S$ to a credit allocation $\f(\m)$, which corresponds to the number of credits to be issued to each user based on particular eligibility criteria set by the administrator. To make our model more concrete, at the end of this section, we elucidate examples of the function $\f$ and state space $\S$ in two benefits programs.

The administrator's primary tool to combat the possibility of users lying about their type, e.g., to get access to additional credits than they are entitled to, is through audits, a commonly used mechanism to prevent users lying about their private characteristics~\cite{estornell2021incentivizing,estornell2023incentivizing}. In particular, the administrator can learn a user's type by auditing them. A cost of $c$ is required to conduct an audit, and the administrator can levy a fine $k$ on a user if an audit revealed that they misreported their type. Here, we assume $0 \leq c \leq k$. Moreover, we assume the administrator has a budget $B>0$, which denotes the cap on the amount the administrator can spend on audits. 

Given the complexity introduced by budget constraints, we first study the setting without budgets (Section~\ref{sec:misreportingSingle}), following which we investigate the more challenging budget-constrained setting (Section~\ref{sec:multiRecipientMain}). In the remainder of this section, for ease of exposition, we introduce the notation and payoffs for the budget unconstrained setting and generalize our notation to the budget constrained setting in Section~\ref{sec:multiRecipientMain}.

We now provide examples of the function $\f$ and state space $\S$ in the context of two government benefits programs.

\begin{example} [Federal Transit Benefits Program] \label{ex:ftbp}
    In a federal transit benefits program, federal employees are given a monthly transit benefit equal to their commuting costs to work. In other words, each element $\s$ in a state space $\S$ represents a tuple $\s = (b_1, b_2)$, where $b_1$ denotes whether an individual is a federal employee and $b_2$ represents the user's monthly commuting costs. In this context, the function $\f$ is zero if a user is not a federal employee and $b_2$ otherwise, i.e., federal employees receive credits based on commuting costs, while all other users receive no credits. Further, the prior belief $\q^i$ of each user can correspond to the knowledge of whether a user is a federal employee, which can be determined through employee records data, and information on where the employee stays, which influences commuting costs.
\end{example}

\begin{example} [Food Assistance Program] \label{ex:foodStamps}
    In food assistance programs, such as those in California, credits (i.e., food stamps) are distributed to users based on many factors, such as a user's income and the number of people in their household. That is, the state space $\S$ corresponds to a set of tuples, each specifying a user's monthly income and household size. For such programs, the function $\f$ can be computed using income qualification tables, as in~\cite{food-bank-table}.
\end{example}

\subsection{Player strategies and Payoffs} \label{sec:model:strategies_payoffs}

In the budget unconstrained setting, we consider a game with two players: one administrator and one user. The user has a type $m$ that is drawn from the distribution $\q$, which is private information in the sense that only the user knows $m$. The setting with one administrator without a budget constraint and $n$ non-colluding users is a straightforward generalization, as the absence of collusion means that the $n$ user setting is simply $n$ single user games happening in parallel. We note that we also consider the setting with user collusion in Section~\ref{sec:coalitions}. In the following, we introduce the strategies of both the users and administrator and elucidate their payoffs in our audit game.

\textbf{User strategies:} Because the user needs to demonstrate its eligibility to the administrator in order to receive credits, the user sends a signal $s \in \S$ to the administrator. The user's strategy is thus a signaling strategy which specifies the conditional distribution of $s$ given $m$ for every possible value of $m$. Namely, the user's strategy has the form $\ppi := \bigbrace{\ppi(\cdot | m)}_{m \in \S}$, and the strategy set for the user is given by $\Pi$: 
\begin{align}\label{eqn:user_action_space}
    \Pi = \bigbrace{ \bigbrace{\ppi(\cdot | m)}_{m \in \S} : \ppi(\cdot|m) \geq \0 \text{ and } \sum_{\s \in \S} \ppi(\s|\m) = 1 \text{ for all } m \in \S}.
\end{align}

\textbf{Administrator Strategies:} Upon seeing the signal $s$, the administrator can choose to audit the user to learn its type $m$. This may be necessary if the user is misreporting its eligibility, i.e., $s \neq m$. We use $a(s)$ to denote the administrator's audit decision given the signal $s$. $a(s) = 1$ represents the outcome that an audit occurred, and $a(s) = 0$ represents the outcome that no audit occurred. To allow for randomization, an administrator strategy is given by a function $\sigmaa : \S \rightarrow [0,1]$, where $\sigmaa(s)$ is understood as $\mathbb{P}(a(s) = 1)$. 

\paragraph{Player payoffs and player utility functions:}

To model the payoffs of both the users and the administrator, in this work, we study a profit-maximizing administrator (A) and a utility-maximizing user (R), whose payoffs are given as follows: 
\begin{equation} \label{eq:admin-payoff}
  U^{A}(\a(\s), \m) = 
    \begin{cases}
       -\f(\s) & \text{if $\a(\s) = 0$}\\
       -\c - \f(\m)  & \text{if $\a(\s) = 1$, $\s=\m$}\\
       \k-\c - \min( \f(\m), \f(\s))  & \text{if $\a(\s) = 1$, $\s \neq \m$}
    \end{cases}    
\end{equation}

\begin{equation} \label{eq:user-payoff}
  U^{R}(\a(\s), \m) = 
    \begin{cases}
      \f(\s) & \text{if $\a(s) = 0$}\\
      \f(\m) & \text{if $\a(\s) = 1$, $\s=\m$}\\
      \min(\f(\m), \f(\s))-\k & \text{if $\a(\s) = 1$, $\s \neq \m$}.
    \end{cases}     
\end{equation}
The rationale for these payoffs is as follows. If the administrator does not conduct an audit given a signal $\s$, it issues credits worth $\f(\s)$ to the user. On the other hand, if the administrator conducts an audit with a cost $c$ and the user truthfully reports its type, the administrator pays $\f(\m)$ to the user. Conversely, if the user misreports, the administrator additionally levies a fine $\k$ against the user. The payout $\min(\f(\s), \f(\m))$ ensures that the administrator never pays more than its intended credit allotment $\f(\m)$ when it conducts an audit, while if the user asks for too little, i.e., $\f(\s) < \f(\m)$, then the administrator only pays out $\f(\s)$, thereby discouraging under-reporting. This discouragement of under-reporting is for mathematical completeness to ensure that utility-maximizing users will never under-report (see Lemma~\ref{lem:underreporting_dominated}), which is a close representation of reality. 

The utility functions for the administrator and recipient are the expected payoffs as a function of the strategy profile $(\ppi, \sigmaa)$, where the averaging is done over the player's belief and the randomness of the player strategies. In particular, the utility functions of the administrator and the user are given by:
\begin{align}
    \mathcal{U}^A(\ppi, \sigmaa) &:= \mathbb{E}_{m \sim \q} \mathbb{E}_{s \sim \ppi(\cdot | m), a \sim \text{Bern}(\sigmaa(s)) } \bigbra{ U^A (a(s), m) } \nonumber \\
    &= \sum_{m \in \S} \sum_{s \in \S} \q_m \ppi(s | m) \bigbra{ -f(s) + \sigma(s) \bigpar{- c + \mathbb{I}_{[s \neq m]} \bigpar{\bigbra{f(s) - f(m)}_+ - k}}}.\label{eqn:admin_utility_fn} \\
    \mathcal{U}^R_m (\ppi, \sigmaa) &:= \mathbb{E}_{s \sim \ppi(\cdot | m), a \sim \text{Bern}(\sigmaa(s)) } \bigbra{ U^R (a(s), m) } \nonumber \\
    &= \sum_{s \in \S} \ppi(s | m) \bigbra{f(s) - \sigma(s) \mathbb{I}_{[s \neq m]}\bigpar{\bigbra{f(s) - f(m)}_+ + k}}.\label{eqn:sg_user_utility_fn}
\end{align}

Note the asymmetry between the utility functions. $\mathcal{U}^A$ averages over $m$ because the administrator does not know the user's type $m$. On the other hand, $\mathcal{U}^R_m$ does not average over $m$ because the user knows its own type. For more detailed explanations and derivations of the above expected payoff functions, we refer to Appendix~\ref{app:utility_derivations}.

\subsection{Misreporting Fraud as Excess Payments}\label{sec:model:fraud_definition}

The main objective of the audit mechanism is to deter misreporting fraud whereby users receive more credits than they are entitled to. We thus formally define misreporting fraud as the excess payments that the administrator pays out, as is elucidated through the following definition. 
\begin{definition} [Misreporting Fraud] \label{def:misreporting-fraud}
Given a strategy profile $(\ppi, \sigmaa)$, the level of misreporting fraud is defined as the total excess payments, denoted $\mathcal{E}(\ppi, \sigmaa)$, which is given by:
\begin{align*}
    \mathcal{E}(\ppi, \sigmaa) := \mathbb{E}_{m \sim \q} \mathbb{E}_{s \sim \ppi(\cdot | m), a \sim \text{Bern}(\sigmaa(s))} \bigbra{ \max \bigpar{ U^R(a(s), m) - f(m), 0 } }.
\end{align*}
\end{definition}
Despite our central goal of deterring misreporting fraud, we note that we do not explicitly model the administrator as one that minimizes the level of excess payments and instead model the administrator as a profit-maximizer. We do so, as solely minimizing the administrator's excess payments would involve the trivial solution where the administrator audits all users, which would come at a potentially exorbitant cost to the administrator, resulting in an outcome that is unlikely to be practically viable. Thus, we incorporate the cost the administrator incurs when performing audits through a profit-maximization objective of the administrator, which, as noted in Section~\ref{subsec:discussion-modeling-asmpns}, also achieves a low level of misreporting fraud.

\subsection{Equilibrium Notions}\label{sec:model:interim_equilibria}

In this work, the solution concept of interest is that of a signaling game equilibrium, where each user knows the realization of their type $m$ drawn from a prior $\q$, and the administrator only knows the distribution $\q$. Such a model of the informational assumptions of the administrator and users is most appropriate for the benefits program applications we study, as it assumes users know their type. We also study the solution concept of a sender-preferred subgame perfect (i.e., Bayesian persuasion) equilibrium, wherein users do not know their type, and instead, only know a prior distribution for their type. While less practical for benefits program applications, the latter solution concept will be useful in bounding the misreporting fraud in signaling game equilibria.

\paragraph{Signaling Game Equilibirum:} A strategy profile $(\ppi', \sigmaa')$ is a signaling game equilibrium~\cite{Sobel2009} of the audit game if and only if (a) the administrator cannot improve its utility via unilateral deviation, and (b) no single user of type $m$ can improve its utility provided that the administrator responds to the user action so as to maximize $\mathcal{U}^A$. 

\begin{definition}[Signaling Game Equilibria]\label{def:signalingGameEqOneRecipient}
A strategy profile $(\ppi', \sigmaa')$ is a signaling game equilibrium of the audit game if and only if:
\begin{align*}
    \sigmaa' \in \underset{\sigmaa: \sigmaa(s) \in [0, 1], \forall s \in \S}{\text{argmax }} \mathcal{U}^A(\ppi', \sigmaa)
\end{align*}
and furthermore for every $m \in \S$, we have:
\begin{align*}
    \ppi' \in &\underset{\ppi \in \Pi}{\text{ argmax }} \; \mathcal{U}^R_m(\ppi, \sigma) \\
    \text{s.t. } &\ppi(\cdot | m') = \ppi'(\cdot | m') \; \forall \; m' \neq m \\
                 & \sigmaa \in \underset{\sigmaa_0: \sigmaa_0(s) \in [0, 1], \forall s \in \S}{\text{argmax }} \mathcal{U}^A(\ppi, \sigmaa_0).
\end{align*}
\end{definition}

\paragraph{Sender-Preferred Subgame Perfect (Bayesian Persuasion) Equilibrium:}
The notion of Sender-Preferred Subgame Perfect equilibrium \cite{kamenica2011bayesian}, which we will henceforth refer to as Bayesian persuasion equilibrium, will be useful in our analysis of signaling game equilibria in the audit game. In a Bayesian persuasion equilibrium, it is assumed that users do not know their true type and thus choose their signaling strategy to maximize expected utility where the expectation now includes averaging over $m$. To this end consider the following average user utility:
\begin{align}
    \mathcal{U}^R (\ppi, \sigmaa) &:= \mathbb{E}_{m \sim \q} \bigbra{ \mathcal{U}^R_m (\ppi, \sigmaa) } \nonumber \\
    &= \sum_{m \in \S} \sum_{s \in \S} \q_m \ppi(s | m) \bigbra{f(s) - \sigma(s) \mathbb{I}_{[s \neq m]}\bigpar{\bigbra{f(s) - f(m)}_+ + k}}.\label{eqn:user_utility_fn}
\end{align}

A strategy profile $(\ppi^*, \sigmaa^*)$ is a Bayesian persuasion equilibrium of the audit game if and only if (a) the administrator responds optimally based on its posterior distribution given the signal $s$ (i.e., subgame perfection), and (b) the user chooses their strategy to maximize their utility.  

\begin{definition}[Bayesian Persuasion Equilibria]\label{def:sender-preferred-subgame-perfect-equilibria}
A strategy profile $(\ppi^*, \sigmaa^*)$ is a Bayesian persuasion equilibrium of the audit game if and only if
\begin{align*}
    \sigmaa^* \in \underset{\sigmaa: \sigmaa(s) \in [0, 1], \forall s \in \S}{\text{argmax }} \mathcal{U}^A(\ppi^*, \sigmaa),
\end{align*}
and 
\begin{align*}
    \ppi^* \in &\underset{\ppi \in \Pi}{\text{ argmax }} \mathcal{U}^R(\ppi, \sigmaa) \\
    \text{s.t. } & \sigmaa \in \underset{\sigmaa_0: \sigmaa_0(s) \in [0, 1], \forall s \in \S}{\text{argmax }} \mathcal{U}^A(\ppi, \sigmaa_0).
\end{align*}
\end{definition}
Note that the signaling game and Bayesian persuasion solution concepts are not equivalent. While every Bayesian persuasion equilibrium is a signaling game equilibrium in our setting, the converse is not true. In Appendix~\ref{app:equilibria_nonequiv}, we construct an example game and show that there are signaling game equilibria that are not Bayesian persuasion equilibria. 

\subsection{Discussion} \label{subsec:discussion-modeling-asmpns}

In this section, we discuss our modeling assumptions and the equilibrium notions we study. 

First, in modeling the payoffs in Equations~\eqref{eq:admin-payoff} and~\eqref{eq:user-payoff} of the administrator and the users, we assume users are utility maximizers, as is standard in the literature on security games~\cite{yin2012trusts}, and that the administrator is a profit maximizer. We model a profit-maximizing administrator, as under this objective, the administrator minimizes the sum of its excess payments to users and the cost to conduct audits. Moreover, as we discuss in Remark~\ref{rem:UA_interpret} in Section~\ref{subsec:misreportingSingle:discussion}, subject to an equilibrium condition we derive, the administrator, under a profit maximization objective, minimizes excess payments at equilibrium. In addition, under a profit-maximization objective of the administrator, not only is users' misreporting probability kept low (see Corollary~\ref{cor:misrepotingProbBound}) but also the overall budget required to run the audit mechanism is small (Section~\ref{sec:multiRecipientMain}). In other words, our results and analysis reveal that a profit maximization objective captures the administrator's goal of minimizing the sum of its excess payments to users and the cost of conducting audits in running a benefits program. While we focus on a profit-maximizing administrator, considering alternative models with other valid administrator objectives is an exciting direction for future research. 

Next, even though we focus on the payoff functions in Equations~\eqref{eq:admin-payoff} and~\eqref{eq:user-payoff}, our results and analysis can be extended naturally to a broader class of payoff functions, e.g., to the more user-punishing setting where users pay a fine $\k$ without receiving any credits when an audit successfully detects a misreport (i.e., when $\a(\s_i) = 1$ and $\s_i \neq \m_i$). We emphasize that Equations~\eqref{eq:admin-payoff} and~\eqref{eq:user-payoff} ensure that the administrator will pay out $f(\m_i)$ credits to all users $i$ under truthful reporting, thus capturing the intended outcome of the benefits program that seeks to deliver the appropriate quantity of benefits to eligible users. Moreover, in this work, we also assume that the audit fine is at least the audit cost, i.e., $\k \geq \c$, as only then is conducting audits economically viable for the administrator.

Furthermore, in line with real-world contexts, we consider a budget constraint for the administrator, which models the setting when it may not be possible for the administrator to audit all users, e.g., due to labor limitations. In this work, we also consider the possibility of fractional budgets, wherein the administrator may only be able to allocate a fraction of the audit cost $\c$ to audit some users or may not have enough budget to audit even one user (see Proposition~\ref{prop:counter-eg-nonExistence}). We interpret a fractional budget as the probability that an audit is conducted and assume that an audit, if conducted, will correctly determine whether a user is truthful.

Finally, in this work, we study the notion of an signaling game equilibria, where each user knows the realization of their type $m$ drawn from a prior $\q$ while the administrator only has knowledge of the distribution $\q$. Such a model of the informational assumptions of the administrator and the users is natural for benefits programs, as users typically know their types, i.e., their eligibility level for the benefits program. In studying signaling game equilibria, we also develop a linear programming characterization of Bayesian persuasion equilibria, which corresponds to the setting when users only know their distribution $\q$ over types (rather than the exact realization of their types) and thus maximize expected utilities.

\section{Budget Unconstrained Setting} \label{sec:misreportingSingle}

We begin the study of the audit game in the setting where there is no limitation on how often the administrator can conduct audits, henceforth referred to as the unbudgeted case. In this setting, we establish the existence of a signaling game equilibrium in our audit game and show that the least favorable signaling game equilibria with respect to the administrator can be efficiently computed using linear programming. The key insight in establishing this result involves showing that the least favorable signaling game equilibria for the administrator are precisely the sender-preferred subgame perfect equilibria of the corresponding Bayesian persuasion game, where instead of knowing their types, users only know the distribution of their type. Our analysis of the equilibria of this audit game also provides upper bounds that hold in any signaling game equilibrium on the expected excess payments made by the administrator and the probability that users misreport their type as a function of the audit parameters $c$ and $k$.

In this section, we present our equilibrium characterization results (Section~\ref{sec:misreportingSingle:main-results}), elucidate the main ingredients in establishing the existence of equilibria (Section~\ref{subsec:misreportingSingle:key-ideas}), and interpret and discuss several implications of our equilibrium characterization results (Section~\ref{subsec:misreportingSingle:discussion}).

\subsection{Main Results}\label{sec:misreportingSingle:main-results}

This section summarizes our results on the characterization of signaling game (and Bayesian persuasion) equilibria (Section~\ref{subsubsec:eq-existence}) and the corresponding bound on the level of misreporting fraud at any signaling game equilibrium in our audit game (Section~\ref{subsubsec:misreporting-prob-excess-payment-bd}). 

\subsubsection{Equilibrium Existence} \label{subsubsec:eq-existence}

We begin by presenting the main result of this section, which establishes the existence of a signaling game equilibrium.

\begin{theorem} [Signaling Game Equilibrium Existence] \label{thm:signaling-game-main}
    For any audit cost $\c>0$ and fine $\k\geq\c$, there exists a linear program $\mathcal{L}_{c,k}$ (defined in Equations~\eqref{eq:obj-lp-bp}-\eqref{eq:no-audit-con-bp}) that is always feasible and whose solutions correspond to signaling game equilibria of the audit game. Furthermore, the signaling game equilibria corresponding to the solutions of $\mathcal{L}_{c,k}$ have the maximum excess payments among all signaling game equilibria. In particular, the optimal value of $\mathcal{L}_{c,k}$ implies a tight upper bound on the excess payments in any signaling game equilibrium. 
\end{theorem}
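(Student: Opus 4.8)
The plan is to construct the linear program $\mathcal{L}_{c,k}$ explicitly, argue that its feasible region captures exactly the Bayesian persuasion equilibria, and then leverage the (claimed earlier) relationship that the least-favorable signaling game equilibria coincide with the Bayesian persuasion equilibria. First I would reduce the problem using Lemma~\ref{lem:underreporting_dominated}: since utility-maximizing users never under-report, we may restrict all signaling strategies so that $f(s) \ge f(m)$ whenever $\ppi(s|m) > 0$, which kills the $[f(s)-f(m)]_+$ terms and lets us replace $[f(s)-f(m)]_+$ by $f(s)-f(m)$ (or $0$) cleanly. Next I would change variables from $\ppi$ to the ``flow'' variables $x_{ms} := q_m \ppi(s|m)$, which live in the simplex-like polytope $x \ge 0$, $\sum_s x_{ms} = q_m$; this linearizes the user's strategy space. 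The administrator's best-response $\sigmaa$ must then be pinned down: given the signal $s$ with posterior proportional to $(x_{ms})_m$, the administrator audits (sets $\sigma(s)=1$) iff the expected gain from auditing, namely $\sum_m x_{ms}\big(-c + \mathbb{I}_{[s\ne m]}(\dots - k)\big) \ge 0$ up to the no-audit baseline $-\sum_m x_{ms} f(s)$ — i.e. there is a linear ``audit-is-beneficial'' inequality in the $x$'s for each signal. The key trick, which I expect the authors use, is an \emph{audit-rule design}: rather than leaving $\sigmaa$ free, choose audit rules so that the administrator is indifferent (or has a canonical tie-break) exactly on the support, making the user's induced payoff a piecewise-linear, hence LP-representable, function. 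This is where the constraints~\eqref{eq:no-audit-con-bp} (the ``no-audit'' constraints) come from: they encode which signals the administrator leaves un-audited in the sender-preferred selection.

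The objective of $\mathcal{L}_{c,k}$ should be to maximize the user's (interim-averaged) utility $\mathcal{U}^R$, or equivalently — since at a Bayesian persuasion equilibrium the sender is indifferent over the auditing decisions on-support — to maximize the resulting excess payment $\mathcal{E}$; I would verify these two objectives are aligned on the feasible region (a short computation showing $\mathcal{U}^R$ and $\mathcal{E}$ differ by a constant plus the fixed truthful-allocation term $\sum_m q_m f(m)$ on the relevant polytope). Feasibility of $\mathcal{L}_{c,k}$ is then immediate: the fully truthful strategy $\ppi(m|m)=1$ (so $x_{mm}=q_m$, $x_{ms}=0$ for $s\ne m$) satisfies every constraint, since with truthful reporting no audit is ever beneficial and all ``no-audit'' constraints hold trivially.

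For the ``maximum excess payments'' and tightness claims: I would show (i) every feasible point of $\mathcal{L}_{c,k}$ yields, via the designed audit rule, a genuine signaling game equilibrium — this requires checking the user's deviation condition in Definition~\ref{def:signalingGameEqOneRecipient}, i.e. that a single type-$m$ user, anticipating the administrator's best response, cannot do better, which follows because the LP already maximizes over all of $\Pi$ jointly and hence in particular over unilateral single-type deviations; and (ii) conversely, every signaling game equilibrium has excess payment at most the LP optimum — this is the direction where I would invoke the relationship, established earlier in the paper's Section~\ref{sec:misreportingSingle}, that the least-favorable (max-fraud) signaling game equilibria are Bayesian persuasion equilibria, so their fraud value is achieved by a feasible LP point, and no equilibrium exceeds it. Combining (i) and (ii) gives that the LP value is a tight upper bound on excess payments across all signaling game equilibria.

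\textbf{Main obstacle.} The crux is step (i)'s converse-style verification together with the audit-rule design: one must exhibit audit rules $\sigmaa$ such that (a) $\sigmaa$ is genuinely a best response to the LP-optimal $\ppi$ (administrator-side optimality), and (b) under that fixed $\sigmaa$, no single type can profitably deviate — and crucially (c) the \emph{worst} such equilibrium's fraud is still captured, not just \emph{some} equilibrium's. The subtlety is that the administrator's best-response correspondence is set-valued exactly at signals where auditing is break-even, and different selections give different user incentives; showing that the sender-preferred selection both (1) is a valid equilibrium selection and (2) upper-bounds fraud over \emph{all} other selections and all other equilibria is the delicate part. I expect this is handled by a ``relaxation'' argument: the LP is a relaxation of the equilibrium conditions (it maximizes user utility subject only to the administrator's no-audit incentive constraints), so its value dominates the fraud of every equilibrium; and it is achieved because the sender-preferred Bayesian persuasion equilibrium is feasible for it and is itself a signaling game equilibrium.
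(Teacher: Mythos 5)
Your plan follows the paper's proof essentially step for step: the linear program $\mathcal{L}_{c,k}$ maximizes user utility (equivalently excess payments, since the two differ by the constant $\sum_m q_m f(m)$ once $\sigmaa \equiv 0$ and under-reporting is ruled out) over the polytope of user strategies whose administrator best response is to never audit; its optimal solutions are exactly the Bayesian persuasion equilibria (Theorem~\ref{thm:main}); these are also signaling game equilibria and are the worst ones (Lemma~\ref{lem:interim_to_exante}); and every signaling game equilibrium is LP-feasible because auditing cannot occur at equilibrium, so the LP value is a tight upper bound on equilibrium excess payments. Feasibility via the truthful strategy is also as you say. However, two steps in your write-up are stated in a way that would not survive being written out.

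First, your step (i) asserts that \emph{every feasible point} of $\mathcal{L}_{c,k}$ yields a signaling game equilibrium; this is false, since a feasible but suboptimal $\ppi$ leaves some type a profitable deviation inside the feasible region itself -- only the optimal solutions correspond to equilibria, and even those are only the least favorable equilibria, not all of them. Second, and more substantively, your justification of the no-deviation condition -- that ``the LP already maximizes over all of $\Pi$ jointly and hence in particular over unilateral single-type deviations'' -- has two gaps. (a) The LP objective is the ex-ante average $\mathcal{U}^R$, whereas Definition~\ref{def:signalingGameEqOneRecipient} requires optimality of the interim utility $\mathcal{U}^R_m$ for each type separately; passing from one to the other needs the observation that, when $\sigmaa \equiv 0$, $\mathcal{U}^R$ is separable across types so that changing $\ppi(\cdot\,|\,m)$ does not affect the terms $\mathcal{U}^R_{m'}$ for $m' \neq m$. (b) LP optimality only excludes deviations that remain inside the no-audit polytope; a deviating type could move to a strategy that triggers audits, about which the LP says nothing. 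The paper closes this with a domination argument (rerouting the mass on an audited signal back to truthful reporting strictly improves the deviator's utility and weakly shrinks the audited set, applied recursively), and without some version of that argument your step (i) does not go through. Your ``relaxation'' argument correctly handles the upper-bound direction (ii), but it cannot substitute for (b) in direction (i).
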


The proof of Theorem~\ref{thm:signaling-game-main} has two main ingredients. First, we establish Lemma \ref{lem:interim_to_exante} which states that the set of Bayesian persuasion equilibria is a subset of signaling game equilibria, and also that among all signaling game equilibria, Bayesian persuasion equilibria are the least favorable for the administrator in the sense that they have the highest excess payments (i.e., misreporting fraud). Second, we show that the linear program $\mathcal{L}_{c,k}$ precisely characterizes all of the Bayesian persuasion equilibria. 

\begin{theorem} [Bayesian Persuasion Equilibrium Characterization] \label{thm:main}
    For any audit cost $\c>0$ and fine $\k\geq\c$, there exists a linear program $\mathcal{L}_{c,k}$ (defined in Equations~\eqref{eq:obj-lp-bp}-\eqref{eq:no-audit-con-bp}) that is always feasible and whose solutions characterize all Bayesian persuasion equilibria of the audit game. More specifically, the administrator's strategy in all Bayesian persuasion equilibria is to not audit, i.e., $(\ppi^*, \sigmaa^*)$ is a Bayesian persuasion equilibria if and only if $\sigmaa^*(\s) = 0$ for all $\s \in \S$, and $\ppi^*$ is a solution to $\mathcal{L}_{c,k}$. 
\end{theorem}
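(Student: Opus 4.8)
The plan is to characterize the Bayesian persuasion equilibria by first showing that the administrator must play $\sigmaa^* \equiv 0$ (never audit) in any such equilibrium, and then showing that, given this, the user's optimization problem reduces to a linear program $\mathcal{L}_{c,k}$. Let me sketch each direction.

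\textbf{Step 1: The administrator never audits in equilibrium.} I would argue that for \emph{any} fixed user strategy $\ppi$, there is a best response of the administrator that places zero probability on auditing any signal. Fix a signal $s$ in the support of $\ppi$. The administrator's posterior over types given $s$ is $\mu(m \mid s) \propto \q_m \ppi(s \mid m)$. Auditing costs $c > 0$ but yields the fine $k$ only on types $m \neq s$, and (by Lemma~\ref{lem:underreporting_dominated}, which rules out under-reporting in any user best response, though here the user is not yet optimizing --- so I need to be careful) the ``savings'' term $[f(s) - f(m)]_+$ may also be relevant. The key observation is that the marginal benefit of setting $\sigmaa(s) = 1$ versus $\sigmaa(s) = 0$ is $\mathbb{E}_{m \sim \mu(\cdot \mid s)}\bigbra{ -c + \mathbb{I}_{[s \neq m]}([f(s)-f(m)]_+ - k) } \le -c < 0$ whenever $k \ge c$ and the bracketed term is nonpositive --- which holds because $[f(s)-f(m)]_+ \le k$ is \emph{not} automatic. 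So the real content is: I must show that in a Bayesian persuasion \emph{equilibrium} the user never sends a signal $s$ with $f(s) - f(m) > k$ for any $m$ in the posterior support, i.e., over-reporting is bounded, OR argue directly at the level of the coupled optimization. The cleanest route: first establish that if $\sigmaa^* \equiv 0$ is \emph{a} best response to the user's equilibrium strategy, then the user's problem is the LP; conversely show any candidate equilibrium with $\sigmaa^*(s) > 0$ for some $s$ cannot arise because the user would then strictly prefer to shift mass to truthful reporting or to a dominated-free signal, contradicting optimality --- a deviation argument on the user side forcing the administrator's audit probabilities to zero.

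\textbf{Step 2: Reduce to the LP and verify feasibility.} Given $\sigmaa^* \equiv 0$, the user's utility \eqref{eqn:user_utility_fn} becomes $\mathcal{U}^R(\ppi, \0) = \sum_{m}\sum_{s} \q_m \ppi(s \mid m) f(s)$, which is linear in $\ppi$. The constraint set is that $\ppi \in \Pi$ (a product of simplices, hence a polytope defined by linear equalities and inequalities) \emph{and} that $\sigmaa^* \equiv 0$ remains a best response, i.e., for every signal $s$ that receives positive aggregate probability, the posterior-weighted audit incentive is nonpositive: roughly $\sum_m \q_m \ppi(s\mid m)\bigpar{-c + \mathbb{I}_{[s\neq m]}([f(s)-f(m)]_+ - k)} \le 0$. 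This is again linear in $\ppi$, so the feasible region is a polytope and the objective is linear --- this is $\mathcal{L}_{c,k}$, with \eqref{eq:obj-lp-bp} the objective and \eqref{eq:no-audit-con-bp} (among others) the no-audit constraint. Feasibility is immediate: the fully truthful strategy $\ppi(s\mid m) = \mathbb{I}_{[s=m]}$ satisfies all constraints, since under truthful reporting the $\mathbb{I}_{[s\neq m]}$ term vanishes and the no-audit incentive is $-c < 0$; hence $\mathcal{L}_{c,k}$ is always feasible.

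\textbf{Step 3: Equivalence.} Finally I would close the loop: $(\ppi^*, \0)$ is a Bayesian persuasion equilibrium iff $\0 \in \argmax_\sigmaa \mathcal{U}^A(\ppi^*, \sigmaa)$ (equivalently, $\ppi^*$ satisfies the no-audit constraints of $\mathcal{L}_{c,k}$) and $\ppi^* \in \argmax \mathcal{U}^R(\ppi, \0)$ subject to $\0$ being an administrator best response --- which is exactly optimality for $\mathcal{L}_{c,k}$ by Step 2. The forward direction uses Step 1 to force $\sigmaa^* \equiv 0$; the reverse direction is the verification just described.

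\textbf{Main obstacle.} The delicate point is Step 1: showing the administrator's equilibrium strategy must be ``never audit.'' The subtlety is that the no-audit best response is not unconditionally valid --- it hinges on over-reporting being self-limiting (no signal $s$ with $f(s) - f(m)$ exceeding $k$ in the relevant posterior support). I expect to handle this by a user-side deviation argument: if the user's equilibrium strategy induced any signal that the administrator would strictly want to audit, the user could profitably deviate (e.g., by reporting truthfully on those types, invoking the logic behind Lemma~\ref{lem:underreporting_dominated} and the structure of $U^R$), contradicting equilibrium. Pinning down this argument cleanly --- and making sure it interacts correctly with the $\min(f(s),f(m))$ payout structure and the $[\cdot]_+$ terms --- is where the real work lies; the LP reduction itself is then routine.
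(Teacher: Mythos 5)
Your overall architecture matches the paper's: force $\sigmaa^*\equiv 0$ in any Bayesian persuasion equilibrium via a user-side deviation, then observe that the remaining optimization is the linear program, with feasibility witnessed by truthful reporting. Steps 2 and 3 are essentially correct and routine, as you say. The problem is Step 1, which is where the entire content of the theorem lives, and which your proposal does not actually carry out.

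Two concrete issues. First, your opening claim in Step 1 --- that for \emph{any} fixed user strategy $\ppi$ there is an administrator best response that never audits --- is false, and the supporting calculation has the wrong sign: the administrator's marginal gain from auditing on signal $s$ is $-c + \sum_{m\neq s}\mathbb{P}(m\mid s)\bigpar{[f(s)-f(m)]_+ + k}$ (the fine is \emph{collected} by the administrator, so it enters with $+k$, not $-k$). This is a threshold rule: if the posterior probability of misreporting on $s$ is large enough, auditing is strictly optimal. So the no-audit property cannot be established signal-by-signal for arbitrary $\ppi$; it is a genuinely equilibrium-level fact. Second, you correctly pivot to the right idea --- if $\sigmaa_{\ppi}(\tilde s)=1$ for some $\tilde s$, the user profitably deviates by rerouting to truthful reporting all mass that types $m\neq\tilde s$ place on $\tilde s$ --- but you explicitly defer its execution as ``where the real work lies.'' That execution is not a formality. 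It requires (i) showing that the administrator's best response to the modified strategy $\widetilde{\ppi}$ satisfies $\sigmaa_{\widetilde{\ppi}}(s)\leq\sigmaa_{\ppi}(s)$ for \emph{every} signal $s$, not just $\tilde s$ (otherwise the deviation could trigger audits elsewhere and lower utility on the untouched mass); and (ii) comparing the rerouted mass's payoff $f(m)$ against $f(\tilde s)+\sigmaa_{\ppi}(\tilde s)(f(m)-f(\tilde s)-k)$, which is only guaranteed to favor truthfulness because $\sigmaa_{\ppi}(\tilde s)=1\geq\frac{f(\tilde s)-f(m)}{k+f(\tilde s)-f(m)}$; this in turn relies on first eliminating under-reporting (the paper's Lemma~\ref{lem:underreporting_dominated}) so that $f(\tilde s)\geq f(m)$ on the support. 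Without these two pieces the deviation argument does not close, so as written the proposal identifies the right strategy but leaves a genuine gap at the theorem's central step.
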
 

\begin{lemma}\label{lem:interim_to_exante}
    Every strategy profile $(\ppi^*, \sigmaa^*)$ that is a Bayesian persuasion equilibrium is also a signaling game equilibrium. Furthermore, the signaling game equilibria with the largest expected excess payments (i.e., misreporting fraud) are precisely the Bayesian persuasion equilibria. 
\end{lemma}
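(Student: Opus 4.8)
The plan is to treat the administrator and the users separately and then exploit the separability of the users' payoffs. The administrator's optimality requirement $\sigmaa' \in \argmax_{\sigmaa}\mathcal{U}^A(\ppi',\sigmaa)$ is identical in Definitions~\ref{def:signalingGameEqOneRecipient} and~\ref{def:sender-preferred-subgame-perfect-equilibria}, so both claims of the lemma are really statements about user incentives. I will use two ingredients: by Theorem~\ref{thm:main}, at every Bayesian persuasion equilibrium $(\ppi^*,\sigmaa^*)$ one has $\sigmaa^*\equiv\0$ and $\ppi^*$ maximizes the ex ante user utility $\mathcal{U}^R$ over the feasible set $F:=\{(\ppi,\sigmaa):\sigmaa\in\argmax_{\sigmaa_0}\mathcal{U}^A(\ppi,\sigmaa_0)\}$, and I write $\mathcal{U}^{R,\star}:=\max_F\mathcal{U}^R$; and by Lemma~\ref{lem:underreporting_dominated}, no user under-reports at equilibrium and a deviating user never gains by under-reporting, so we may assume $f(s)\ge f(m)$ whenever $\ppi(s|m)>0$ for any strategy or deviation under consideration.

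Step 1 (a bookkeeping identity). Expanding \eqref{eqn:user_utility_fn} and Definition~\ref{def:misreporting-fraud} and using $f(s)\ge f(m)$ on supports gives, at any equilibrium $(\ppi,\sigmaa)$,
\[
  \mathcal{E}(\ppi,\sigmaa)=\mathcal{U}^R(\ppi,\sigmaa)-\bar f+k\,\mathbb{P}_{\ppi,\sigmaa}\big(\text{misreport}\wedge\text{audit}\big),\qquad \bar f:=\sum_{m\in\S}q_m f(m).
\]
At a Bayesian persuasion equilibrium the last term vanishes, so $\mathcal{E}(\ppi^*,\sigmaa^*)=\mathcal{U}^{R,\star}-\bar f=:\mathcal{E}^\star$; all Bayesian persuasion equilibria share the same excess payment $\mathcal{E}^\star$.

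Step 2 (Bayesian persuasion equilibrium $\Rightarrow$ signaling game equilibrium). Fix a Bayesian persuasion equilibrium $(\ppi^*,\0)$; the administrator condition holds verbatim. Fix a type $m$ with $q_m>0$, a deviation $\tilde\ppi$ agreeing with $\ppi^*$ off row $m$ (WLOG not under-reporting on row $m$), and an administrator best response $\tilde\sigmaa$. Since $(\tilde\ppi,\tilde\sigmaa)\in F$, $\mathcal{U}^R(\tilde\ppi,\tilde\sigmaa)\le\mathcal{U}^{R,\star}=\mathcal{U}^R(\ppi^*,\0)$; writing each side as the $q$-weighted sum of the per-type utilities $\mathcal{U}^R_{m'}$ and using that for $m'\ne m$ the quantity $\mathcal{U}^R_{m'}(\tilde\ppi,\tilde\sigmaa)$ depends only on the unchanged row $\ppi^*(\cdot|m')$ and on $\tilde\sigmaa$, this rearranges to
\[
  q_m\,\mathcal{U}^R_m(\tilde\ppi,\tilde\sigmaa)\le q_m\,\mathcal{U}^R_m(\ppi^*,\0)+\sum_{m'\ne m}q_{m'}D_{m'},
\]
where $D_{m'}\ge0$ is the loss type $m'$ incurs because $\tilde\sigmaa$ audits (with positive probability) some misreport signal in the support of $\ppi^*(\cdot|m')$. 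It therefore suffices to show $D_{m'}=0$ for every $m'\ne m$, i.e., that the administrator's best response to one type's deviation need not audit any signal to which a non-deviating type over-reports. I would prove this by tracking, for each signal $s$, the coefficient $C_\ppi(s)$ of $\sigmaa(s)$ in $\mathcal{U}^A$ (so the administrator strictly prefers not to audit $s$ when $C_\ppi(s)<0$): because $C_{\ppi^*}(s)\le0$ for all $s$, only signals whose mass type $m$ changed can have $C_{\tilde\ppi}(s)>0$, and the sign of that change forces such an $s$ to be over-reported by type $m$ itself with surplus $f(s)-f(m)>c+k$ — so auditing $s$ fines type $m$ on that very mass, rendering the deviation unprofitable, and we may therefore restrict to deviations for which no over-report signal of the other types becomes audited, whence $D_{m'}=0$ and $\mathcal{U}^R_m(\tilde\ppi,\tilde\sigmaa)\le\mathcal{U}^R_m(\ppi^*,\0)$.

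Step 3 (maximal-excess signaling game equilibria $=$ Bayesian persuasion equilibria). Let $(\ppi',\sigmaa')$ be any signaling game equilibrium; its administrator condition gives $(\ppi',\sigmaa')\in F$, hence $\mathcal{U}^R(\ppi',\sigmaa')\le\mathcal{U}^{R,\star}$. To kill the extra term in the Step~1 identity I would first establish the structural fact that at any signaling game equilibrium the administrator effectively does not audit, $\mathbb{P}_{\ppi',\sigmaa'}(\text{misreport}\wedge\text{audit})=0$, by the same perturbation idea as in Step~2: if a type over-reports to a signal $s$ with $\sigmaa'(s)>0$, that type can perturb the probability it puts on $s$ (up or down, according to the sign of its audit surplus on $s$) so that the administrator's best response to the perturbed profile stops auditing $s$, strictly increasing that type's utility and contradicting equilibrium. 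Given this, Step~1 yields $\mathcal{E}(\ppi',\sigmaa')=\mathcal{U}^R(\ppi',\sigmaa')-\bar f\le\mathcal{U}^{R,\star}-\bar f=\mathcal{E}^\star$, with equality iff $\mathcal{U}^R(\ppi',\sigmaa')=\mathcal{U}^{R,\star}$, i.e., iff $\ppi'$ maximizes $\mathcal{U}^R$ over $F$ — which, with the administrator condition already in force, is exactly the definition of a Bayesian persuasion equilibrium; together with Step~2 (each Bayesian persuasion equilibrium is a signaling game equilibrium with $\mathcal{E}=\mathcal{E}^\star$) this gives the lemma. The main obstacle throughout is the control of the administrator's best response under a unilateral deviation or perturbation — that the administrator need not audit signals it was not already marginal on, and that any auditing temptation created by the deviating type is self-defeating; the care lies in the partial-audit regime $\sigmaa(s)\in(0,1)$ and in small-surplus over-reports where a type free-rides on other types' larger misreports to the same signal, which call for a brief case analysis on the signs of $C_\ppi(s)$ and of $f(s)-f(m)-c-k$.
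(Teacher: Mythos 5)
Your proposal is correct and follows essentially the same route as the paper: Step~1 is the paper's Observation~\ref{obs:user_util_excess_payments} (augmented with the audit term), the reduction in Steps~2--3 to deviations and equilibria under which the administrator's best response never audits is exactly the paper's Observation~\ref{obs:sg-no-audit-optimality} applied recursively, and the separability of $\mathcal{U}^R$ across types under a no-audit response plus LP-feasibility of any signaling game equilibrium is precisely how the paper closes both directions. One small slip: the intermediate claim that a newly audited signal must carry surplus $f(s)-f(m)>c+k$ is neither true nor needed --- all your argument requires is that type $m$ has positive mass on the audited signal and that the fined payoff $f(m)-k$ is dominated by truth-telling, which is what the paper uses.
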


See Appendix~\ref{pf:thm:main} for a proof of Theorem~\ref{thm:main} and Appendix~\ref{pf:lem:interim_to_exante} for a proof of Lemma~\ref{lem:interim_to_exante}. These two results are used to prove Theorem~\ref{thm:signaling-game-main} in Appendix~\ref{pf:thm:signaling-game-main}.

The remainder of this section will present bounds on the misreporting fraud and excess payments at any Bayesian persuasion equilibrium (and, by Lemma~\ref{lem:interim_to_exante}, at any signaling game equilibrium), describe the ideas used to establish Theorem~\ref{thm:main}, and discuss its implications.

\subsubsection{Misreporting Probability and Excess Payment Bounds} \label{subsubsec:misreporting-prob-excess-payment-bd}

One of the purposes of the audit mechanism is to deter misreporting fraud, i.e., keep excess payments low. The following corollaries give bounds on excess payments and misreporting probabilities that hold in both signaling game equilibria and Bayesian persuasion equilibria. 

\begin{corollary}[Misreporting Probability Bound]\label{cor:misrepotingProbBound}
Let $(\ppi^*, \sigmaa^*)$ be a signaling game equilibrium for the audit game. For any signal $\s$ and type $\m$ where $\s \neq \m$, it follows that $\ppi^*(\s|\m) \leq \min \{\frac{q_{\s} \c}{q_{\m}(\k-\c+\f(\s)-\f(\m))}, 1 \}$. Because every Bayesian persuasion equilibrium is also a signaling game equilibrium, these inequalities apply to Bayesian persuasion equilibria as well. 
\end{corollary}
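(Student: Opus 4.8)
The plan is to combine the administrator's best-response condition at the relevant signal with a single user deviation. Fix a signaling game equilibrium $(\ppi^*,\sigmaa^*)$, a type $\m\in\S$, and a signal $\s\neq\m$. The bound $\ppi^*(\s|\m)\le 1$ is automatic, so I only need to establish the other term in the minimum, and I may assume $\ppi^*(\s|\m)>0$. By Lemma~\ref{lem:underreporting_dominated} users never under-report in equilibrium, so $\ppi^*(\s|\m)>0$ forces $\f(\s)\ge\f(\m)$; in particular $\bigbra{\f(\s)-\f(m')}_+=\f(\s)-\f(m')$ wherever this will matter, and the denominator $\k-\c+\f(\s)-\f(\m)$ is nonnegative (with the bound read as vacuous in the degenerate case where it equals zero).

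First I would record the structure of the administrator's best response. For a fixed sender strategy $\ppi^*$, reading the payoffs off~\eqref{eq:admin-payoff} shows that $\mathcal{U}^A(\ppi^*,\sigmaa)$ is affine and separable in the coordinates $\sigmaa(s)$: up to an additive constant it equals $\sum_{s\in\S}\sigmaa(s)\,D_s$, where the expected audit gain at $s$ is $D_s=\sum_{m'\in\S}q_{m'}\ppi^*(s|m')\,\delta_{s,m'}$ with $\delta_{s,s}=-\c$ and, for $m'\neq s$, $\delta_{s,m'}=\k-\c+\bigbra{\f(s)-\f(m')}_+\ge 0$ (using $\k\ge\c$). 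Hence any best-responding administrator must set $\sigmaa^*(s)=1$ whenever $D_s>0$.

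The crux is to show $\sigmaa^*(\s)<1$, which by the previous paragraph forces $D_\s\le 0$. Suppose instead $\sigmaa^*(\s)=1$, and let the type-$\m$ user deviate by moving all of its mass $\ppi^*(\s|\m)$ from $\s$ onto the truthful report $\m$. Truthful reporting pays $\f(\m)$ whether or not an audit occurs, whereas sending $\s$ under a sure audit pays $\min(\f(\s),\f(\m))-\k=\f(\m)-\k<\f(\m)$, so the deviation raises the type-$\m$ user's expected payoff by $\ppi^*(\s|\m)\,\k>0$. This holds no matter how the administrator re-optimizes: the shift changes the administrator's posterior only at $\s$ and at $\m$; at $\m$ the user is truthful, so its payoff is $\f(\m)$ regardless of the audit probability there; at $\s$ the deviating type reports nothing; and at every other signal the administrator's incentives, hence some best response, are unchanged. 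This contradicts the equilibrium condition of Definition~\ref{def:signalingGameEqOneRecipient}. I expect this re-optimization bookkeeping to be the only genuinely delicate point, since it is where the signaling-game structure (the administrator re-optimizing against the deviation, rather than the equilibrium response being fixed) must be handled.

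With $D_\s\le 0$ in hand the estimate falls out. The inequality $-\c\,q_{\s}\ppi^*(\s|\s)+\sum_{m'\neq\s}q_{m'}\ppi^*(\s|m')\,\delta_{\s,m'}\le 0$ together with $\ppi^*(\s|\s)\le 1$ gives $\sum_{m'\neq\s}q_{m'}\ppi^*(\s|m')\,\delta_{\s,m'}\le \c\,q_{\s}$; since every summand on the left is nonnegative, the single $m'=\m$ term obeys $q_{\m}\ppi^*(\s|\m)\bigpar{\k-\c+\f(\s)-\f(\m)}\le\c\,q_{\s}$, which rearranges to $\ppi^*(\s|\m)\le \frac{q_{\s}\c}{q_{\m}(\k-\c+\f(\s)-\f(\m))}$. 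Intersecting with the trivial bound $\ppi^*(\s|\m)\le 1$ yields the stated minimum, and the last sentence of the corollary follows because every Bayesian persuasion equilibrium is a signaling game equilibrium by Lemma~\ref{lem:interim_to_exante}.
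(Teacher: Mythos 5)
Your proof is correct and follows essentially the same route as the paper: the paper's Appendix argument likewise derives the bound by noting that any signaling game equilibrium must satisfy the no-audit condition~\eqref{eqn:p2_best_resp} at every signal (its Observation~\ref{obs:sg-no-audit-optimality}, whose proof is exactly your move-mass-to-truthful deviation) and then drops all but the $m'=\m$ term after bounding $\ppi^*(\s|\s)\le 1$. The only differences are cosmetic: you re-derive that observation inline rather than citing it, and you explicitly invoke Lemma~\ref{lem:underreporting_dominated} to justify replacing $\bigbra{\f(\s)-\f(\m)}_+$ by $\f(\s)-\f(\m)$, a point the paper's write-up glosses over.
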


\begin{corollary}[Excess Payments Bound]\label{cor:equilibrium_suboptimality}
    At any Bayesian persuasion equilibrium $(\ppi^*, \sigmaa^*)$, we have $\mathcal{E}(\ppi^*, \sigmaa^*) \leq \frac{c \Delta f_{ \max} }{k + \Delta f_{ \max} } $. By Lemma~\ref{lem:interim_to_exante}, this bound applies to all signaling game equilibria as well. 
\end{corollary}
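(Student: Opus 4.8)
The plan is to use Theorem~\ref{thm:main} to turn the statement into a bound on a quantity constrained only by the per-signal ``no-audit'' optimality conditions, and then balance those conditions against the trivial credit-gap bound $\f(\s)-\f(\m)\le\Delta f_{\max}$ using a single, carefully chosen convex combination.

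\textbf{Step 1: reduction.} By Theorem~\ref{thm:main}, at any Bayesian persuasion equilibrium $(\ppi^*,\sigmaa^*)$ we have $\sigmaa^*(\s)=0$ for every $\s\in\S$, so $U^R(a(\s),\m)=\f(\s)$ almost surely and Definition~\ref{def:misreporting-fraud} gives $\mathcal{E}(\ppi^*,\sigmaa^*)=\sum_{\s\in\S}W_\s$ with $W_\s:=\sum_{\m\neq\s}\q_\m\ppi^*(\s|\m)\,[\f(\s)-\f(\m)]_+$. Since $\sigmaa^*\equiv 0$ must be a best response of the administrator to $\ppi^*$ and $\mathcal{U}^A$ is affine in each coordinate $\sigmaa(\s)$ separately, the coefficient of $\sigmaa(\s)$ must be nonpositive for every $\s$; reading that coefficient off the payoffs in \eqref{eq:admin-payoff} (the marginal utility from auditing a signal $\s$: gain $\k-\c+[\f(\s)-\f(\m)]_+$ against a misreporter, loss $\c$ against a truthful report) yields, for every $\s$,
$$W_\s+(\k-\c)\,R_\s\ \le\ \c\,p_\s,$$
where $R_\s:=\sum_{\m\neq\s}\q_\m\ppi^*(\s|\m)$ is the total misreported mass at $\s$ and $p_\s:=\q_\s\ppi^*(\s|\s)$ is the truthful mass at $\s$, so that $p_\s+R_\s=Q_\s:=\sum_{\m}\q_\m\ppi^*(\s|\m)$ and $\sum_\s Q_\s=\sum_\m\q_\m=1$.

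\textbf{Step 2: the combination.} Let $R_\s^{<}:=\sum_{\m:\f(\m)<\f(\s)}\q_\m\ppi^*(\s|\m)\le R_\s$, which carries exactly the mass contributing to $W_\s$. Two bounds hold: (i) $W_\s\le\Delta f_{\max}\,R_\s^{<}$, since every credit gap is at most $\Delta f_{\max}$; and (ii) $W_\s\le\c\,p_\s-(\k-\c)R_\s^{<}$, from Step~1 together with $R_\s\ge R_\s^{<}$ and $\k\ge\c$. Averaging (i) with weight $\mu:=\k/(\k+\Delta f_{\max})$ and (ii) with weight $1-\mu=\Delta f_{\max}/(\k+\Delta f_{\max})$, the coefficient of $R_\s^{<}$ collapses to $\mu\Delta f_{\max}-(1-\mu)(\k-\c)=(1-\mu)\c$ by the choice of $\mu$, giving $W_\s\le(1-\mu)\c\,(p_\s+R_\s^{<})\le\frac{\c\,\Delta f_{\max}}{\k+\Delta f_{\max}}\,Q_\s$. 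Summing over $\s$ and using $\sum_\s Q_\s=1$ yields $\mathcal{E}(\ppi^*,\sigmaa^*)\le\frac{\c\,\Delta f_{\max}}{\k+\Delta f_{\max}}$. Finally, Lemma~\ref{lem:interim_to_exante} extends the bound to every signaling game equilibrium, since those have excess payments no larger than at Bayesian persuasion equilibria.

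\textbf{Main obstacle and a sanity check.} The only nonroutine step is recognizing that the per-signal no-audit constraint and the trivial gap bound must be blended with exactly the weight $\mu=\k/(\k+\Delta f_{\max})$, which is precisely what makes the per-signal bound proportional to the signal mass $Q_\s$, whose total is $1$; cruder arguments (applying Corollary~\ref{cor:misrepotingProbBound} pair-by-pair over $(\s,\m)$, or bounding each $W_\s$ by (i) or (ii) alone) lose a multiplicative factor that grows with $|\S|$. It is worth verifying on the two-type instance ($\f=0$ for the low type, $\f=\Delta f_{\max}$ for the high type, prior $\q_{\text{low}}=\c/(\k+\Delta f_{\max})$, low type misreporting as ``high'' with probability one) that the bound is attained with equality, which also shows the constants are sharp.
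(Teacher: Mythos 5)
Your proof is correct and rests on the same two ingredients as the paper's own argument: the per-signal no-audit best-response condition \eqref{eqn:p2_best_resp} that any Bayesian persuasion equilibrium must satisfy, and the trivial gap bound $\f(\s)-\f(\m)\le \Delta f_{\max}$, balanced so that the misreporting mass cancels. The only difference is bookkeeping: the paper aggregates both facts into two scalar inequalities in $P=\mathbb{P}_{\ppi}(\s\neq\m)$ and evaluates $\max_{P}\min\bigpar{\Delta f_{\max}P,\ \c-\k P}$, whereas you take the convex combination signal-by-signal with weight $\k/(\k+\Delta f_{\max})$ and then sum --- the same optimization in dual form.
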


See Appendix~\ref{apdx:corPf-misreportingBound} for a proof of Corollary~\ref{cor:misrepotingProbBound} and Appendix~\ref{apdx:corPf-equilibrium_suboptimality} for a proof of Corollary~\ref{cor:equilibrium_suboptimality}.

Both corollaries provide insights into the influence of an administrator’s choice of fine $k$ and audit cost $c$ on users’ misreporting probability and excess payments, which decrease as $c$ reduces or as $k$ increases. Furthermore, note that the bounds on both misreporting probability and excess payments converge to zero as $c \rightarrow 0$ or as $k \rightarrow \infty$. Such behavior makes sense as a low audit cost will make it easier for the administrator to audit, and a high fine $k$ will deter users from misreporting. While choosing an exceedingly large fine $\k$ can almost completely deter misreporting, in most practical settings, the fine for violations can typically not be arbitrarily high, e.g., traffic light violations cost about \$100 in California. Thus, Corollaries~\ref{cor:misrepotingProbBound} and~\ref{cor:equilibrium_suboptimality} provide insights regarding the choice of a reasonable fine $k$, appropriate for the specific application, while ensuring that the misreporting probability and excess payments are within desirable limits. Further, Corollaries~\ref{cor:misrepotingProbBound} and~\ref{cor:equilibrium_suboptimality} suggest that while the administrator may have several goals when running a benefits program, the profit-maximizing objective in Equation~\eqref{eq:admin-payoff} has merits. In particular, it helps keep misreporting fraud and excess payments to users under check and ensures that the administrator approximately delivers the appropriate benefits to the right groups of users for any credit allocation function $f$.

\subsection{Key Ideas for Establishing Theorem~\ref{thm:main}} \label{subsec:misreportingSingle:key-ideas}

Theorem~\ref{thm:main} is based on two key ingredients. First, we show that the best response of the administrator to a user strategy is always a linear threshold rule. Second, we show that the user's strategy must avoid administrator audits for it to be optimal: any user strategy $\ppi$ for which the administrator's best response $\sigmaa_{\ppi}$ is not identically zero (meaning there exists some signal $s$ for which $\sigmaa_{\pi}(s) > 0$) cannot be optimal. The suboptimality of $\ppi$ means that $(\ppi, \sigmaa_{\ppi})$ is not an equilibrium because the user can deviate to achieve a better utility. 

Using the second ingredient, when searching for Bayesian persuasion equilibria it is sufficient to limit the search of user strategies to those for which the administrator's best response is identically zero. Due to the first ingredient, the administrator's audit decisions are based on a collection of linear thresholds, i.e., the administrator's best response is identically zero if and only if the user strategy satisfies a collection of linear inequalities. Thus the user strategies of Bayesian persuasion equilibria are precisely the solutions of a linear program $\mathcal{L}_{c,k}$ whose objective is $\mathcal{U}^R$ and whose constraints are the linear inequalities that ensure that the administrator's best response will be identically zero. Furthermore, the optimal objective value of the linear program $\mathcal{L}_{c,k}$ is the expected payment that the administrator gives in equilibrium, which is equivalent, up to an additive constant, to the excess payments that occur at equilibrium (see Observation \ref{obs:user_util_excess_payments} for more details).

We now present the above-mentioned key ingredients to establish Theorem~\ref{thm:main}.

\paragraph{Ingredient (i):} To see why the administrator's best response is a linear threshold rule, note that $\mathcal{U}^A$ is affine in $\sigmaa(s)$ for every $s \in \S$, where the coefficient of $\sigmaa(s)$ is the expected utility gain from conducting an audit when observing the signal $s$. This administrator's expected utility gain is
\begin{align*}
    -c + \sum_{m \neq s} \mathbb{P}(m | s) \bigpar{ \max \bigbra{ f(s) - f(m)}_+ + k}.
\end{align*}
Indeed, the first term is the cost required to conduct an audit, and the second term is the increase in payoff that is expected from conducting an audit, where the expectation is taken over the administrator's posterior distribution of $m$ given that it received signal $s$. By re-writing the above expression using Bayes' Rule, the administrator's best response to a user strategy $\ppi$, which we denote $\sigmaa_{\ppi}$, is given by:

\begin{equation} \label{eqn:p2_best_resp}
  \sigmaa_{\ppi} (\s)  = 
    \begin{cases}
       0 &\text{if $\sum_{\m \neq \s} \ppi(\s|\m) \q_{\m} (\max(\f(\s) - \f(\m), 0) + k) \leq c \sum_m \ppi(\s|\m) \q_m$} \\ 
      1  & \text{otherwise.}
    \end{cases}
\end{equation}

Notice that when the inequality in Equation~\eqref{eqn:p2_best_resp} is met with equality, any $\sigmaa_{\ppi} (\s) \in [0, 1]$ is a best-response for the administrator, i.e., the administrator is indifferent between auditing and not auditing. However, as we demonstrate through Figure~\ref{fig:ag_two_players} in Appendix~\ref{apdx:two_type_fig}, the user's utility $\mathcal{U}^R$ is maximized if and only if $\sigmaa_{\ppi} (\s) = 0$ when the inequality in Equation~\eqref{eqn:p2_best_resp} is met with equality. In other words, our audit game has an equilibrium if and only if the administrator plays $\sigmaa_{\ppi} (\s) = 0$ when the inequality in Equation~\eqref{eqn:p2_best_resp} is met with equality.

\paragraph{Ingredient (ii):} Given Equation~\eqref{eqn:p2_best_resp}, we show that if a Bayesian persuasion equilibrium exists, it must correspond to a no audit strategy of the administrator.
\begin{lemma} [Sub-Optimality of Auditing] \label{lem:signalingeqStrategy}
    Suppose that the best-response $\sigmaa_{\ppi}$ of the administrator to a strategy $\ppi$ of the user is such that there exists a signal $\Tilde{\s}$ with $\sigmaa_{\ppi}(\Tilde{s}) = 1$. Then, $(\ppi, \sigmaa_{\ppi})$ is not a Bayesian persuasion equilibrium. 
\end{lemma}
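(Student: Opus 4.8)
The plan is to prove the contrapositive by exhibiting a strictly profitable deviation for the user (sender): if the best response $\sigmaa_{\ppi}$ audits some signal $\tilde{\s}$ with probability one, I will construct a strategy $\ppi'$ such that, against the induced best response $\sigmaa_{\ppi'}$ (given by Equation~\eqref{eqn:p2_best_resp}), the user strictly improves, i.e. $\mathcal{U}^R(\ppi', \sigmaa_{\ppi'}) > \mathcal{U}^R(\ppi, \sigmaa_{\ppi})$. Since $\sigmaa_{\ppi'}$ is a valid best response to $\ppi'$, the pair $(\ppi', \sigmaa_{\ppi'})$ is feasible for the user's constrained maximization in Definition~\ref{def:sender-preferred-subgame-perfect-equilibria}, so $\ppi$ cannot be a maximizer and $(\ppi, \sigmaa_{\ppi})$ is not a Bayesian persuasion equilibrium. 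The deviation I would use is \emph{truthful rerouting}: for every type $m \neq \tilde{\s}$ move all the probability $\ppi(\tilde{\s}|m)$ that $m$ puts on the audited signal onto the truthful self-report $m$ (i.e. set $\ppi'(\tilde{\s}|m)=0$ and $\ppi'(m|m)=\ppi(m|m)+\ppi(\tilde{\s}|m)$), leaving the row of type $\tilde{\s}$ and every other entry untouched; one checks immediately that $\ppi' \in \Pi$.

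Two structural facts drive the comparison. First, $\sigmaa_{\ppi}(\tilde{\s})=1$ means the audit-avoidance inequality in Equation~\eqref{eqn:p2_best_resp} fails at $\tilde{\s}$; its left-hand side is a sum over $m \neq \tilde{\s}$ of nonnegative terms $\ppi(\tilde{\s}|m)\,q_m\bigpar{(f(\tilde{\s})-f(m))_+ + k}$ and must exceed the nonnegative right-hand side, hence be strictly positive, which forces $\ppi(\tilde{\s}|m)\,q_m>0$ for at least one $m\neq\tilde{\s}$. So the rerouted set $R := \{m \neq \tilde{\s} : \ppi(\tilde{\s}|m)\,q_m>0\}$ is nonempty. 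Second, I claim $\sigmaa_{\ppi'}(\s) \le \sigmaa_{\ppi}(\s)$ for \emph{every} signal $\s$: for $\s \notin R \cup \{\tilde{\s}\}$ the administrator's posterior at $\s$ is literally unchanged (only column $\tilde{\s}$ and the diagonal entries of the rows in $R$ were edited), so the threshold decision is unchanged; for $\s = \tilde{\s}$ the only remaining senders are truthful type-$\tilde{\s}$ users (or none), so the left side of Equation~\eqref{eqn:p2_best_resp} drops to $0$ and $\sigmaa_{\ppi'}(\tilde{\s})=0$; and for $\s = m^\circ \in R$, rerouting adds truthful mass $\ppi(\tilde{\s}|m^\circ)\,q_{m^\circ}$ to $\sum_m \ppi(m^\circ|m)\,q_m$, so the right-hand side of Equation~\eqref{eqn:p2_best_resp} weakly increases while the left-hand side is unaffected — the inequality only becomes easier to meet, so the audit at $m^\circ$ cannot turn on.

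With these in hand, write $\mathcal{U}^R$ as a sum over (type, signal) pairs of $q_m\,\ppi(\s|m)\,u(\s,m;\sigmaa)$, where $u(\s,m;\sigmaa) := f(\s) - \sigmaa(\s)\,\mathbb{I}_{[\s\neq m]}\bigpar{(f(\s)-f(m))_+ + k}$ has coefficient of $\sigmaa(\s)$ that is $\le 0$, hence is non-increasing in $\sigmaa(\s)$. Every (type, signal) pair whose mass is untouched by the rerouting contributes at least as much under $\ppi'$ as under $\ppi$, since $\sigmaa_{\ppi'} \le \sigmaa_{\ppi}$ pointwise; a truthful self-report pays $f(m)$ irrespective of $\sigmaa$, so the extra diagonal mass in rows $m \in R$ is valued at $f(m)$ per unit, whereas the unit of mass removed from $\tilde{\s}$ was a detected misreport worth only $\min(f(\tilde{\s}),f(m)) - k$. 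Therefore $\mathcal{U}^R(\ppi', \sigmaa_{\ppi'}) - \mathcal{U}^R(\ppi, \sigmaa_{\ppi}) \ge \sum_{m \in R} q_m\,\ppi(\tilde{\s}|m)\bigpar{f(m) - \min(f(\tilde{\s}),f(m)) + k}$, and since $f(m) \ge \min(f(\tilde{\s}),f(m))$ and $k \ge c > 0$ this is at least $k\sum_{m\in R} q_m\,\ppi(\tilde{\s}|m) > 0$, completing the deviation argument.

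The step I expect to be the crux is the second structural fact — ruling out a \emph{cascade} in which shifting the misreporting mass onto a signal $m^\circ$ provokes the administrator to start auditing $m^\circ$. Truthful rerouting (rather than, say, pooling onto an arbitrary unaudited signal) is exactly the right choice because it increases only the \emph{denominator} of the threshold test at $m^\circ$ (the total mass arriving at $m^\circ$) while leaving the fine-weighted numerator untouched, so no signal's audit-avoidance inequality is tightened anywhere; this monotonicity of Equation~\eqref{eqn:p2_best_resp} under the deviation is what makes the per-term comparison go through. Minor bookkeeping is also needed for off-path signals after rerouting (where Equation~\eqref{eqn:p2_best_resp} reads $0$ under the convention that the $0 \le 0$ inequality holds) and for types with $q_m = 0$, which carry no mass and can be dropped without loss.
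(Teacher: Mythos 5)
Your proposal is correct and follows essentially the same route as the paper's proof: the same truthful-rerouting construction (zeroing the off-diagonal mass in the audited column $\tilde{\s}$ and adding it to the diagonal), the same monotonicity argument that the administrator's best response can only weakly decrease at every signal, and the same per-unit payoff comparison showing each rerouted unit gains at least $k>0$. The only cosmetic difference is that by working directly with the payoff $\min(f(\s),f(\m))-k$ and the nonpositive coefficient $-\bigpar{\bigbra{f(\s)-f(\m)}_+ + k}$ of $\sigmaa(\s)$, you sidestep the preliminary appeal to Lemma~\ref{lem:underreporting_dominated} that the paper uses to rule out under-reporting on the support.
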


The proof of Lemma~\ref{lem:signalingeqStrategy} relies on showing the following result, which establishes that users will never under-report their type (i.e., choose $\s$ so that $\f(s) < \f(\m)$). 
\begin{lemma}\label{lem:underreporting_dominated}
    For any user strategy $\ppi$, if there exists $\s,\m$ for which $\f(\s) < \f(\m)$ and $\ppi(\s | \m) > 0$, then there exists another strategy $\widetilde{\ppi}$ with utility larger than $\ppi$. 
\end{lemma}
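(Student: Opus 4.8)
I would argue by an explicit ``reroute-to-the-truth'' construction combined with a two-step utility comparison, reading ``utility'' as the Bayesian-persuasion objective $\mathcal{U}^R$ evaluated with the administrator playing its threshold best response, with indifference resolved to \emph{no audit}, as in Equation~\eqref{eqn:p2_best_resp}. Given $\ppi$, define $\widetilde\ppi$ by moving, within each type-$\m$ row, all probability mass placed on under-reporting signals onto the truthful signal $\m$:
\[
\widetilde\ppi(\s\mid\m)=\begin{cases} 0, & \f(\s)<\f(\m),\\ \ppi(\m\mid\m)+\sum_{\s':\,\f(\s')<\f(\m)}\ppi(\s'\mid\m), & \s=\m,\\ \ppi(\s\mid\m), & \text{otherwise.}\end{cases}
\]
Mass only moves within a row, so $\widetilde\ppi\in\Pi$, and $\widetilde\ppi\neq\ppi$ since it zeroes the hypothesized entry $\ppi(\s\mid\m)>0$. (Restricting $\S$ to the support of $\q$ without loss of generality, the under-reporting type has $\q_\m>0$, which is what makes the improvement below strict.)

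\textbf{Step 1: truthfulness helps against any fixed administrator strategy.} For any $\sigmaa$ and any under-reporting type $\m$, moving one unit of mass from $(\s'\mid\m)$ with $\f(\s')<\f(\m)$ onto $(\m\mid\m)$ changes $\mathcal{U}^R_\m$ by $\f(\m)-\bigl(\f(\s')-\sigmaa(\s')\k\bigr)=\bigl(\f(\m)-\f(\s')\bigr)+\sigmaa(\s')\k>0$, using $[\f(\s')-\f(\m)]_+=0$ and $\sigmaa(\s'),\k\ge 0$; the rows of all other types are untouched. Averaging over $\m$ yields $\mathcal{U}^R(\widetilde\ppi,\sigmaa)>\mathcal{U}^R(\ppi,\sigmaa)$ for \emph{every} $\sigmaa$, in particular for $\sigmaa=\sigmaa_\ppi$.

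\textbf{Step 2: rerouting never makes the auditor more aggressive.} I would next show $\sigmaa_{\widetilde\ppi}(\s)\le\sigmaa_\ppi(\s)$ for every $\s$. Only signals that shed under-reporting mass or absorb rerouted truthful mass can change; at such a signal $\s$, each unit of removed mass is off-diagonal with $\f$-value strictly above $\f(\s)$ — hence contributes weight exactly $\k$ to the left side of the test in Equation~\eqref{eqn:p2_best_resp} and weight $1$ to $\sum_\m\ppi(\s\mid\m)\q_\m$ on the right — while each unit of added mass is on-diagonal (type $\s$) and contributes $0$ to the left side. Since $\k\ge\c$, removing a unit of mass lowers the left side by $\k$ but the right side by only $\c$, and the added truthful mass leaves the left side unchanged while only increasing the right side; so the ``no-audit'' slack $\c\sum_\m\ppi(\s\mid\m)\q_\m - \bigl(\text{left side of }\eqref{eqn:p2_best_resp}\bigr)$ cannot turn negative, and no-audit at $\s$ is preserved. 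This gives $\sigmaa_{\widetilde\ppi}\le\sigmaa_\ppi$ pointwise. Because the coefficient of each $\sigmaa(\s)$ in $\mathcal{U}^R(\widetilde\ppi,\cdot)$ equals $-\sum_\m\q_\m\widetilde\ppi(\s\mid\m)\mathbb{I}_{[\s\neq\m]}\bigl([\f(\s)-\f(\m)]_++\k\bigr)\le 0$, the map $\sigmaa\mapsto\mathcal{U}^R(\widetilde\ppi,\sigmaa)$ is coordinatewise nonincreasing, so $\mathcal{U}^R(\widetilde\ppi,\sigmaa_{\widetilde\ppi})\ge\mathcal{U}^R(\widetilde\ppi,\sigmaa_\ppi)$. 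Chaining with Step 1 gives $\mathcal{U}^R(\widetilde\ppi,\sigmaa_{\widetilde\ppi})>\mathcal{U}^R(\ppi,\sigmaa_\ppi)$, as claimed.

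\textbf{Anticipated main obstacle.} Step 1 and the $\sigmaa$-monotonicity of $\mathcal{U}^R$ are routine; the delicate part is Step 2 — checking that the threshold test of Equation~\eqref{eqn:p2_best_resp} never flips from ``no audit'' to ``audit'' at any signal, including a signal that simultaneously loses under-reporting mass and gains rerouted truthful mass, and a signal whose posterior support collapses entirely (so that the test reads $0\le 0$). This is exactly where the standing hypothesis $\k\ge\c$ is used, and keeping the numerator/denominator bookkeeping straight so that the ``no-audit'' slack stays nonnegative is the crux. If one only needs the literal single-pair statement, it follows \emph{a fortiori} by rerouting just the one entry $(\s\mid\m)$.
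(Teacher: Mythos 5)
Your proposal is correct and follows essentially the same route as the paper's own proof: reroute under-reporting mass onto the truthful diagonal, verify via the $\k\ge\c$ bookkeeping in Equation~\eqref{eqn:p2_best_resp} that the administrator's best response can only weakly decrease at every signal, and conclude by monotonicity of $\mathcal{U}^R$ in $\sigmaa$ together with the direct per-unit gain $(\f(\m)-\f(\s'))+\sigmaa(\s')\k>0$. The only differences are cosmetic — you reroute all under-reporting entries at once rather than a single pair, and you factor the final comparison into two clean monotonicity steps where the paper runs one chained inequality — and your handling of the collapsed-support and simultaneous shed/absorb cases matches the paper's case analysis.
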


Given Lemma~\ref{lem:underreporting_dominated}, the key idea in the proof of Lemma~\ref{lem:signalingeqStrategy} is a procedure that takes any such $\ppi$, where there exists a signal $\Tilde{s}$ with $\sigmaa_{\ppi}(\Tilde{s}) = 1$, and produces another feasible $\Tilde{\ppi}$ with strictly greater utility. This key idea is enabled by our payoff structure, and, in particular, the choice of a profit maximizing administrator. 

\paragraph{Linear Programming Characterization of Bayesian Persuasion Equilibria:}

Leveraging Lemma~\ref{lem:signalingeqStrategy}, the user strategies in Bayesian persuasion equilibria are strategies whose utility is maximum among all strategies for which the administrator's best response is identically zero. In other words, user strategies at Bayesian persuasion equilibria can be computed by the following linear program $\mathcal{L}_{c,k}$
\begin{align}
    \underset{\ppi \in \Pi}{\text{maximize}} & \sum_{\m \in \S} \q_{\m} \sum_{\s \in \S} \ppi(\s|\m) \f(\s) \label{eq:obj-lp-bp}  \\ 
    \text{subject to }
    & \sum_{\m \in \S \backslash \{\s\}}  \ppi(\s|\m) \q_{\m} \bigpar{\k + \bigbra{\f(\s) - \f(\m)}_+} \leq \c \sum_{\m \in \S} \ppi(\s|\m) \q_{\m}, \forall \s \in \S, \label{eq:no-audit-con-bp}
\end{align}
where Objective~\eqref{eq:obj-lp-bp} is the user's utility $\mathcal{U}^R$ when the administrator plays a no-audit strategy, and Constraint~\eqref{eq:no-audit-con-bp} represents the condition in Equation~\eqref{eqn:p2_best_resp} required for the administrator to not audit. 

\begin{observation}[Linear Program and Best-Response in Bayesian Persuasion Equilibrium] \label{obs:lp-eq-result}
    An optimal solution $\ppi^*$ to the linear Program~\eqref{eq:obj-lp-bp}-\eqref{eq:no-audit-con-bp} is a best response to the strategy $\sigmaa^*=\0$. 
\end{observation}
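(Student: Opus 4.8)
The plan is to unwind the two ``dictionary'' identities between the linear program $\mathcal{L}_{c,k}$ and the game primitives, and then to rule out that the user can do better by provoking an audit. Concretely: (i) on the feasible region of $\mathcal{L}_{c,k}$ the LP objective equals the user's utility against the no-audit strategy $\sigmaa^*=\0$; (ii) the feasible region of $\mathcal{L}_{c,k}$ is exactly the set of user strategies whose (sender-preferred) administrator best response is $\sigmaa^*=\0$; and (iii) no strategy that triggers an audit can beat the LP optimum. Together these say that $\ppi^*$ is the best the user can do given that the administrator best-responds, and that this response is $\sigmaa^*=\0$ — i.e., $\ppi^*$ is a best response to $\sigmaa^*=\0$.

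For (i), I would substitute $\sigmaa\equiv\0$ into the user utility \eqref{eqn:user_utility_fn}: every term $-\sigma(s)\,\mathbb{I}_{[s\neq m]}\big([f(s)-f(m)]_++k\big)$ vanishes, leaving $\mathcal{U}^R(\ppi,\0)=\sum_{m\in\S}\sum_{s\in\S} q_m\,\ppi(s\mid m)\,f(s)$, which is exactly Objective~\eqref{eq:obj-lp-bp}. For (ii), Constraint~\eqref{eq:no-audit-con-bp} for a signal $s$ is literally the inequality appearing in the first branch of the administrator's best-response rule \eqref{eqn:p2_best_resp}; so $\ppi$ is $\mathcal{L}_{c,k}$-feasible iff that inequality holds for every $s$, which — under the tie-breaking convention from Ingredient~(i), namely that when the inequality is tight the administrator is indifferent and $\sigmaa_\ppi(s)=0$ is the value the user prefers — is exactly the statement that the administrator's best response to $\ppi$ is $\sigmaa^*=\0$. (Feasibility of $\mathcal{L}_{c,k}$, hence existence of the optimizer $\ppi^*$, follows from the truthful strategy $\ppi(s\mid m)=\mathbb{I}_{[s=m]}$, whose constraint left-hand side is $0$, together with compactness of $\Pi$.) Steps (i)--(ii) already give the restricted conclusion: among all strategies to which the administrator replies with $\sigmaa^*=\0$, $\ppi^*$ maximizes $\mathcal{U}^R(\cdot,\0)$.

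For (iii), I would invoke Lemma~\ref{lem:signalingeqStrategy}: if the administrator's best response $\sigmaa_\ppi$ to some $\ppi$ has $\sigmaa_\ppi(\tilde s)=1$, then $\ppi$ is strictly dominated, so such strategies are irrelevant in the user's optimization of Definition~\ref{def:sender-preferred-subgame-perfect-equilibria}; consequently $\ppi^*$ maximizes $\mathcal{U}^R(\ppi,\sigmaa_\ppi)$ over all of $\Pi$. The main obstacle is precisely this last step: Lemma~\ref{lem:signalingeqStrategy} as stated only says an audit-provoking $\ppi$ is not an equilibrium, and the improving strategy it constructs is not a priori audit-free (hence not obviously $\mathcal{L}_{c,k}$-feasible). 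I would close this gap either by reading off from the proof of Lemma~\ref{lem:signalingeqStrategy} that the constructed $\widetilde\ppi$ is already audit-free, or by iterating the construction — each iteration strictly increases $\mathcal{U}^R$, which is bounded above on the compact set $\Pi$, while removing an audited signal — so the process terminates at an audit-free strategy that is $\mathcal{L}_{c,k}$-feasible and therefore dominated by $\ppi^*$. Everything outside this point is routine substitution into \eqref{eqn:user_utility_fn} and \eqref{eqn:p2_best_resp}.
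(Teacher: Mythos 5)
Your proposal is correct and follows essentially the same route as the paper, which proves this observation inside the proof of Theorem~\ref{thm:main} via the same three identifications: the LP objective equals $\mathcal{U}^R(\cdot,\0)$, the LP feasible set is exactly the no-audit region of \eqref{eqn:p2_best_resp}, and Lemma~\ref{lem:signalingeqStrategy} discards any strategy that provokes an audit. The gap you flag in step (iii) is genuine but closes exactly as you suggest: the construction in Lemma~\ref{lem:signalingeqStrategy} yields $\widetilde{\sigmaa}^* \leq \sigmaa^*$ pointwise with strictly fewer audited signals, so the finite recursion terminates at an audit-free, $\mathcal{L}_{c,k}$-feasible improvement --- this is precisely the recursion the paper itself spells out in the proof of Lemma~\ref{lem:interim_to_exante}.
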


Note that jointly Lemmas~\ref{lem:signalingeqStrategy},~\ref{lem:underreporting_dominated} and Observation~\ref{obs:lp-eq-result} establish Theorem~\ref{thm:main}. 

\subsection{Discussion} \label{subsec:misreportingSingle:discussion}

This section interprets and discusses some key implications of our equilibrium existence results.

First, the absence of audits in Bayesian persuasion equilibria provides a simple relationship between user utility and excess payments that is used to establish upper bounds on excess payments that occur in both signaling game and Bayesian persuasion equilibria, as described in the following observation. 

\begin{observation}[Excess Payments and User Utility in the absence of audits]\label{obs:user_util_excess_payments}
    When $\sigma(s) = 0$ for all $s \in \S$, $\mathcal{U}^R(\ppi, \sigmaa) = \sum_{m} \q_m \sum_{s} \ppi(s | m) f(s)$. Because $\sum_{m} \q_m f(m)$ is a constant that does not depend on the user strategy $\ppi$, maximizing $\sum_{m} \q_m \sum_{s} \ppi(s | m) f(s)$ is equivalent to maximizing $\sum_{m} \q_m \sum_{s} \ppi(s | m) \bigpar{f(s) - f(m)}$, which is precisely the expression for excess payments $\mathcal{E}(\ppi, \sigmaa)$ once we use Lemma~\ref{lem:underreporting_dominated} to rule out any cases where $f(s) < f(m)$. 
\end{observation}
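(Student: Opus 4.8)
The plan is to verify the claimed identity $\mathcal{U}^R(\ppi,\sigmaa) = \sum_m \q_m \sum_s \ppi(s|m) f(s)$ in the no-audit regime by direct substitution, and then to connect the resulting objective to the excess payments functional $\mathcal{E}$ via an additive constant. First I would start from the definition of $\mathcal{U}^R$ in Equation~\eqref{eqn:user_utility_fn}: each summand carries a term $f(s) - \sigma(s)\mathbb{I}_{[s\neq m]}(\cdot)$, and setting $\sigma(s) = 0$ for all $s \in \S$ annihilates the second term identically, leaving exactly $\sum_m \q_m \sum_s \ppi(s|m) f(s)$. This is the only computation needed for the first sentence.

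Next I would handle the equivalence of maximization problems. Since $\sum_s \ppi(s|m) = 1$ for every $m$ (the rows of $\ppi$ are probability distributions, by the definition of $\Pi$ in Equation~\eqref{eqn:user_action_space}), we have $\sum_m \q_m \sum_s \ppi(s|m) f(m) = \sum_m \q_m f(m)$, a quantity independent of $\ppi$. Subtracting this constant from the objective shows that maximizing $\sum_m \q_m \sum_s \ppi(s|m) f(s)$ over $\ppi \in \Pi$ is equivalent to maximizing $\sum_m \q_m \sum_s \ppi(s|m)(f(s) - f(m))$ over $\ppi \in \Pi$.

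Finally I would identify the latter expression with $\mathcal{E}(\ppi,\sigmaa)$ under $\sigmaa = \0$. From Definition~\ref{def:misreporting-fraud}, with no audits the inner payoff $U^R(a(s),m)$ equals $f(s)$, so $\mathcal{E}(\ppi,\sigmaa) = \sum_m \q_m \sum_s \ppi(s|m)\max(f(s) - f(m), 0)$. The one subtlety is replacing $\max(f(s)-f(m),0)$ by $f(s)-f(m)$: these agree on any $(s,m)$ with $f(s) \geq f(m)$ but differ when $f(s) < f(m)$. Here I would invoke Lemma~\ref{lem:underreporting_dominated}, which guarantees that any utility-maximizing (hence equilibrium) strategy places no mass on pairs with $f(s) < f(m)$; restricting attention to such strategies, the two expressions coincide, giving $\mathcal{E}(\ppi,\sigmaa) = \sum_m \q_m \sum_s \ppi(s|m)(f(s)-f(m))$ and closing the loop. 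The main (minor) obstacle is being careful that this last substitution is only valid after ruling out under-reporting via Lemma~\ref{lem:underreporting_dominated} — everything else is routine bookkeeping.
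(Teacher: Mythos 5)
Your proposal is correct and follows essentially the same route as the paper's own (inline) justification: substitute $\sigmaa \equiv \0$ into Equation~\eqref{eqn:user_utility_fn}, use $\sum_{\s}\ppi(\s|\m)=1$ to subtract the constant $\sum_{\m}\q_{\m}\f(\m)$, and invoke Lemma~\ref{lem:underreporting_dominated} to replace $\bigbra{\f(\s)-\f(\m)}_+$ by $\f(\s)-\f(\m)$. No gaps.
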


Second, a few remarks about our equilibrium existence results are in order. 

\begin{remark} [Interpretation of No-Audit Equilibria]
While not conducting audits is an equilibrium strategy for the administrator, it does not imply that users can arbitrarily misreport their type, as the administrator only does not audit when the users' mixed-strategy satisfies Equation~\eqref{eq:no-audit-con-bp}, which consequently limits the probability of user misreports (see Corollary~\ref{cor:misrepotingProbBound}). 
\end{remark}

\begin{remark}[Interpreting Administrator Utility in Equilibrium]\label{rem:UA_interpret}
Note from Equations~\eqref{eq:admin-payoff} and \eqref{eq:user-payoff}, this game is almost but not quite a zero sum game due to the audit costs that the administrator incurs, which do not appear in the user's utility $U^R$. However, because $\sigmaa \equiv 0$ at any signaling game equilibria, in such settings the player utilities have a zero sum relationship. Thus, as the user maximizes excess payments (see Observation~\ref{obs:user_util_excess_payments}) subject to $\sigmaa \equiv 0$, the administrator, under its profit maximizing objective, minimizes excess payments in equilibrium.  
\end{remark}

\begin{remark} [Equilibrium Computation Using Linear Programming]
Our proof of the existence of Bayesian persuasion equilibria is constructive, and, in particular, the linear Program~\eqref{eq:obj-lp-bp}-\eqref{eq:no-audit-con-bp} provides a polynomial time equilibrium computation method (in the cardinality of the set $\S$).
We obtain a linear program to compute equilibria in our audit game, due to our choice of a profit maximizing administrator and the way that the audit fine and costs are incorporated into the user payoffs, wherein being caught misreporting leads to a lower payoff than telling the truth. We emphasize that this result is in contrast to the NP-hardness of computing equilibria in two-stage games, e.g., Stackelberg equilibria~\cite{korzhyk2010complexity}, in general, which typically involves solving a mixed-integer and non-convex optimization problem. In particular, we leverage the structure of our problem setting to reduce the space of possible equilibria to those that satisfy the inequality in Equation~\eqref{eqn:p2_best_resp}, which enables us to compute equilibria using the linear Program~\eqref{eq:obj-lp-bp}-\eqref{eq:no-audit-con-bp}. Theorem~\ref{thm:main} also contributes to the literature on developing linear programming approaches in Bayesian persuasion contexts~\cite{optimal-disclosure-lp}; however, in contrast to~\cite{optimal-disclosure-lp}, where equilibria are guaranteed to exist, we also consider a budget-constrained setting in Section~\ref{sec:multiRecipientMain} where equilibria may, in general, not exist. 
\end{remark}

\section{Budget Constrained Setting} \label{sec:multiRecipientMain}

We now investigate the setting when the administrator is budget-constrained, i.e., it does not have enough budget to audit all $n$ users ($\B < \n \c$), in line with real-world applications where budget constraints may arise for several reasons, e.g., labor limitations or caps on government spending to run a benefits program. The presence of a budget constraint introduces a coupling of the administrator's decisions across different users and thus requires a more nuanced equilibrium analysis. Moreover, since the administrator's decisions are coupled across users in the presence of budget-constraints, we can no longer model this setting as $n$ parallel single-user games. As a result, in this section and in the analysis of the associated results, we introduce a superscript or a subscript $i$, corresponding to each user $i \in [n]$, in the notation for the signal $\s$, type $\m$, and strategies $\ppi, \sigmaa$.

To this end, we first show that signaling game equilibria may, in general, not exist with a budget constraint (Section~\ref{sec:non-existence}). Despite this result, we establish that signaling game equilibria exist if the administrator has enough budget to audit even one user (Section~\ref{sec:existence-multiBudget}). Further, we extend the notion of a signaling game equilibrium to the setting where users can collude and show that equilibria exist if the budget scales with the size of the largest user coalition (Section~\ref{sec:coalitions}). Since equilibria are guaranteed to exist under a small audit budget, from our bounds on the misreporting probability and excess payments to users in Section~\ref{sec:misreportingSingle}, we have that the administrator can effectively mitigate misreporting fraud at a low overall cost.

\subsection{Equilibrium Non-Existence} \label{sec:non-existence}

While signaling game equilibria always exist in the absence of a budget constraint (Theorem~\ref{thm:signaling-game-main}), in this section, we present an example demonstrating that signaling game equilibria may not exist when the administrator is budget-constrained. To this end, we first note that the notion of a signaling game equilibrium in Definition~\ref{def:signalingGameEqOneRecipient} can be readily generalized to the budget-constrained setting by adding a budget constraint to the administrator's profit maximization problem. We now present Proposition~\ref{prop:counter-eg-nonExistence}, which establishes a counterexample where no signaling game equilibria exist when the budget of the administrator is below a specified threshold. 

\begin{proposition} [Non-Existence of Signaling Game Equilibria] \label{prop:counter-eg-nonExistence}
    Suppose $|\S| = 2$ with $\S = \{ s_{\min}, s_{\max} \}$ and $\q = (q_{\min}, q_{\max})$, and $0<\B < \c \frac{\Delta \f_{\max} \left( 1 - \min \{ \frac{q_{\max} \c}{q_{\min} (\k - \c + \Delta \f_{\max})} , 1\} \right)}{k+\Delta \f_{\max}}$. Then, even when $n=2$ users have the same prior $\q$, no signaling game equilibria $(\{\ppi_i^*\}_{i = 1}^{\n}, \sigmaa^*)$ exist.
\end{proposition}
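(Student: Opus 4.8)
The plan is to argue by contradiction, showing that any candidate signaling game equilibrium profile must simultaneously satisfy two incompatible demands: on one hand, the users' strategies must push the probability of misreporting $s_{\max}$ as high as the ``no-audit'' threshold from Equation~\eqref{eqn:p2_best_resp} permits (since, absent audits, users maximize excess payments, which by Observation~\ref{obs:user_util_excess_payments} is their utility); on the other hand, the administrator, facing such strategies, would want to audit every user that signals $s_{\max}$, but its budget $B$ is too small to audit even one. First I would invoke Lemma~\ref{lem:underreporting_dominated} to restrict attention to strategies where no user under-reports, so the only possible misreport is type $s_{\min}$ sending signal $s_{\max}$; write $p_i := \ppi_i(s_{\max} \mid s_{\min})$ for each user $i$. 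The users are symmetric (same prior $\q$), so I would analyze the symmetric structure but allow for asymmetric $p_i$ in the equilibrium.

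The core dichotomy is as follows. Fix a purported equilibrium $(\{\ppi_i^*\}, \sigmaa^*)$. Case (a): some user $i$ has $\sigmaa^*(s_{\max}) > 0$ on the realized signal, i.e., the administrator audits user $i$ with positive probability. Because $B$ is strictly below the excess-payment-type quantity in the bound, and in particular below $\c$ (one can check $B < \c$ follows from the stated inequality since the fraction multiplying $\c$ is at most $1$), the administrator cannot afford a full audit; I would show that in this regime the user's best response is to reduce $p_i$ to the threshold value $\bar p := \min\{\frac{q_{\max}\c}{q_{\min}(\k - \c + \Delta\f_{\max})}, 1\}$ (the bound from Corollary~\ref{cor:misrepotingProbBound}), at which point the administrator is indifferent and — by the tie-breaking established around Equation~\eqref{eqn:p2_best_resp}, where the user strictly prefers $\sigmaa = 0$ at equality — the user would strictly prefer the administrator to not audit, so a positive audit probability cannot be part of an equilibrium. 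Case (b): $\sigmaa^*(s_{\max}) = 0$ (no audit). Then each user, facing no audit, maximizes $f$-weighted payoff, so each would set $p_i$ as large as feasible; but the administrator's no-audit best response requires Constraint~\eqref{eq:no-audit-con-bp}, which — aggregated over the budget across $n=2$ users — translates into a bound on $\sum_i p_i$ that forces the expected excess payments to exceed what the budget $B$ can deter. More precisely, I would show that the administrator, seeing the users push to the threshold, strictly prefers to spend its (insufficient but positive) budget auditing fractionally, contradicting $\sigmaa^* \equiv 0$; here the hypothesis $B > 0$ is essential, and the specific upper bound on $B$ guarantees the fractional audit the administrator can afford still leaves a profitable deviation relative to the no-audit profile.

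The main obstacle I anticipate is handling the \emph{fractional budget} carefully: with $0 < B < \c$, the administrator can audit a single user only with probability $B/\c$, and one must verify that for the stated range of $B$ neither ``audit fractionally'' nor ``don't audit'' is a fixed point of the best-response correspondence when paired with the users' best responses. This requires computing the administrator's profit under a fractional audit of one user (expected gain $\frac{B}{\c}(-\c + p_i q_{\min}(\k + \Delta\f_{\max})/(\ldots))$ after normalizing by the signal probability) and comparing it against the no-audit profit, then showing the users' induced best response to the fractional strategy is itself not consistent — essentially a ``matching pennies''-style cycling argument. I would make this rigorous by parametrizing the administrator's strategy by the single scalar $\sigmaa^*(s_{\max}) \in [0, B/\c]$ and the users' by $(p_1, p_2)$, writing down the two best-response maps explicitly, and checking that the only candidate fixed point (the threshold $p_i = \bar p$ with $\sigmaa^*$ at the indifference point) is excluded by the strict tie-breaking preference of the users for no audit, while every other point admits a strict improvement for one player. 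The chosen upper bound on $B$ is exactly what makes the ``administrator wants to deviate from no-audit'' inequality strict at the users' equilibrium response, closing the loop.
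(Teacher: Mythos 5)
Your high-level instinct --- that the small budget creates a best-response cycle with no fixed point --- is the right one, but the argument as structured has a genuine gap, and one of its two cases rests on a claim that is false under the stated hypothesis. In your Case (a) you assert that a user facing a positive audit probability best-responds by reducing $p_i$ to the threshold $\bar p = \min\{\frac{q_{\max}c}{q_{\min}(k-c+\Delta f_{\max})},1\}$, after which tie-breaking kills the equilibrium. This is backwards: since $B < c$, the administrator can audit any single user with probability at most $B/c$, and the hypothesis $B < c\,\frac{\Delta f_{\max}(1-\bar p)}{k+\Delta f_{\max}}$ is \emph{exactly} the condition under which $\Delta f_{\max} - \frac{B}{c}(k+\Delta f_{\max}) > \bar p\,\Delta f_{\max}$, i.e., the expected fine from the maximal affordable fractional audit is strictly smaller than the marginal gain from misreporting. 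Hence the audited user's utility is strictly \emph{increasing} in $p_i$ even at audit rate $B/c$, and $p_i = 1$ strictly dominates $p_i = \bar p$; the user does not retreat to the threshold. This is the paper's Step~2 and is the entire role of the upper bound on $B$; with your claimed best-response map, the contradiction you derive in Case (a) does not hold.

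The second missing ingredient is the structure of the administrator's best response across the two users: because its utility is affine in each $\sigmaa_i(s_{\max})$ separately, a profit-maximizing administrator concentrates its entire budget on the user with the \emph{largest} misreporting probability (the paper's Step~1). This is what generates the undercutting cycle: if $p_1 < p_2$ (both above $\bar p$), user $1$ is never audited and profitably increases $p_1$; if $p_1 = p_2 > \bar p$, the budget is split, some user is audited at rate at least $B/(2c)$, and that user profitably undercuts by an infinitesimal amount to escape the audit entirely; if some $p_i \le \bar p$, that user profitably jumps to $1$ by the Step~2 computation above. Your proposal parametrizes the administrator by a single scalar $\sigmaa^*(s_{\max})$ rather than per-user audit probabilities, so it cannot express this prioritization, and in particular your two-case dichotomy does not exclude the symmetric profile $p_1 = p_2 = 1$ with the budget split between the users --- the profile the paper rules out precisely via the undercutting deviation. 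To repair the proof you would need to (i) replace the Case (a) claim with the correct direction of the inequality, and (ii) add the ``audit the maximal violator'' lemma and the resulting three-way case analysis on $(p_1,p_2)$.
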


\begin{hproof}
We prove this claim by contradiction. In particular, we begin by noting that any equilibrium strategy for a setting with two users must be one of two types: (i) both users play the same strategy, i.e., $\ppi_1^* = \ppi_2^*$, or (ii) there is some type $\m$ such that $\ppi_1^*(\cdot|\m) \neq \ppi_2^*(\cdot|\m)$. In both cases, we derive contradictions by showing that one of the two users has a profitable deviation, which follows as the budget of the administrator is below a specified threshold.
\end{hproof}

For a complete proof of Proposition~\ref{prop:counter-eg-nonExistence}, see Appendix~\ref{apdx:pfcounter-eg-nonExistence}. Proposition~\ref{prop:counter-eg-nonExistence} indicates that if the administrator's budget is too small, where it does not have enough budget to audit even one user, it may not be possible to achieve an equilibrium outcome. Similar counterexamples can be constructed for the $|\S| > 2$ and $n > 2$ settings. 

Further, the example in Proposition~\ref{prop:counter-eg-nonExistence} is the smallest instance where equilibria do not exist and we refer to Appendix~\ref{apdx:existence-regardless-of-budget} for a proof of equilibrium existence in the setting with one user when $|\S| = 2$ irrespective of the budget. Finally, we reiterate that this non-existence result in the budget-constrained setting is in contrast to general equilibrium existence results in signaling games~\cite{strategic-info-transmission}.

\subsection{Equilibria Exist: Sufficient Budget Case} \label{sec:existence-multiBudget}

We now show that if the administrator has enough budget to audit at least one user, signaling game equilibria exist. We first establish the general equilibrium existence result for any $|\S|$, following which we present a stronger result when $|\S| = 2$. 

\begin{theorem} [Existence of Equilibria with Sufficient Budget] \label{thm:eqEx-suff-budget}
    Suppose that the budget $\B \geq \c \max_{\s_k, \s_o \in \S} \{ \frac{\f(\s_k) - \f(\s_o)}{\k + \f(\s_k) - \f(\s_o)} \} = c \frac{\Delta \f_{\max}}{\k+\Delta \f_{\max}}$. Then, a signaling game equilibrium $(\{\ppi_i^*\}_{i = 1}^{\n}, \sigmaa^*)$ exists and can be computed via the linear program $\mathcal{L}_{c,k}$ (defined in Equations~\eqref{eq:obj-lp-bp}-\eqref{eq:no-audit-con-bp}) for each user $i \in [n]$. 
\end{theorem}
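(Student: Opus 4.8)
The plan is to leverage the unbudgeted results of Section~\ref{sec:misreportingSingle} by exhibiting an explicit candidate strategy profile and verifying that the budget hypothesis $\B \geq c\,\frac{\Delta f_{\max}}{k+\Delta f_{\max}}$ is exactly what makes it feasible and stable. First I would let $\ppi^*$ be an optimal solution of the linear program $\mathcal{L}_{c,k}$, and set $\ppi_i^* = \ppi^*$ for every user $i \in [n]$ and $\sigmaa^* \equiv \0$ (no audits). By Theorem~\ref{thm:main} and Observation~\ref{obs:lp-eq-result}, $(\ppi^*, \0)$ is a Bayesian persuasion equilibrium of the single-user game, so in particular $\sigmaa^* = \0$ is an (unconstrained) best response to each $\ppi_i^*$, and this property is only preserved when we add a budget constraint to the administrator's maximization. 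The nontrivial direction is the user side: I must show that no single user $i$ of any type $\m$ can profitably deviate, given that the administrator re-optimizes $\mathcal{U}^A$ subject to the budget constraint $\sum_i \sigmaa(s_i) \cdot c \le \B$.

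The key step is to bound the ``worst-case audit intensity'' the administrator can bring to bear on a single deviating user. Suppose user $i$ deviates to $\ppi_i'$ while all other users keep playing $\ppi^*$. Since the other $n-1$ users satisfy Constraint~\eqref{eq:no-audit-con-bp}, the administrator's expected-utility-gain coefficient on $\sigmaa(s_j)$ is $\leq 0$ for every signal $s_j$ arising from those users (by Equation~\eqref{eqn:p2_best_resp}), so the budget-constrained administrator will never spend any budget auditing them. Hence the entire budget $\B$ is available to audit user $i$. I then need the observation — essentially the content of Lemma~\ref{lem:signalingeqStrategy} run quantitatively — that with an audit budget of size $\B$ devoted to user $i$, the best the administrator can do is audit signal(s) from user $i$ with total probability mass $\B/c$; and the marginal gain to the user from a deviation that induces auditing is governed by the same linear-threshold tradeoff as in Ingredient~(i), namely each unit of misreported probability on a signal $s$ with $f(s) > f(m)$ yields excess payment at most proportional to $\frac{\Delta f_{\max}}{k + \Delta f_{\max}}$ per unit of ``audit-free'' slack. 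The budget hypothesis $\B \geq c\,\frac{\Delta f_{\max}}{k+\Delta f_{\max}}$ is precisely the statement that $\B/c$ is large enough to neutralize any such deviation: the administrator can audit the deviating signal with enough probability that the user's gain from misreporting is wiped out by the expected fine. Therefore user $i$ cannot improve on $\ppi^*$, and the profile is a signaling game equilibrium.

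Concretely, the steps in order are: (1) recall $(\ppi^*, \0)$ is a single-user Bayesian persuasion equilibrium and note $\sigmaa^* = \0$ remains a best response under the added budget constraint; (2) show that against a lone deviator, the administrator optimally ignores the $n-1$ conforming users, so all budget targets user $i$; (3) compute, for any candidate deviation $\ppi_i'$ of user $i$, the administrator's budget-constrained best response restricted to user $i$'s signals, and show it audits the ``over-reported'' signals with probability at least $\min\{\B/c, 1\}$ on the relevant mass; (4) combine with the payoff structure — being caught misreporting costs $k + [f(s)-f(m)]_+$ relative to truth — to conclude that the user's utility under $(\ppi_i', \sigmaa_{\text{BR}})$ does not exceed $\mathcal{U}^R_m(\ppi^*, \0)$ whenever $\B/c \geq \frac{\Delta f_{\max}}{k+\Delta f_{\max}}$; (5) assemble (1)--(4) into verification of Definition~\ref{def:signalingGameEqOneRecipient} (generalized to $n$ users with a budget), and note the computation is via $\mathcal{L}_{c,k}$. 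The main obstacle I anticipate is step~(3)--(4): making rigorous the claim that the administrator's optimal allocation of a size-$\B$ budget against a single user, combined with the threshold structure of Equation~\eqref{eqn:p2_best_resp}, leaves the user no better off than conforming — this requires carefully handling deviations that mix across several signals and possibly exploit the $[f(s)-f(m)]_+$ term, and checking that the worst case is indeed the single pair achieving $\Delta f_{\max}$, mirroring but strengthening the qualitative argument in the proof of Lemma~\ref{lem:signalingeqStrategy}.
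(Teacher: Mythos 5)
Your proposal is correct and follows essentially the same route as the paper: the paper's proof simply re-runs the domination argument of Lemma~\ref{lem:signalingeqStrategy} with the budgeted best response $\sigmaa^*(\s;B)=B/c$ in place of $1$, observing that the key inequality there only requires $\sigmaa^*(\s') \geq \frac{\f(\s')-\f(\m)}{\k+\f(\s')-\f(\m)}$, which is exactly what $\B \geq \c\,\Delta \f_{\max}/(\k+\Delta \f_{\max})$ guarantees --- the same quantitative threshold you isolate in steps (3)--(4). Your step (2), that the administrator spends no budget on the $n-1$ conforming users so a lone deviator faces the full budget, is left implicit in the paper's proof but is correct and worth making explicit.
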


For a proof of Theorem~\ref{thm:eqEx-suff-budget}, see Appendix~\ref{apdx:pfEx-suff-budget}. 
In addition to equilibium existence, we note that the corresponding equilibrium properties reduce to that in the setting without budget constraints and that equilibria can be computed using the linear programming approach in Section~\ref{sec:misreportingSingle}. Consequently, given our obtained upper bounds on the probability of user misreports and excess user payments in the budget-unconstrained setting (see Section~\ref{sec:misreportingSingle}), Theorem~\ref{thm:eqEx-suff-budget} implies that our proposed audit mechanism can effectively mitigate misreporting fraud at a low cost.

While signaling game equilibria may, in general, not exist if $\B < \c \max_{\s_k, \s_l \in \S} \{ \frac{\f(\s_k) - \f(\s_l)}{\k + \f(\s_k) - \f(\s_l)} \}$, we show that when $|\S| = 2$ and users have the same prior $\q$, signaling game equilibria exist under a weaker condition on the budget. 

\begin{proposition} [Equilibrium Existence with $|\S| = 2$] \label{prop:strongerBudgetCondition}
    Suppose $|\S| = 2$ and all users have the same prior $\q$. Then, a signaling game equilibrium $(\{\ppi_i^*\}_{i = 1}^{\n}, \sigmaa^*)$ exists if $\B \geq \c \frac{\Delta \f_{\max} \big( 1 - \min \{ \frac{q_{\max} \c}{q_{\min} (\k - \c + \Delta \f_{\max})} , 1\} \big)}{k+\Delta \f_{\max}}$. 
\end{proposition}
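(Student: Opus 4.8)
The plan is to reduce the $|\S|=2$ budget-constrained setting to a one-dimensional analysis and exploit the explicit threshold formula for the misreporting probability from Corollary~\ref{cor:misrepotingProbBound}. Write $\S = \{s_{\min}, s_{\max}\}$ with $\f(s_{\min}) \le \f(s_{\max})$, so $\Delta\f_{\max} = \f(s_{\max}) - \f(s_{\min})$. By Lemma~\ref{lem:underreporting_dominated}, no user will ever play $s_{\min}$ when their type is $s_{\max}$, so the only strategically relevant quantity for each user $i$ is the probability $p_i := \ppi_i(s_{\max}\mid s_{\min})$ of over-reporting. The candidate equilibrium to construct is the symmetric one in which every user plays the threshold value $\bar p := \min\{\frac{q_{\max}\c}{q_{\min}(\k - \c + \Delta\f_{\max})}, 1\}$, i.e. the largest over-report probability consistent with the administrator's no-audit best response (Equation~\eqref{eqn:p2_best_resp}) in the unbudgeted game — this is exactly the Bayesian-persuasion/LP-optimal strategy of $\mathcal{L}_{c,k}$ specialized to two types.

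First I would verify that, at the symmetric profile where every user plays $\bar p$, the administrator's unconstrained best response is $\sigmaa^* \equiv \0$: this is precisely the statement that Constraint~\eqref{eq:no-audit-con-bp} holds with equality (or slackly, if the $\min$ clips at $1$) for the signal $s_{\max}$, and holds trivially for $s_{\min}$ since nobody under-reports. Hence the administrator, whose problem now additionally carries the budget cap $\B$, still finds $\sigmaa^*\equiv\0$ feasible and optimal, so the budget constraint is not binding at this profile and condition (a) of the (budget-constrained analogue of) Definition~\ref{def:signalingGameEqOneRecipient} is satisfied. Second I would check the user side: fix user $i$ and suppose users $j\ne i$ play $\bar p$. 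If user $i$ deviates to some $p_i > \bar p$, I must show the administrator's budget-constrained best response audits user $i$ with enough probability to make the deviation unprofitable. The key computation is that when one user over-reports at rate $p_i > \bar p$ while the rest stay at $\bar p$, the per-user audit gain for user $i$ (the coefficient of $\sigmaa_i(s_{\max})$ in $\mathcal{U}^A$) becomes strictly positive, so the administrator wants to audit user $i$; the budget hypothesis $\B \ge \c\,\frac{\Delta\f_{\max}(1-\bar p)}{\k+\Delta\f_{\max}}$ is exactly what guarantees there is enough budget left over — after the (zero) spending on the conforming users — to audit user $i$ at the probability that drives user $i$'s marginal utility of increasing $p_i$ to be nonpositive. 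I would make this precise by writing the administrator's allocation of audit probability to user $i$ as a function of $p_i$ and showing the resulting $\mathcal{U}^R_{s_{\min}}$ is nonincreasing in $p_i$ on $[\bar p, 1]$; deviations to $p_i < \bar p$ are handled by Observation~\ref{obs:user_util_excess_payments}-type reasoning (lower over-reporting strictly lowers the user's expected credits while the administrator still does not audit), and deviations involving $s_{\min}\mid s_{\max}$ are ruled out by Lemma~\ref{lem:underreporting_dominated}.

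The main obstacle I anticipate is the bookkeeping of the administrator's budget-constrained best response when a single user deviates. The administrator now solves a knapsack-like problem — how to spread $\B$ across the $n$ users' signals to maximize $\mathcal{U}^A$ — and I need to argue that the optimal response concentrates audit probability on the lone deviator rather than on the conforming users (for whom the audit gain is exactly zero, so auditing them is weakly suboptimal) and that the amount the administrator can and will put on the deviator is at least the critical level making the deviation unprofitable. The quantity $\c\,\frac{\Delta\f_{\max}(1-\bar p)}{\k+\Delta\f_{\max}}$ in the hypothesis should drop out as precisely the budget needed to audit the deviator at rate $1-\bar p$ (the deviator's extra over-reporting mass beyond $\bar p$), scaled by the per-audit gain; checking that this rate indeed equalizes the deviator's utility across $[\bar p,1]$ is the crux. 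The stronger-than-Theorem~\ref{thm:eqEx-suff-budget} conclusion comes from the factor $(1-\bar p) \le 1$: when $\bar p < 1$ the administrator only needs to "police" the gap above $\bar p$, not the whole interval $[0,1]$, which is why the required budget shrinks below $\c\,\frac{\Delta\f_{\max}}{\k+\Delta\f_{\max}}$. Finally I would note that this threshold is tight by comparison with Proposition~\ref{prop:counter-eg-nonExistence}, which exhibits non-existence strictly below it.
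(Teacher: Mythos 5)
Your overall architecture matches the paper's: the candidate equilibrium is the symmetric profile at the threshold $\bar p := \min\bigl\{\tfrac{q_{\max}\c}{q_{\min}(\k-\c+\Delta\f_{\max})},1\bigr\}$, the administrator's best response there is $\sigmaa^*\equiv\0$ (so the budget is slack), downward deviations are ruled out because utility is increasing in the over-report probability in the no-audit region, and under-reporting is killed by Lemma~\ref{lem:underreporting_dominated}. The multi-user bookkeeping you worry about is also handled the way you suggest: conforming users generate zero audit gain, so the entire budget is available for a lone deviator, and the paper's proof effectively reduces to a single-user analysis.

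However, one step of your deviation analysis would fail as stated. You claim that for $p_i\in[\bar p,1]$ the deviator's utility $\mathcal{U}^R_{s_{\min}}$ is nonincreasing, i.e.\ that the administrator audits "at the probability that drives user $i$'s marginal utility of increasing $p_i$ to be nonpositive." By Lemma~\ref{lem:S=2_audit_rule}, once $p_i>\bar p$ the audit probability is the \emph{constant} $\B/\c$, so on $(\bar p,1]$ the utility is affine in $p_i$ with slope proportional to $\Delta\f_{\max}-\tfrac{\B}{\c}(\k+\Delta\f_{\max})$. Making that slope nonpositive requires $\B\geq \c\,\tfrac{\Delta\f_{\max}}{\k+\Delta\f_{\max}}$, which is the Theorem~\ref{thm:eqEx-suff-budget} budget, not the weaker one you are assuming; under the Proposition's hypothesis the slope can be strictly positive whenever $\bar p>0$. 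The correct argument (and the one the paper uses) is an endpoint comparison, not monotonicity: the utility drops discontinuously when $p_i$ crosses $\bar p$ (the audit switches on), then rises affinely, so its supremum over $(\bar p,1]$ is attained at $p_i=1$, and the budget hypothesis is exactly the condition $\Delta\f_{\max}-\tfrac{\B}{\c}(\k+\Delta\f_{\max})\leq \bar p\,\Delta\f_{\max}$ guaranteeing that the value at $p_i=1$ does not exceed the value at $p_i=\bar p$. This also corrects your reading of the threshold: the budget buys an audit probability of $\tfrac{\Delta\f_{\max}(1-\bar p)}{\k+\Delta\f_{\max}}$, which equalizes the two endpoint utilities rather than "equalizing the deviator's utility across $[\bar p,1]$." With that substitution your proof goes through; the rest (tightness against Proposition~\ref{prop:counter-eg-nonExistence}, the $(1-\bar p)$ savings relative to Theorem~\ref{thm:eqEx-suff-budget}) is correctly diagnosed.
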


For a proof of Proposition~\ref{prop:strongerBudgetCondition}, see Appendix~\ref{apdx:pfStrongerBudgetCondition}. We reiterate that Proposition~\ref{prop:strongerBudgetCondition} is stronger than Theorem~\ref{thm:eqEx-suff-budget} due to its lower budget requirement. Further, the budget threshold in Proposition~\ref{prop:strongerBudgetCondition} is when the inequality in Proposition~\ref{prop:counter-eg-nonExistence} is met with equality, highlighting the tightness of this budget requirement for equilibria to exist when $|\S| = 2$ and users have the same prior $\q$. 

\subsection{Extension to User Coalitions} \label{sec:coalitions}

We now allow for the possibility of user collusion, as may happen in real-world social networks, and show that equilibria exist if the administrator's budget scales with the size of the largest coalition.

To elucidate our result, we first extend the signaling game equilibrium notion in Definition~\ref{def:signalingGameEqOneRecipient} to the setting with user coalitions of size at most $l$, a stronger requirement than the setting without coalitions as users can collude with other users to increase payoffs. We define $(\{\ppi_i^*\}_{i = 1}^{\n}, \sigmaa^*)$ to be a signaling game equilibrium under $\l$-coalitions if there is no coalition of size at most $\l$, such that users in the coalition can deviate in a way where all users weakly improve their payoffs, with at least one user strictly increasing its payoff. 

We show that signaling game equilibria under $\l$-coalitions exist if the administrator's budget scales with the largest coalition size.\footnote{As with Theorem~\ref{thm:coalitions-ExEq}, which is an analog of Theorem~\ref{thm:eqEx-suff-budget}, a corresponding analog of Proposition~\ref{prop:strongerBudgetCondition} can also be developed for the setting with user coalitions.}

\begin{theorem} [Equilibrium Existence with User Coalitions] \label{thm:coalitions-ExEq}
    Suppose the largest user coalition size is $\l$, all users have the same prior $\q$, and $\B \geq \l \c \max_{\s_k, \s_o \in \S} \{ \frac{\f(\s_k) - \f(\s_o)}{\k + \f(\s_k) - \f(\s_o)} \}$. Then, a signaling game equilibrium $(\{\ppi_i^*\}_{i = 1}^{\n}, \sigmaa^*)$ under $\l$-coalitions exists and is such that a no audit strategy is the best-response for the administrator and the best-response for all users can be computed via the linear program $\mathcal{L}_{c,k}$ (defined in Equations~\eqref{eq:obj-lp-bp}-\eqref{eq:no-audit-con-bp}) for all users $i \in [n]$.
\end{theorem}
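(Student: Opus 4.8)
The plan is to reduce the coalition setting to the single-user analysis of Section~\ref{sec:misreportingSingle} by exhibiting an explicit candidate equilibrium and arguing that no small coalition can profitably deviate from it, using the budget hypothesis to rule out audits. First I would let $\ppi^*$ be an optimal solution of $\mathcal{L}_{c,k}$ (which is feasible by Theorem~\ref{thm:main}), set $\ppi_i^* = \ppi^*$ for every user $i \in [n]$, and set $\sigmaa^* = \0$. I then check that $\sigmaa^* = \0$ is a best response for the administrator even with the budget constraint: since each $\ppi_i^*$ satisfies Constraint~\eqref{eq:no-audit-con-bp}, the per-signal expected utility gain from auditing user $i$ is nonpositive (by the computation behind Equation~\eqref{eqn:p2_best_resp}), so not auditing anyone is optimal regardless of $B$; the budget constraint only shrinks the administrator's feasible set, and $\sigmaa^* = \0$ is feasible for any $B > 0$. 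Hence the administrator's best-response condition holds.

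The main work is the coalition deviation argument. Fix a coalition $C$ with $|C| \le \l$ and suppose the users in $C$ deviate to some joint strategy profile $\{\widetilde{\ppi}_i\}_{i \in C}$ (users outside $C$ keep $\ppi^*$). The administrator then best-responds to the full profile. I would split into two cases according to whether the administrator's best response is still identically zero on all users. If it is identically zero, then by Observation~\ref{obs:user_util_excess_payments}-type reasoning each deviating user $i$'s utility is $\sum_s \widetilde{\ppi}_i(s|m) f(s)$ averaged appropriately, and since $\ppi^*$ already maximizes this objective over the feasible set of $\mathcal{L}_{c,k}$ (and any profile inducing a no-audit best response must satisfy those same constraints), no user can strictly improve — so this is not a profitable coalition deviation. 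If instead the administrator's best response audits some user $j \in C$ on some signal with positive probability, I would mimic the single-user argument of Lemma~\ref{lem:signalingeqStrategy} and Lemma~\ref{lem:underreporting_dominated}: construct for user $j$ an alternative strategy that avoids the audited signal (shifting mass to truthful reporting or to a lower-$f$ signal as in the proof of Lemma~\ref{lem:signalingeqStrategy}), which strictly increases user $j$'s utility while not increasing any audit probabilities elsewhere — contradicting the requirement that \emph{all} coalition members weakly improve and at least one strictly improves, since now $j$ would prefer this further deviation. Here the budget hypothesis $\B \ge \l \c \frac{\Delta f_{\max}}{k + \Delta f_{\max}}$ enters to guarantee that the administrator actually has enough budget to carry out the threatened audits on all of the (at most $\l$) coalition members simultaneously when it is profitable to do so; without this, the administrator might be forced to leave some profitable audit unexecuted, which is precisely the failure mode exploited in Proposition~\ref{prop:counter-eg-nonExistence}.

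The step I expect to be the main obstacle is making the second case fully rigorous in the \emph{coalition} setting: the deviation-improvement construction of Lemma~\ref{lem:signalingeqStrategy} was designed for a single deviating user against an unconstrained administrator, and here I must argue that (a) the administrator's threatened audit response to the coalition's deviation is well-defined and budget-feasible given $\B \ge \l \c \frac{\Delta f_{\max}}{k+\Delta f_{\max}}$, and (b) user $j$'s corrective re-deviation does not perversely cause the administrator to reallocate audit budget onto another coalition member in a way that keeps the coalition collectively satisfied. I would handle (b) by choosing the corrective deviation so that it only \emph{removes} audit incentives (it moves probability mass onto signals $s$ with $s = m$ or $f(s) \le f(m)$, which weakly decreases the audit-gain coefficient in Equation~\eqref{eqn:p2_best_resp} for every signal of every user), so the administrator's post-correction best response audits a subset of what it did before; combined with the fact that the corrective deviation strictly raises $j$'s own utility, this shows the original coalition deviation was not a valid weak-improvement-for-all, completing the contradiction. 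The no-audit equilibrium therefore stands, and since it is induced by $\mathcal{L}_{c,k}$ applied per user, both computability claims in the theorem statement follow immediately.
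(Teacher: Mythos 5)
Your skeleton (candidate profile given by the per-user LP solution with $\sigmaa^* \equiv \0$, verification that no-audit is a budget-feasible best response, and a case split on whether a coalition deviation triggers audits) matches the paper's, and your Case 1 is essentially the paper's Case 1. The gap is in Case 2, and it is a genuine logical one rather than a missing detail. You argue that if the administrator audits some coalition member $j$, then $j$ has a further strictly improving re-deviation \`a la Lemma~\ref{lem:signalingeqStrategy}, and you claim this ``contradicts'' the deviation being a weak Pareto improvement. It does not: the $\l$-coalition equilibrium condition compares each member's payoff under the deviation to their payoff under the \emph{candidate profile} $\ppi^*$, not to what they could get by deviating yet again. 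The existence of a better response for $j$ is perfectly consistent with the original coalition deviation having already made every member (including $j$) strictly better off than at $\ppi^*$. Your worry (b) about budget reallocation after a ``corrective re-deviation'' is a symptom of this: re-deviations simply play no role in the definition, so the whole second half of your argument is aimed at the wrong target.

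What must actually be shown is that some member of the coalition ends up strictly \emph{worse off than at $\ppi^*$}. The paper does this directly: the administrator's best response concentrates its entire budget on the set $\mathcal{C}'$ of members with the largest misreporting mass (the coefficient $\rho_i$ in the affine-in-$\sigmaa_i$ administrator utility), and since $|\mathcal{C}'| \leq \l$, a pigeonhole step yields some $j \in \mathcal{C}'$ with $\sigmaa_j \geq \frac{\B}{\c |\mathcal{C}'|} \geq \frac{\Delta \f_{\max}}{\k + \Delta \f_{\max}}$; at that audit level a direct computation (the same one underlying Theorem~\ref{thm:eqEx-suff-budget}) shows $j$'s utility drops below its value at $\ppi^*$. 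This pigeonhole step is the \emph{only} place the factor $\l$ in the budget condition is used, and you assert its role without carrying it out. Your construction can be salvaged, but only by completing the comparison chain: $j$'s deviation utility $<$ utility of the corrected strategy $\leq \cdots \leq$ utility of a strategy feasible for $\mathcal{L}_{c,k}$ $\leq$ the LP optimum $=$ $j$'s utility at $\ppi^*$ (using the separability of $\mathcal{U}^R$ across types when $\sigmaa \equiv 0$, as in Lemma~\ref{lem:interim_to_exante}). Without either that chain or the paper's direct computation, Case 2 does not close.
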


The proof of Theorem~\ref{thm:coalitions-ExEq} follows similar ideas to Theorem~\ref{thm:eqEx-suff-budget} and we refer to Appendix~\ref{apdx:pfCoalitions-ExEq} for its proof. 
While signaling game equilibria may, in general, not exist (see Section~\ref{sec:non-existence}), Theorem~\ref{thm:coalitions-ExEq} implies that if the administrator has enough budget to audit all users in the largest coalition, equilibria exist and the corresponding equilibrium strategies reduce to that in the setting without user coalitions and a budget constraint. 

Since the coalition size $\l \ll \n$ in real-world social networks \cite{dunbar-number}, our audit mechanism helps mitigate misreporting fraud (see Section~\ref{sec:misreportingSingle}) at a low cost even with user coalitions. 

\section{Comparison of Audit Mechanism to Status Quo without Audits} \label{sec:mainCompAudit}

In this section, we investigate whether the decrease in misreporting fraud resulting from our audit mechanism outweighs the budget spent to run it. To this end, we compare the total cost of our audit mechanism to the status quo without audits, where the total cost is composed of: (i) the budget required to run the audit mechanism and (ii) the excess payments made to users relative to when all users report their types truthfully. We show that the total cost of our audit mechanism is no more than that of the setting without audits (i) for all parameter ranges when $|\S| = 2$ and (ii) under a mild condition on the audit fine $\k$ when $|\S| > 2$.

For ease of exposition, we first focus on the single-user setting in Section~\ref{sec:single-recipient-comp} and present the extension of these results (for $|\S|=2$ and $|\S|>2$) to the multiple-user setting in Section~\ref{sec:multi-recipient-comp}. Then, in Section~\ref{sec:experiment-setup}, we present numerical experiments to compare the costs of our audit mechanism to the status quo without audits based on an application case of a federal transit benefits program.

\subsection{Single-User Setting} \label{sec:single-recipient-comp}

We first establish conditions under which the total cost of our audit mechanism ($C^{\text{Audit}}$) is lower than that of the status quo without audits ($C^{\text{No Audit}}$) in the single-user setting. To compare the costs of these mechanisms, we consider the cases when $|\S| = 2$ and $|\S| > 2$ separately, as signaling game equilibria are unique only when $|\S| = 2$ (see Appendices~\ref{app:equilibria_nonequiv} and~\ref{apdx:equilibriumUniqueness}), which enables a stronger guarantee in this setting.\footnote{Note that when $|\S| = 1$, there is no scope to misreport; hence we consider $|\S|\geq 2$.}

We begin by considering the setting when $|\S| = 2$, in which case we establish that the total cost of our audit mechanism is at most that of the status quo with no audits for all values of $q_{min} \in [0, 1]$.

\begin{theorem} [Cost Comparison ($|\S| = 2$)] \label{thm:cost-comparison}
    Let $|\S| = 2$, where $\q = (q_{\min}, q_{\max})$,
    the number of users $\n = 1$, and suppose that $B = \frac{\c}{\k+\Delta f_{\max}} (\Delta f_{\max}) \big( 1 - \min \{ \frac{q_{\max} \c}{q_{\min} (\k - \c + \Delta f_{\max})} , 1\} \big)$. 
    Then, for any $q_{\min} \in [0, 1]$, $C^{\text{Audit}} \leq C^{\text{No Audit}}$.
\end{theorem}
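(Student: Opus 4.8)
The plan is to compute both $C^{\text{Audit}}$ and $C^{\text{No Audit}}$ explicitly as functions of $q_{\min}$ and compare them directly. Write $\S = \{s_{\min}, s_{\max}\}$ with $f(s_{\max}) - f(s_{\min}) = \Delta f_{\max}$. First I would handle the no-audit baseline: with no audits, a utility-maximizing user of type $s_{\min}$ always reports $s_{\max}$ (since by Lemma~\ref{lem:underreporting_dominated} under-reporting is dominated and over-reporting strictly helps when there is no audit), so the excess payment relative to truthful reporting is $q_{\min} \Delta f_{\max}$. Since no budget is spent, $C^{\text{No Audit}} = q_{\min}\,\Delta f_{\max}$. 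Next I would pin down $C^{\text{Audit}}$: by Theorem~\ref{thm:signaling-game-main} and the uniqueness of the signaling game equilibrium when $|\S|=2$, the equilibrium is a no-audit equilibrium whose user strategy solves $\mathcal{L}_{c,k}$, so the excess payment equals the optimal value of $\mathcal{L}_{c,k}$ (via Observation~\ref{obs:user_util_excess_payments}), and by Corollary~\ref{cor:misrepotingProbBound} the only relevant misreport probability is $\ppi^*(s_{\max}\mid s_{\min}) = \min\{\frac{q_{\max} c}{q_{\min}(k - c + \Delta f_{\max})}, 1\}$, saturating the binding constraint~\eqref{eq:no-audit-con-bp}. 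Hence the equilibrium excess payment is $q_{\min}\,\Delta f_{\max}\cdot \min\{\frac{q_{\max} c}{q_{\min}(k - c + \Delta f_{\max})}, 1\}$. Adding the budget $B$ given in the hypothesis, which is exactly $\frac{c\,\Delta f_{\max}}{k+\Delta f_{\max}}\big(1 - \min\{\frac{q_{\max} c}{q_{\min}(k-c+\Delta f_{\max})},1\}\big)$, gives a closed form for $C^{\text{Audit}}$.

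With both quantities in closed form, the inequality $C^{\text{Audit}} \le C^{\text{No Audit}}$ becomes, after dividing through by $\Delta f_{\max}$ and letting $p := \min\{\frac{q_{\max} c}{q_{\min}(k-c+\Delta f_{\max})}, 1\} \in [0,1]$,
\begin{equation*}
q_{\min} p + \frac{c}{k+\Delta f_{\max}}(1 - p) \;\le\; q_{\min}.
\end{equation*}
Rearranging, this is equivalent to $\frac{c}{k+\Delta f_{\max}}(1-p) \le q_{\min}(1-p)$, i.e. (when $p < 1$) to $\frac{c}{k+\Delta f_{\max}} \le q_{\min}$, and it holds trivially with equality when $p = 1$. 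So the remaining work is to show that whenever $p < 1$ we automatically have $q_{\min} \ge \frac{c}{k+\Delta f_{\max}}$. But $p < 1$ means $\frac{q_{\max} c}{q_{\min}(k - c + \Delta f_{\max})} < 1$, i.e. $q_{\min} > \frac{q_{\max} c}{k - c + \Delta f_{\max}} = \frac{(1-q_{\min})c}{k-c+\Delta f_{\max}}$, which rearranges to $q_{\min}(k + \Delta f_{\max}) > c$, exactly the desired bound $q_{\min} > \frac{c}{k + \Delta f_{\max}}$. This closes the argument.

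I would also separately dispatch the degenerate endpoints: when $q_{\min} = 0$ both costs are $0$, and when $q_{\min} = 1$ one checks $p=1$ (since $q_{\max}=0$) so $B=0$ and both sides equal $\Delta f_{\max}$. The main obstacle I anticipate is not any single step but bookkeeping: making sure the excess-payment accounting is consistent between the two mechanisms (excess relative to truthful reporting in both cases), correctly invoking equilibrium uniqueness for $|\S|=2$ so that "the" audit equilibrium is well-defined, and verifying that the equilibrium strategy indeed saturates constraint~\eqref{eq:no-audit-con-bp} rather than leaving slack — this last point is what makes the excess payment equal to $q_{\min}\Delta f_{\max}\, p$ exactly rather than merely bounded by it, and it relies on the user maximizing excess payments subject to the no-audit constraint (Remark~\ref{rem:UA_interpret}). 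Once those are in place, the final inequality is the short algebraic equivalence above.
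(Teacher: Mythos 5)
Your proposal is correct and follows essentially the same route as the paper's proof in Appendix~\ref{apdx:pfCostCompTwoType}: both compute $C^{\text{No Audit}} = q_{\min}\,\Delta f_{\max}$ and $C^{\text{Audit}} = B + q_{\min}\, p\, \Delta f_{\max}$ with $p = \min\bigl\{\tfrac{q_{\max}c}{q_{\min}(k-c+\Delta f_{\max})},1\bigr\}$ via Lemma~\ref{lem:S=2_no_audit_optimum}, and then reduce the comparison to a case split on whether $p=1$ or $p<1$, i.e., whether $q_{\min}$ lies below or above the threshold $\tfrac{c}{k+\Delta f_{\max}}$. One minor slip in your endpoint check: at $q_{\min}=1$ we have $q_{\max}=0$ and hence $p=0$ rather than $p=1$, but this case is already handled by your $p<1$ branch since $q_{\min}=1 \ge \tfrac{c}{k+\Delta f_{\max}}$ follows from $c \le k$.
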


\begin{hproof}
    To prove this claim, we first note that the total cost of the status quo without audits is $C^{\text{No Audit}} = q_{min} \Delta f_{max}$, as all users of the low type will misreport their type. Next, to analyse the total cost of our audit mechanism, we consider two cases: (i) $q_{min} \leq \frac{c}{k+\Delta f_{\max}}$ and (ii) $q_{min} \geq \frac{c}{k+\Delta f_{\max}}$. In the first case, we show that our audit mechanism reduces to the no audit setting, as the administrator sets aside no budget to conduct audits when $q_{min}$ is below this specified threshold. On the other hand, in case (ii), since the administrator sets aside a budget to audit the user, we leverage the properties of the equilibria in our audit game derived in the proof of Proposition~\ref{prop:strongerBudgetCondition} to establish that $C^{\text{Audit}} \leq C^{\text{No Audit}}$ in this setting as well, establishing our claim.
\end{hproof}

For a complete proof of Theorem~\ref{thm:cost-comparison}, see Appendix~\ref{apdx:pfCostCompTwoType}. 
We reiterate that this result is unconditional as it does not impose any restriction on the range of different problem parameters to establish that $C^{\text{Audit}} \leq C^{\text{No Audit}}$. The key to establishing this unconditional guarantee is the uniqueness of equilibria when $|\S| = 2$.

However, the unconditional guarantee of Theorem~\ref{thm:cost-comparison} does not directly extend to the regime where $|\S| > 2$, as in this case the equilibria of our audit game may not be unique (Appendix~\ref{app:equilibria_nonequiv}). Yet, we show that if the fine $k$ is sufficiently large, the cost of our audit mechanism is no more than that of the status quo with no audits.

\begin{theorem}[Cost Comparison ($|\S| > 2$)]\label{thm:cost-comparison-multitype}
    Let $|\S| > 2$ and $k \geq \Delta f_{\max} \big(\frac{c}{\max_{s'} f(s') - \mathbb{E}_{(s,m) \sim \ppi} [f(s)] } - 1 \big)$. Then $C^{\text{Audit}} \leq C^{\text{No Audit}}$.
\end{theorem}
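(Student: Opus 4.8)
The plan is to make the inequality $C^{\text{Audit}} \le C^{\text{No Audit}}$ fully explicit by writing both total costs in closed form in the single-user setting, and then verify that the stated condition on $k$ is precisely what the resulting algebraic inequality demands. First I would evaluate $C^{\text{No Audit}}$: absent any audit mechanism the administrator simply pays $f(s)$ for whatever signal $s$ is reported, so a utility-maximizing user of type $m$ reports some $s^* \in \argmax_{s} f(s)$ and receives an excess payment $\max_{s'} f(s') - f(m)$; since no budget is spent, $C^{\text{No Audit}} = \mathbb{E}_{m \sim \q}\!\left[\max_{s'} f(s') - f(m)\right] = \max_{s'} f(s') - \mathbb{E}_{m \sim \q}[f(m)]$.

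Next I would evaluate $C^{\text{Audit}}$ at the equilibrium computed by $\mathcal{L}_{c,k}$. By Theorem~\ref{thm:eqEx-suff-budget}, setting the audit budget to $B = c\,\Delta f_{\max}/(k+\Delta f_{\max})$ already guarantees a signaling game equilibrium $(\ppi^*, \sigmaa^*)$, which in the single-user unconstrained regime has $\sigmaa^* = \0$ with $\ppi^*$ a solution of $\mathcal{L}_{c,k}$. By Lemma~\ref{lem:interim_to_exante} this Bayesian persuasion equilibrium carries the largest excess payments among all signaling game equilibria, so establishing the cost inequality here establishes it for every equilibrium. The budget contribution to $C^{\text{Audit}}$ is $B$, and by Observation~\ref{obs:user_util_excess_payments} together with Lemma~\ref{lem:underreporting_dominated} (which forbids $f(s) < f(m)$ on the support of $\ppi^*$), the excess-payment contribution equals $\mathcal{E}(\ppi^*, \sigmaa^*) = \mathbb{E}_{(s,m)\sim\ppi^*}[f(s)] - \mathbb{E}_{m \sim \q}[f(m)]$. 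Hence $C^{\text{Audit}} = B + \mathbb{E}_{(s,m)\sim\ppi^*}[f(s)] - \mathbb{E}_{m \sim \q}[f(m)]$.

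Finally I would subtract the two expressions: the common term $\mathbb{E}_{m\sim\q}[f(m)]$ cancels, so $C^{\text{Audit}} \le C^{\text{No Audit}}$ is equivalent to $B \le \max_{s'} f(s') - \mathbb{E}_{(s,m)\sim\ppi^*}[f(s)]$, i.e.\ $\frac{c\,\Delta f_{\max}}{k+\Delta f_{\max}} \le \max_{s'} f(s') - \mathbb{E}_{(s,m)\sim\ppi^*}[f(s)]$. Assuming, as the hypothesis implicitly forces, that the right-hand side is positive, cross-multiplying by $(k+\Delta f_{\max})$ and isolating $k$ yields exactly $k \ge \Delta f_{\max}\!\left(\frac{c}{\max_{s'} f(s') - \mathbb{E}_{(s,m)\sim\ppi^*}[f(s)]} - 1\right)$, which is the assumed condition; this proves the claim.

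I do not anticipate a genuine obstacle, since the argument is essentially bookkeeping on top of the already-established equilibrium structure (no audits at equilibrium, the $\mathcal{L}_{c,k}$ characterization, and the excess-payment identity of Observation~\ref{obs:user_util_excess_payments}). The two points needing care are: justifying that $B = c\,\Delta f_{\max}/(k+\Delta f_{\max})$ from Theorem~\ref{thm:eqEx-suff-budget} is the appropriate budget to charge against $C^{\text{Audit}}$ when $|\S| > 2$ (no tighter threshold being available here, unlike the $|\S| = 2$ case of Theorem~\ref{thm:cost-comparison}), and making explicit that it suffices to take $\ppi^*$ to be the $\mathcal{L}_{c,k}$-optimal (Bayesian persuasion) equilibrium, since any other signaling game equilibrium has weakly smaller excess payments and hence weakly smaller $C^{\text{Audit}}$, so the worst case governs.
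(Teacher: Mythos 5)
Your proposal is correct and follows essentially the same route as the paper's proof: both decompose $C^{\text{Audit}} = B + \mathbb{E}_{(s,m)\sim\ppi}[f(s)-f(m)]$ with $B = c\,\Delta f_{\max}/(k+\Delta f_{\max})$, identify $C^{\text{No Audit}} = \max_{s'}f(s') - \mathbb{E}_{m\sim\q}[f(m)]$, and reduce the claim to the algebraic equivalence between the hypothesis on $k$ and $B \leq \max_{s'}f(s') - \mathbb{E}_{(s,m)\sim\ppi}[f(s)]$. The extra care you take in justifying the budget choice via Theorem~\ref{thm:eqEx-suff-budget} and in noting that the $\mathcal{L}_{c,k}$-optimal equilibrium is the worst case by Lemma~\ref{lem:interim_to_exante} is sound and only makes explicit what the paper's terser proof leaves implicit.
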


\begin{hproof} 
    First, note that $C^{\text{Audit}} = B + \mathbb{E}_{\ppi}[f(s) - f(m)]$. Next, since $B = \frac{c \Delta f_{\max}}{k+f_{\max}}$, we have
    \begin{align*}
        k \geq \Delta f_{\max} \left(\frac{c}{\max_{s'} f(s') - \mathbb{E}_{s \sim \ppi} [f(s)] } - 1 \right) \implies B \leq \max_{s'} f(s') - \mathbb{E}_{\ppi}[f(s)],
    \end{align*}
    and hence $C^{\text{Audit}} \leq \max_{s'} f(s') - \mathbb{E}_{\ppi}[f(s)] + \mathbb{E}_{\ppi}[f(s) - f(m)] = \max_{s'} f(s') - \mathbb{E}_{\ppi}[f(m)]$. 

    On the other hand, in an absence of an audit, users will always signal $s' := \arg\max_{s \in \S} f(s)$, leading to $C^{\text{No Audit}} = \max_{s'} f(s') - \mathbb{E}_{\ppi}[f(m)]$, and hence $C^{\text{Audit}} \leq C^{\text{No Audit}}$.
\end{hproof}

For a complete proof of Theorem~\ref{thm:cost-comparison-multitype}, see Appendix~\ref{pf:thm:cost-comparison-multitype}. A few comments about the bound on the fine $k$ in Theorem~\ref{thm:cost-comparison-multitype} are in order. First, this bound scales linearly with the maximum difference in the credit allocation $\Delta f_{ \max}$ and the audit cost $\c$. Such a result is intuitive as the fine $k$ should be at least $c$ for it to be economically viable for the administrator to audit users, and the fine should scale proportionally to $\Delta f_{ \max}$ to deter the lowest type users from misreporting. Further, the bound on the fine is inversely proportional to the term $\max_{s'} f(s') - \mathbb{E}_{\ppi}[f(s)]$. Thus, while the fine $k$ must be high when the term $\max_{s'} f(s') - \mathbb{E}_{\ppi}[f(s)]$ is small, this term is small only when $q_{max}$ is very close to one. Hence, for most distributions $\q$ likely to occur in practice, the term $\max_{s'} f(s') - \mathbb{E}_{\ppi}[f(s)]$ is comparable in order of magnitude to $\Delta f_{ \max}$. Consequently, the condition on the fine $\k$ in Theorem~\ref{thm:cost-comparison-multitype} is mild. Yet, we note that developing an unconditional bound, independent of $\k$, in the setting when $|\S| > 2$  is an important direction for future research.

Together, Theorems~\ref{thm:cost-comparison} and~\ref{thm:cost-comparison-multitype} imply that the total cost of our audit mechanism is no more than that of the setting without audits for most parameter ranges. Such a result establishes that the decrease in misreporting fraud resulting from our audit mechanism outweighs the spending to run it, suggesting its practical viability.

\subsection{Multiple-User Setting} \label{sec:multi-recipient-comp}

In this section, we extend the results in the single-user setting to that with multiple users. Here, we denote the cost of not conducting audits with $\n$ users as $C^{\text{No Audit}}_{\n}$ and the total cost of the audit mechanism with a maximum coalition of size $\l$ as $C^{\text{Audit}}_{(\n, \l)}$. 

We now extend Theorem~\ref{thm:cost-comparison} (when $|\S| = 2$) to the multiple-user setting by showing that our audit mechanism achieves a lower total cost than the no audit setting for all $q_{\min} \in [0, 1]$.

\begin{corollary} [Cost Comparison Multiple Users] \label{cor:cost-comparison-multiRecipients}
    Let $|\S| = 2$, the size of the largest coalition among $n$ users be $\l$, and let $B = \l \c \frac{(\Delta f_{\max}) \big( 1 - \min \{ \frac{q_{\max} \c}{q_{\min} (\k - \c + \Delta f_{\max})} , 1\} \big)}{\k+\Delta f_{\max}}$. Further, suppose all users have the same prior $\q$. Then, for any $q_{\min} \in [0, 1]$, $C^{\text{Audit}}_{(\n, \l)} \leq C^{\text{No Audit}}_{\n}$.
\end{corollary}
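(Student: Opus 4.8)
The plan is to reduce Corollary~\ref{cor:cost-comparison-multiRecipients} to the single-user case already established in Theorem~\ref{thm:cost-comparison}. The key structural observation is that the audit game with $n$ users and a maximum coalition of size $\l$, under the stated budget $B = \l \c \frac{(\Delta f_{\max})(1 - \min\{\cdot\})}{\k+\Delta f_{\max}}$, decomposes into parallel copies of the single-user game: by Theorem~\ref{thm:coalitions-ExEq} (and the reasoning behind Proposition~\ref{prop:strongerBudgetCondition} in the $|\S|=2$ setting), the per-user budget $B/\l$ (when amortized across a coalition) equals precisely the single-user threshold in Theorem~\ref{thm:cost-comparison}, a no-audit strategy is a best response for the administrator, and each user's equilibrium strategy $\ppi_i^*$ is an optimal solution to the same linear program $\mathcal{L}_{c,k}$. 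Since all users share the prior $\q$, they all play the same equilibrium strategy $\ppi^*$, so the total cost is simply $n$ times the single-user cost.

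First I would make the cost decomposition explicit: $C^{\text{Audit}}_{(\n,\l)} = B + \sum_{i=1}^n \mathbb{E}_{\ppi_i^*}[f(s_i) - f(m_i)]$, where $B$ is the budget set aside for audits and the sum is the total excess payment across users. Using the fact that all $\ppi_i^* = \ppi^*$ solve the same LP, this becomes $B + n\,\mathbb{E}_{\ppi^*}[f(s)-f(m)]$. Similarly, $C^{\text{No Audit}}_{\n} = n \cdot q_{\min}\Delta f_{\max}$, since in the absence of audits every low-type user misreports to the high type (as in the single-user argument of Theorem~\ref{thm:cost-comparison}). Next I would observe that $B = \l \cdot B_1$, where $B_1 = \c\frac{(\Delta f_{\max})(1-\min\{\cdot\})}{\k+\Delta f_{\max}}$ is exactly the single-user budget from Theorem~\ref{thm:cost-comparison}, and that Theorem~\ref{thm:cost-comparison} gives $B_1 + \mathbb{E}_{\ppi^*}[f(s)-f(m)] \leq q_{\min}\Delta f_{\max}$ for all $q_{\min}\in[0,1]$.

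The remaining step is the arithmetic comparison: since $\l \leq n$, we have $B = \l B_1 \leq n B_1$, hence
\begin{align*}
    C^{\text{Audit}}_{(\n,\l)} = \l B_1 + n\,\mathbb{E}_{\ppi^*}[f(s)-f(m)] \leq n\bigl(B_1 + \mathbb{E}_{\ppi^*}[f(s)-f(m)]\bigr) \leq n\, q_{\min}\Delta f_{\max} = C^{\text{No Audit}}_{\n},
\end{align*}
where the first inequality uses $\l \leq n$ together with $\mathbb{E}_{\ppi^*}[f(s)-f(m)] \geq 0$ (guaranteed since users never underreport, by Lemma~\ref{lem:underreporting_dominated}), and the second is Theorem~\ref{thm:cost-comparison}.

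The main obstacle I anticipate is justifying rigorously that the $n$-user budget-constrained equilibrium does in fact reduce to $n$ identical copies of the single-user equilibrium — that is, that with budget $B = \l B_1$ the administrator's best response remains no-audit for every user simultaneously and that each user's strategy is unaffected by the coupling through the budget constraint. This is essentially the content of Theorem~\ref{thm:coalitions-ExEq}, but one must check that the budget suffices to deter \emph{any} size-$\l$ coalitional deviation (not just individual deviations), which is what forces the $\l$-scaling; the per-user amortization argument needs the observation that a coalition of size $\l$ can only profitably deviate if the administrator lacks the budget to audit all $\l$ of them, and $B = \l B_1$ exactly rules this out by the single-user threshold computation. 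Once that reduction is in hand, the cost comparison itself is immediate from Theorem~\ref{thm:cost-comparison} and the linearity of both cost expressions in $n$.
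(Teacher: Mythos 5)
Your proposal is correct and follows essentially the same route as the paper: both decompose $C^{\text{Audit}}_{(\n,\l)}$ into the coalition-scaled budget $\l B_1$ plus $n$ times the single-user excess payment, then use $\l \le \n$ to bound this by $\n\, C^{\text{Audit}} \le \n\, C^{\text{No Audit}} = C^{\text{No Audit}}_{\n}$ via Theorem~\ref{thm:cost-comparison}, with the reduction to $n$ identical single-user equilibria supplied by Theorem~\ref{thm:coalitions-ExEq}. The only cosmetic difference is that the paper makes the two-case split on $q_{\min}$ explicit rather than citing Theorem~\ref{thm:cost-comparison} as a black box, which changes nothing of substance.
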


For a proof of Corollary~\ref{cor:cost-comparison-multiRecipients}, see Appendix~\ref{apdx:pfCorCost-comparison-Multi}. Corollary~\ref{cor:cost-comparison-multiRecipients} also extends to the setting where users have different priors by taking a maximum of the budget bound across all priors $\q$, which we note reduces to the relation on the budget in Theorem~\ref{thm:coalitions-ExEq}. We also note that a corresponding generalization of Theorem~\ref{thm:cost-comparison-multitype} holds when $|\S|>2$ (see Appendix~\ref{apdx:generalizationThm6}). Thus, irrespective of the number of users or the size of the largest user coalition, Corollary~\ref{cor:cost-comparison-multiRecipients} and its corresponding generalization for $|\S|>2$ suggest that a small audit budget can help alleviate misreporting fraud, highlighting our audit mechanism's practical viability.

\subsection{Case Study of Misreporting Fraud} \label{sec:experiment-setup}

To evaluate our audit mechanism, we conduct numerical experiments based on an application case of Washington D.C.'s federal transit benefits program (FTBP) designed to provide federal employees with transit passes equal to their monthly commuting costs to work. However, many federal employees provide inaccurate commuting cost information on their transit benefits applications and the ``amount of potentially fraudulent transit benefits claimed during 2006 was at least \$17 million,'' which constituted about 25\% of the total claims~\cite{kutz2007federal}. These numbers were based on an extrapolation of the fraudulent claims of 4000 employees, where it was found that of the \$4 million claimed, \$1 million was fraudulent. In the following, we present our model calibration and results comparing the total cost of our audit mechanism to the misreporting fraud in this benefits program (and to that of the setting without audits).

\emph{Model Calibration:} To design a numerical experiment that reflects Washington D.C.'s federal transit benefits program, we focus on the fraudulent claims of the 4000 employees at the headquarters of the federal agencies in the National Capital Region. Furthermore, since transit benefits are distributed every month, with a maximum of \$105 a month per person, we focus on a monthly time horizon, which corresponds to approximately $\$83,333$ of fraudulent claims for these $\n = 4000$ employees. For ease of exposition, we assume that users have two possible types, i.e., $|\S| = 2$, where $\f_{min} = 50$ and $\f_{max} = 105$\footnote{We note that $\f_{max} = \$ 105$ corresponds to the maximum allowable monthly benefit that a user could claim in 2006 and $\f_{min} = \$ 50$ is obtained by adjusting the minimum monthly pass for the Washington D.C. metro in 2024 to 2006 prices. In particular, as the minimum monthly pass is \$64 and the minimum daily pass is \$13 in 2024 prices, while the minimum daily pass in 2004 was around \$9, the estimate of $\$50$ serves as a ball-park for the minimum monthly commuting prices in the National Capital Region in 2006.}.

Finally, we calibrate the parameters $\c$ and $\k$ of our proposed audit mechanism. To calibrate the fine $\k$, we assume that $k$ varies in the range $\{ \$100, \$300, \$500\}$, where we note that $\k=\$300$ corresponds to the fine for the first time offense of not declaring items at airport customs in the United States~\cite{customs-website} and is on the same order of magnitude as the fines corresponding to traffic violations, e.g., speeding tickets or skipping red lights, in the United States. For the cost $\c$ of conducting an audit, we assume this to be equivalent to the cost of hiring an individual to look through the data records or visit an employee's home to validate their approximate commuting costs for a month. Given that the average hourly wage in Washington D.C. is around \$23.99 as of 2024~\cite{hourly-wage-website} and that looking through data records or auditing an individual will at most take a couple of hours, we allow the cost $\c$ to vary in the following range: $\c \in \{ \$25, \$75, \$125 \}$, corresponding the approximate average wage for one hour, three hours, and five hours.

\vspace{-5pt}
\paragraph{Results:}
Figures~\ref{fig:GAOFraud} and~\ref{fig:GAOFraudk} depict the total costs of our audit mechanism and that of the setting without audits (termed ``No Audit'') as a function of the probability of the low type $q_{\min}$ for three different audit costs $ c \in \{ \$25, \$75, \$125 \}$ and audit fines $k \in \{ \$100, \$300, \$500\}$, respectively. Here, we depict the total costs of our audit mechanism when users do not collude, i.e., the coalition size is one, and when users form coalitions with a maximum size of 150~\cite{dunbar-number}. Further, we depict a line corresponding to the $\$83,333$ of monthly fraudulent claims in Washington D.C.'s FTBP (termed ``Fraud FTBP''). We note here that the total cost of the no audit setting analysed in Sections~\ref{sec:single-recipient-comp} and~\ref{sec:multi-recipient-comp} is likely to differ from the $\$83,333$ of misreporting fraud in Washington D.C.'s FTBP, as our analysis for the no audit setting assumes that all users are rational and thus any user of a low type will misreport. However, in practice, many users may be risk averse; hence, the $\$83,333$ of misreporting fraud will likely be lower than that predicted by theory. Thus, since the total cost of the no audit setting under rational user behavior intersects the line corresponding to the $\$83,333$ of monthly fraudulent claims at $q_{\min} = 0.4$ in Figure~\ref{fig:GAOFraud}, it suggests that $q_{\min}\geq 0.4$ in practice.

From Figures~\ref{fig:GAOFraud} and~\ref{fig:GAOFraudk}, we first note that our obtained results validate Theorem~\ref{thm:cost-comparison} and Corollary~\ref{cor:cost-comparison-multiRecipients}, as the total costs of our audit mechanisms (irrespective of coalition size) are no more than that of the no audit setting. Note that the total cost of our audit mechanisms and that of the ``No Audit'' setting are identical up to a certain threshold of $q_{\min}$, following which our audit mechanism has a strictly smaller total cost, where the gap in the total costs of the two mechanisms is monotonically non-decreasing in $q_{\min}$ (see Theorem~\ref{thm:cost-comparison}). Above this $q_{\min}$ threshold, it is profitable for the administrator to set aside an audit budget, resulting in a lower overall cost of the audit mechanism as the reduction in the excess user payments outweighs the budget required to run the audit mechanism.

Next, we observe from Figure~\ref{fig:GAOFraud} that as the audit cost $\c$ increases, the total cost of our audit mechanisms increase, which aligns with the properties of the equilibria in our two-stage game (see Section~\ref{sec:misreportingSingle}). Despite this observation, for each value of $\c$ in Figure~\ref{fig:GAOFraud}, the total cost of our audit mechanisms, under both coalition sizes and all values of $q_{\min}$, are lower than the $\$83,333$ of fraudulent claims. While an audit cost $\c$ that is much larger than $\$125$ would result in the total cost of our audit mechanism exceeding $\$83,333$ for particular ranges of $q_{\min}$, an audit cost greater than $\$125$ implies that it would take more than five hours to audit a user to determine if they misreported their commuting costs, which is unlikely in practice. Thus, for all practical ranges of the audit cost $\c$, our audit mechanism reduces the total cost compared to the $\$83,333$ of fraudulent claims. 

Analogously, the results depicted in Figure~\ref{fig:GAOFraudk} show that as the audit fine is increased, the total cost of our proposed audit mechanism is reduced. Such a result holds as users will misreport with a lower probability at a higher audit fine (see Corollary~\ref{cor:misrepotingProbBound}), and thus the total excess payments corresponding to our proposed audit mechanism will consequently be lower (see Corollary~\ref{cor:equilibrium_suboptimality}). Yet, we note from Figure~\ref{fig:GAOFraudk} that for all the tested fines, the cost of our audit mechanism is lower than that of the setting without audits, and also the $\$83,333$ of monthly fraudulent claims for all values of $q_{\min}$. Akin to our observations with the variation in the audit costs, we note that while an audit fine $\k$ that is much smaller than $\$100$ would result in the total cost of our audit mechanism exceeding $\$83,333$ for particular ranges of $q_{\min}$, a fine that is significantly smaller than $\$100$ is unlikely in practice for a city like Washington D.C..

Finally, the total cost of our audit mechanism without coalitions is smaller than that with a coalition size of 150, as the administrator's budget scales with the size of the largest coalition (Theorem~\ref{thm:coalitions-ExEq}). However, from Figures~\ref{fig:GAOFraud} and~\ref{fig:GAOFraudk}, for most values of $q_{\min}$, we note that the gap in the total costs of the two audit mechanisms is small, suggesting that the excess payments far outweigh the budget required for audits.

Overall, our results demonstrate the promise of our audit mechanism in alleviating misreporting fraud, which can improve upon the costs of the status quo of not conducting audits by several orders of magnitude for certain parameter ranges, e.g., higher values of $q_{\min}$.

\begin{figure*}[tbh!]
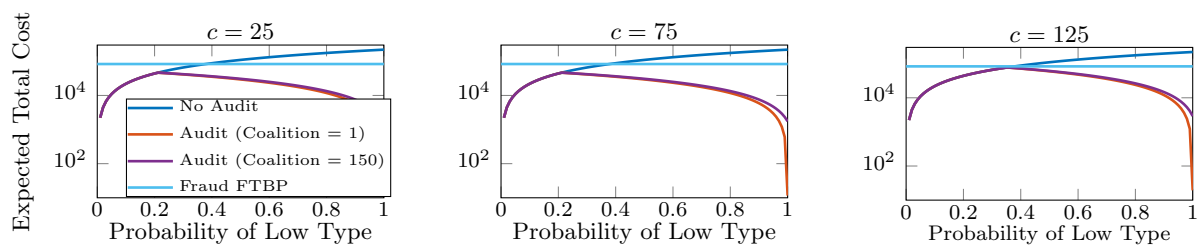

  \centering
  \begin{subfigure}[b]{0.3\columnwidth}
      \include{Figs/EC_c_25}
  \end{subfigure} \hspace{20pt}
  \begin{subfigure}[b]{0.3\columnwidth}
      \include{Figs/EC_c_75}
  \end{subfigure} \hspace{5pt}
  \begin{subfigure}[b]{0.3\columnwidth}
      \include{Figs/EC_c_125}
  \end{subfigure}
     \vspace{-38pt}
    \caption{{\small \sf Comparison of the total costs of the no audit mechanism to our audit mechanism with a coalition size $l=1$ and $l=150$ for a fine $k = 300$ and for audit costs $\c = 25$ (left), $c = 75$ (center), and $c = 125$ (right) as $q_{\min}$ varies. Each plot also depicts the $\$83,333$ of monthly fraud in Washington D.C.'s FTBP. 
    }} 
    \label{fig:GAOFraud}
\end{figure*}

\begin{figure}[tbh!]
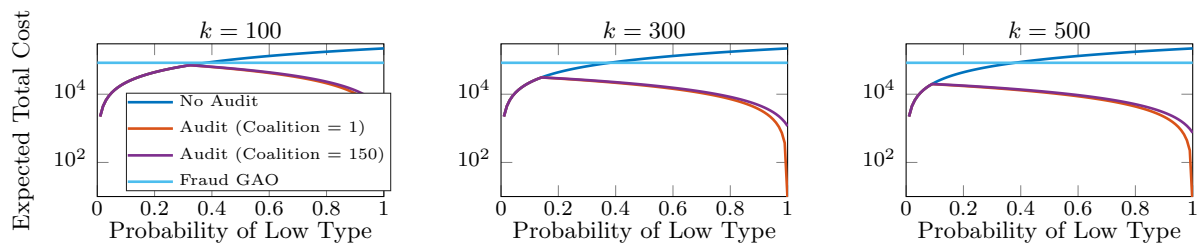

  \centering
  \begin{subfigure}[b]{0.3\columnwidth}
      \include{Figs/EC_k_100}
  \end{subfigure} \hspace{20pt}
  \begin{subfigure}[b]{0.3\columnwidth}
      \include{Figs/EC_k_300}
  \end{subfigure} \hspace{5pt}
  \begin{subfigure}[b]{0.3\columnwidth}
      \include{Figs/EC_k_500}
  \end{subfigure}
     \vspace{-38pt}
    \caption{{\small \sf Comparison of the total costs of the no audit mechanism to our audit mechanism with a coalition size $l=1$ and $l=150$ for a cost $c = 75$ and for audit fines $\k = 100$ (left), $\k = 300$ (center), and $\k = 500$ (right) as $q_{\min}$ varies. Each plot also depicts the $\$83,333$ of monthly fraud in Washington D.C.'s FTBP.}}
    \label{fig:GAOFraudk}
\end{figure}

\section{Conclusion and Future Work} \label{sec:conclusion}

In this work, we developed an audit mechanism to address misreporting fraud in artificial currency based government benefits programs at a low overall cost to the administrator.

There are several future research directions. First, there is scope to generalize our model to a broader range of audit mechanisms, e.g., where the fine $k$ depends on the extent of user misreports. Further, it would be valuable to investigate settings when the prior distribution $\q$ is unknown to the administrator and analyse the influence of other administrator objectives in mitigating misreporting fraud beyond the profit-maximization objective studied in this work.

\bibliographystyle{alpha} 
\bibliography{main}

\newpage 

\appendix

\section{Proofs}

\subsection{Proof of Theorem~\ref{thm:main}}\label{pf:thm:main}

By Definition~\ref{def:sender-preferred-subgame-perfect-equilibria}, a strategy profile $\bigpar{\ppi^*, \sigmaa^*}$ is a Bayesian persuasion equilibrium for the audit game if and only if
\begin{align*}
    \sigmaa^* \in \underset{\sigma}{\text{argmax }} \mathcal{U}^A(\ppi^*, \sigmaa),
\end{align*}
and
\begin{align*}
    \ppi^* \in &\underset{\ppi \in \Pi}{\text{ argmax }} \mathcal{U}^R(\ppi, \sigmaa) \\
    \text{s.t. } & \sigmaa \in \underset{\sigma_0}{\text{argmax }} \mathcal{U}^A(\ppi, \sigmaa_0).
\end{align*}
Lemma~\ref{lem:signalingeqStrategy} shows, however, that any $\ppi$ for which $\sigmaa^* \not\equiv 0$ cannot be optimal. Therefore, the addition of constraints \eqref{eq:no-audit-con-bp} for every $s \in \S$ does not change the solution set or objective value of the optimization problem. Hence the Bayesian persuasion equilibria conditions are equivalent to
\begin{align*}
    \ppi^* \in &\underset{\ppi \in \Pi}{\text{ argmax }} \mathcal{U}^R(\ppi, 0) \\
    \text{s.t. } & 0 \in \underset{\sigma_0}{\text{argmax }} \mathcal{U}^A(\ppi, \sigmaa_0).
\end{align*}
Substituting the definition for $\mathcal{U}^R$ and the best response rule of the administrator defined in $\eqref{eqn:p2_best_resp}$, the Bayesian persuasion equilibria conditions are equivalent to:
\begin{align*}
    \ppi^* \in \underset{\ppi \in \Pi }{\text{argmax}} & \sum_{\m \in \S} \q_{\m} \sum_{\s \in \S} \ppi(\s|\m) \f(\s)   \\ 
    \text{subject to }
    & \sum_{\m \in \S \backslash \{\s\}}  \ppi(\s|\m) \q_{\m} \bigpar{\k + \bigbra{\f(\s) - \f(\m)}_+} \leq \c \sum_{\m \in \S} \ppi(\s|\m) \q_{\m}, \forall \s \in \S.
\end{align*}
and $\sigmaa^* \equiv 0$. Thus the Bayesian persuasion equilibrium user strategies can be computed from the linear program $\mathcal{L}_{c,k}$ defined in~\eqref{eq:obj-lp-bp}-\eqref{eq:no-audit-con-bp}, establishing the desired result.  

\subsection{Proof of Lemma~\ref{lem:signalingeqStrategy}} \label{apdx:pfLemSignalingStrategy}

Consider $\ppi := \{ \ppi(\cdot| \m) \}_{\m \in \S}$ for which there exists $\s'$ so that $\sigmaa^*(\s') = 1$. Throughout this proof, we will use $\ppi(\s,\m)$ to denote the probability that the true type is $m$ and the reported type is $s$, as shorthand for $\ppi(\s|\m) \q_{\m}$. Additionally, due to Lemma~\ref{lem:underreporting_dominated}, we can assume that $\pi(\s,\m) > 0$ only if $\f(\s) \geq \f(\m)$. 

The proof has 3 main steps:

\begin{enumerate}
    \item First, we show how to construct a new action $\Tilde{\ppi}$ from $\ppi$ (which we will show has a better expected utility for the user than $\ppi$ does).
    \item Next, we show that the probability that the administrator audits when responding to $\ppi$ is always at least as large as the probability of the administrator auditing when responding to $\Tilde{\ppi}$.
    \item Finally, we use the first two ingredients to show that $\Tilde{\ppi}$ has a higher expected utility for the user than $\ppi$. 
\end{enumerate}

\subsubsection{Step 1}

To construct another policy with better utility than $\ppi$, first define the following sets:
\begin{align*}
    E_1 &= \{(\s,\m) : \m = \s', \s \neq \m\}, \\
    E_2 &= \{(\s,\m) : \m \neq \s', \s = \m\}, \\
    E_3 &= (\S \times \S) \setminus (E_1 \cup E_2). 
\end{align*}
We define $\Tilde{\ppi}$ from $\ppi$ by moving the probability mass from $E_1$ to $E_2$. Namely, 
\begin{align*}
    \Tilde{\ppi}(s,m) := \left\{ 
        \begin{tabular}{cc}
            $0$ & if $(\s,\m) \in E_1$, \\
            $\ppi(\s,\m) + \ppi(\s',\m)$ & if $(\s,\m) \in E_2$, \\
            $\ppi(\s,\m)$ & if $(\s, \m) \in E_3$.
        \end{tabular}
    \right.
\end{align*}
By construction, $\Tilde{\ppi}$ is a valid strategy (i.e., the marginal distribution of $\m$ is equal to $\q$) since for any $\m \in \S$, we have
\begin{align*}
    \sum_{\s \in \S} \widetilde{\ppi}(\s,\m) &= \widetilde{\ppi}(\m,\m) + \widetilde{\ppi}(\s',\m) + \sum_{\s \not\in \{\m,\s'\} } \widetilde{\ppi}(\s,\m) \\
    &= \left( \ppi(\m,\m) + \ppi(\s',\m) \right) + 0 + \sum_{\s \not\in \{\m,\s'\} } \ppi(\s,\m) \\
    &= \sum_{\s \in \S } \ppi(\s,\m) = \q_m. 
\end{align*}

\subsubsection{Step 2}

We use $\Tilde{\sigmaa}^*$ to denote the administrator's best response to $\Tilde{\ppi}$ as defined in \eqref{eqn:p2_best_resp}. There are two cases to consider: $\s = \s'$ and $\s \neq \s'$. \\

\noindent \textit{Case 1:} When $\s = \s'$, we note that by construction $\sum_{\m \neq \s'} \Tilde{\ppi}(\s', m) = 0$ and hence $\Tilde{\sigmaa}^*(s') = 0 \leq \sigmaa^*(s')$. \\

\noindent \textit{Case 2:} When $\s \neq \s'$, we have 
\begin{itemize}
    \item $\ppi(\s, \m) = \Tilde{\ppi}(\s,\m)$ whenever $\s \neq \m$ (since in this case $(\s,\m) \in E_3$).
    \item $\ppi(\s, \m) \leq \Tilde{\ppi}(\s,\m)$ whenever $\s = \m$ (since in this case $(\s,\m) \in E_2$). 
\end{itemize}
Thus in case 2 we have
\begin{align*}
    k \sum_{\m \neq \s} \ppi(\s,\m) &= k \sum_{\m \neq \s} \Tilde{\ppi}(\s,\m) \\
    &\text{and} \\
    \sum_{\m \in \S} \ppi(\s,\m) (c + \f(\m) - \f(\s)) &\leq \sum_{\m \in \S} \Tilde{\ppi}(\s,\m) (c + \f(\m) - \f(\s)).
\end{align*}
Thus according to \eqref{eqn:p2_best_resp}, $\Tilde{\sigma}^*(s) = 1$ only if $\sigma^*(\s) = 1$, and hence $\Tilde{\sigma}^*(\s) \leq \sigma^*(\s)$. To conclude, we have shown that $\Tilde{\sigma}^*(\s) \leq \sigma^*(\s)$ for every $\s \in \S$.

\subsubsection{Step 3}

Recall that $U^i(\sigmaa^*(\s), \m)$ represents the utility obtained by user $i$ when its true type is $\m$ and it sends signal $\s$. User $i$'s expected utility when playing according to $\Tilde{\ppi}$ is thus:

\begin{align*}
    \mathbb{E}_{(\s,\m) \sim \Tilde{\ppi}} \left[ U^i(\Tilde{\sigmaa}^*(\s), \m) \right] &= \sum_{(\s,\m) \in \S \times \S} \Tilde{\ppi}(\s,\m) \left( \f(\s) + \Tilde{\sigmaa}^*(s) \mathbb{I}_{[\s \neq \m]} (\f(\m) - \f(\s) - k) \right) \\
    &\overset{(a)}{=} \underbrace{\sum_{(\s,\m) \in E_2} \Tilde{\ppi}(\s,\m) \left( \f(\s) + \Tilde{\sigmaa}^*(s) \mathbb{I}_{[\s \neq \m]} (\f(\m) - \f(\s) - k) \right)}_{\text{Term 1}} \\
    &+ \underbrace{\sum_{(\s,\m) \in E_3} \Tilde{\ppi}(\s',\m) \left( \f(\s') + \Tilde{\sigmaa}^*(s') \mathbb{I}_{[\s' \neq \m]} (\f(\m) - \f(\s') - k) \right)}_{\text{Term 2}} 
\end{align*}
Equality $(a)$ is because $\S \times \S = E_1 \cup E_2 \cup E_3$, and $\Tilde{\ppi}(\s,\m) = 0$ for every $(\s,\m) \in E_1$. We can lower bound Term 1 as follows:

\begin{align*}
    \text{Term 1} &= \sum_{(\s,\m) \in E_2} \Tilde{\ppi}(\s,\m) \left( \f(\s) + \Tilde{\sigmaa}^*(s) \mathbb{I}_{[\s \neq \m]} (\f(\m) - \f(\s) - k) \right) \\
    &\overset{(b)}{=} \sum_{\m \neq \s' } \Tilde{\ppi}(\m,\m) \f(\m) \\
    &\overset{(c)}{=} \sum_{\m \neq \s' } \left( \ppi(\m,\m) + \ppi(\s',\m) \right) \f(\m) \\
    &= \sum_{\m \neq \s'} \ppi(\s',\m) \f(\m) + \sum_{\m \neq \s'} \ppi(\m,\m) \f(\m) \\
    &\overset{(d)}{=} \sum_{\m \neq \s'} \ppi(\s',\m) \f(\m) + \sum_{(\s,\m) \in E_2} \ppi(\m,\m) \left( \f(\m) + \sigmaa^*(\s) \mathbb{I}_{[\s \neq \m]} (\f(\m) - \f(\s) - k) \right) \\
    &\overset{\hypertarget{eqn:truth_noninferiority}{(e)}}{\geq} \sum_{\m \neq \s'} \ppi(\s',\m) \left( \f(\s') + \sigmaa^*(\s')(\f(\m) - \f(\s') - k) \right) \\
    & + \sum_{(\s,\m) \in E_2} \ppi(\m,\m) \left( \f(\m) + \sigmaa^*(\s) \mathbb{I}_{[\s \neq \m]} (\f(\m) - \f(\s) - k) \right) \\
    &= \sum_{(\s,\m) \in E_1 \cup E_2} \ppi(\s,\m) \left( \f(\s) + \sigmaa^*(\s) \mathbb{I}_{[\s \neq \m]} (\f(\m) - \f(\s) - k) \right).
\end{align*}
Equalities $(b), (d)$ are because by construction $\s = \m$ for every $(\s,\m) \in E_2$, and as a consequence, $\mathbb{I}_{[\s \neq \m]} = 0$. Equality $(c)$ is by the definition of $\Tilde{\ppi}$. Inequality $(e)$ holds whenever $\sigmaa^*(\s') \geq \frac{\f(\s') - \f(\m)}{k + \f(\s') - \f(\m)}$, which is true since we know $\sigmaa^*(\s') = 1$. 

Next, we can lower bound Term 2 as follows:

\begin{align*}
    \text{Term 2} &= \sum_{(\s,\m) \in E_3} \Tilde{\ppi}(\s',\m) \left( \f(\s') + \Tilde{\sigmaa}^*(s') \mathbb{I}_{[\s' \neq \m]} (\f(\m) - \f(\s') - k) \right) \\
    &\overset{(f)}{=} \sum_{(\s,\m) \in E_3} \ppi(\s',\m) \left( \f(\s') + \Tilde{\sigmaa}^*(s') \mathbb{I}_{[\s' \neq \m]} (\f(\m) - \f(\s') - k) \right) \\
    &\overset{(g)}{\geq} \sum_{(\s,\m) \in E_3} \ppi(\s',\m) \left( \f(\s') + \sigmaa^*(s') \mathbb{I}_{[\s' \neq \m]} (\f(\m) - \f(\s') - k) \right).
\end{align*}
Equation $(f)$ is because $\ppi$ is equal to $\Tilde{\ppi}$ on $E_3$. Inequality $(g)$ is because $\sigmaa^*(\s) \geq \Tilde{\sigmaa}^*(s)$ for all $s$ (shown in step 2), and $\f(\m) - \f(\s) - k$ is always non-positive (see Lemma~\ref{lem:underreporting_dominated}). 

Putting everything together, 

\begin{align*}
    \mathbb{E}_{(\s,\m) \sim \Tilde{\ppi}} \left[ U^i(\Tilde{\sigmaa}^*(\s), \m) \right] &= \text{Term 1} + \text{Term 2} \\
    &> \sum_{(\s,\m) \in E_1 \cup E_2 \cup E_3} \ppi(\s',\m) \left( \f(\s') + \sigmaa^*(s') \mathbb{I}_{[\s' \neq \m]} (\f(\m) - \f(\s') - k) \right) \\
    &= \mathbb{E}_{(\s,\m) \sim \ppi} \left[ U^i(\sigmaa^*(\s), \m) \right],
\end{align*}
which establishes the desired result. 

\subsection{Proof of Lemma~\ref{lem:underreporting_dominated}}\label{apdx:pf:lem:underreporting_dominated}

Consider any $\ppi$ for which there exists $\m', \s' \in \S$ such that $\ppi(\s' | \m') > 0$ and $\f(\s') < \f(\m')$. We will show that there exists another user strategy $\widetilde{\ppi}$ which leads to a higher expected utility for the user. 

Throughout this proof, we will use $\ppi(\s,\m)$ to denote the probability that the true type is $m$ and the reported type is $s$, as shorthand for $\ppi(\s|\m) \q_{\m}$. 

Consider the following policy $\widetilde{\ppi}$:

\begin{align*}
    \widetilde{\ppi}(s, m) &= \left\{ 
    \begin{tabular}{cc}
        $\ppi(\s,\m)$ & if $(\s,\m) \neq (\s', \m')$ \\
        $\ppi(\s',\m') + \ppi(\m', \m')$ & if $\s = \m = \m'$ \\
        $0$ & otherwise. 
    \end{tabular}
    \right.
\end{align*}
In other words, $\widetilde{\ppi}$ is identical to $\ppi$ except in one situation: any time when $\m = \m'$ and $\ppi$ would signal $\s = \s'$, $\widetilde{\ppi}$ would instead signal $\m'$. 

Let $\sigmaa^*$ be the best response to $\ppi$ and let $\widetilde{\sigmaa}^*$ be the best response to $\widetilde{\ppi}$. Next we will show that $\widetilde{\sigmaa^*}(\s) \leq \sigmaa^*(\s)$ for any $\s \in \S$. This will be useful because the user's expected utility is a decreasing function of $\sigmaa$. To this end, recall that the administrator's best response to a user strategy is given by \eqref{eqn:p2_best_resp}, which, by using the law of total probability, we can re-write as

\begin{align*}
    \sigmaa^*(s) = \left\{ 
    \begin{tabular}{cc}
        $0$ & $\sum_{m \neq s} \frac{\ppi(\s,\m)}{\sum_{\m''} \ppi(\s, \m'')} (\max ( \f(\s) - \f(\m), 0 ) + k) \leq c$ \\
        $1$ & otherwise.  
    \end{tabular}
    \right.
\end{align*}
There are three cases to consider:
\begin{itemize}
    \item[Case 1:]  $\s \not \in \{s', m'\}$. Since $\widetilde{\ppi}(\s,\m) = \ppi(\s,\m)$ whenever $\m \neq \m'$, we have $\widetilde{\sigmaa}^*(\s) = \sigmaa^*(\s)$ whenever $\s \not \in \{s', m'\}$. 
    \item[Case 2:] $\s = \m'$. By definition of $\widetilde{\ppi}$, $\widetilde{\ppi}(\s',\m') = 0 < \ppi(\s',\m')$ and $\widetilde{\ppi}(\s', \m) = \ppi(\s', \m)$ for any $\m \neq \m'$. Using this information, we see that
    \begin{align*}
        c \sum_{\m \in \S} \widetilde{\ppi}(\s',\m) - &\sum_{\m \neq \s'} \widetilde{\ppi}(\s', \m) (\max(\f(\s') - \f(\m), 0) + k) \\
        &= \left( c \sum_{\m \in \S} \ppi(\s',\m) \right) - \left( \sum_{\m \neq \s'} \ppi(\s', \m) (\max(\f(\s') - \f(\m), 0) + k) \right) \\
        &- c \ppi(\s',\m') + (\max(\f(\s') - \f(\m'), 0) + k)) \ppi(\s',\m') \\
        &\overset{(a)}{\geq} \left( c \sum_{\m \in \S} \ppi(\s',\m) \right) - \left( \sum_{\m \neq \s'} \ppi(\s', \m) (\max(\f(\s') - \f(\m), 0) + k) \right).
    \end{align*}
    Here $(a)$ is because $\max(\f(\s') - \f(\m), 0) + k = k \geq c$. Hence $\widetilde{\sigmaa}^*(\s') = 1$ only if $\sigmaa^*(\s') = 1$, i.e., $\widetilde{\sigmaa}^*(\s') \leq \sigmaa^*(\s')$.
    
    \item[Case 3:] $\s = \m'$. By definition of $\widetilde{\ppi}$, $\widetilde{\ppi}(\m',\m') > \ppi(\m',\m')$ and $\widetilde{\ppi}(\m', \m) = \ppi(\m', \m)$ for any $\m \neq \m'$. Therefore $\frac{\widetilde{\ppi}(\m',\m)}{\sum_{\m''} \widetilde{\ppi}(\m', \m'')} < \frac{\ppi(\m',\m)}{\sum_{\m''} \ppi(\m', \m'')}$ for any $\m \neq \m'$. We can then conclude that
    \begin{align*}
        \sum_{\m \neq \m'} \frac{\widetilde{\ppi}(\m',\m)}{\sum_{\m''} \widetilde{\ppi}(\m', \m'')} \max ( \f(\m') - \f(\m) + k, 0 ) \leq \sum_{\m \neq \m'} \frac{\ppi(\m',\m)}{\sum_{\m''} \ppi(\m', \m'')} \max ( \f(\m') - \f(\m) + k, 0 ).
    \end{align*}
    Hence $\widetilde{\sigmaa}^*(\m') = 1$ only if $\sigmaa^*(\m') = 1$. This implies $\widetilde{\sigmaa}^*(\m') \leq \sigmaa^*(\m')$.
\end{itemize}

So we see that $\widetilde{\sigmaa}^*(\s') \leq \sigmaa^*(\s')$ for any $\s \in \S$. Putting everything together, the expected utility for the user under $\ppi$ is:

\begin{align*}
    \sum_{(\s,\m) \in \S \times \S} \ppi(\s,\m) &\left( \f(\s) + \sigma(\s) \mathbb{I}_{[s \neq m]} (\min(\f(\m) - \f(\s), 0) - k) \right) \\
    &\overset{(b)}{\leq} \sum_{(\s,\m) \in \S \times \S} \ppi(\s,\m) \left( \f(\s) + \widetilde{\sigma}(\s) \mathbb{I}_{[s \neq m]} (\min(\f(\m) - \f(\s), 0) - k) \right) \\
    &= \sum_{(\s,\m) \not\in \{(\s',\m'), (\m',\m')\} } \ppi(\s,\m) \left( \f(\s) + \widetilde{\sigma}(\s) \mathbb{I}_{[s \neq m]} (\min(\f(\m) - \f(\s), 0) - k) \right) \\
    &+ \ppi(\s',\m') \left( \f(\s') + \widetilde{\sigma}(\s) \mathbb{I}_{[s \neq m]} (\min(\f(\m) - \f(\s), 0) - k)\right) + \ppi(\m',\m') \f(\m') \\
    &\overset{(c)}{\leq} \sum_{(\s,\m) \not\in \{(\s',\m'), (\m',\m')\} } \ppi(\s,\m) \left( \f(\s) + \widetilde{\sigma}(\s) \mathbb{I}_{[s \neq m]} (\min(\f(\m) - \f(\s), 0) - k) \right) \\
    &+ \ppi(\s',\m') \f(\m') + \ppi(\m',\m') \f(\m') \\
    &\overset{(d)}{=} \sum_{(\s,\m) \not\in \{(\s',\m'), (\m',\m')\} } \widetilde{\ppi}(\s,\m) \left( \f(\s) + \widetilde{\sigma}(\s) \mathbb{I}_{[s \neq m]} (\min(\f(\m) - \f(\s), 0) - k) \right) \\
    &+ \widetilde{\ppi}(\m',\m') \f(\m') \\
    &= \sum_{(\s,\m) \not\in \{(\s',\m')\} } \widetilde{\ppi}(\s,\m) \left( \f(\s) + \widetilde{\sigma}(\s) \mathbb{I}_{[s \neq m]} (\min(\f(\m) - \f(\s), 0) - k) \right) \\
    &\overset{(e)}{=} \sum_{(\s,\m) \in \S \times \S } \widetilde{\ppi}(\s,\m) \left( \f(\s) + \widetilde{\sigma}(\s) \mathbb{I}_{[s \neq m]} (\min(\f(\m) - \f(\s), 0) - k) \right).
\end{align*}
Since the first term in this chain of inequalities is the user's expected utility under $\ppi$, and the last term in this chain of inequalities is the user's expected utility under $\widetilde{\ppi}$, we have shown that $\widetilde{\ppi}$ gives more expected utility to the user than $\ppi$, establishing the desired result.

Inequality $(b)$ is because $\widetilde{\sigmaa}^*(\s') \leq \sigmaa^*(\s')$ for any $\s \in \S$, and because the expected user utility is a decreasing function of $\sigmaa^*$. Inequality $(c)$ is because $\f(\s') + \widetilde{\sigma}(\s) \mathbb{I}_{[s \neq m]} (\min(\f(\m) - \f(\s), 0) - k) \leq \f(\s') \leq \f(\m')$. Equality $(d)$ is by definition of $\widetilde{\ppi}$, indeed by construction we have $\widetilde{\ppi}(\m',\m') = \ppi(\s',\m') + \ppi(\m', \m')$. Finally, equality $(e)$ is because $\widetilde{\ppi}(\s',\m') = 0$ by construction. 

\subsection{Proof of Theorem~\ref{thm:signaling-game-main}}\label{pf:thm:signaling-game-main}
Theorem~\ref{thm:main} tells us that the user strategies corresponding to Bayesian persuasion equilibria are precisely those which maximize excess payments subject to satisfying \eqref{eqn:p2_best_resp} for every $s \in \S$ (See Observation~\ref{obs:user_util_excess_payments} for the equivalence between user utility and excess payments in any Bayesian persuasion equilibrium). Hence, for any Bayesian persuasion equilibrium $\bigpar{ \ppi^*, \sigmaa_{\ppi^*} }$, we have
\begin{align*}
    \mathcal{E} \bigpar{\ppi^*, \sigmaa_{\ppi^*}} = \max_{\ppi \in \Pi} \mathcal{E} \bigpar{ \ppi, \sigmaa_{\ppi} } \mathds{1} \bigbra{ \ppi \text{ satisfies \eqref{eqn:p2_best_resp} for all } s \in \S}.
\end{align*}

From Observation~\ref{obs:sg-no-audit-optimality}, for every signaling game equilibrium $(\ppi', \sigmaa_{\ppi'})$, $\ppi'$ satisfies \eqref{eqn:p2_best_resp} for every $s \in \S$. In other words, $\ppi'$ is feasible for the linear program $\mathcal{L}_{c,k}$ defined in equations~\eqref{eq:obj-lp-bp}-\eqref{eq:no-audit-con-bp}. Hence we can deduce
\begin{align*}
    \mathcal{E} \bigpar{\ppi^*, \sigmaa_{\ppi^*}} &= \max_{\ppi \in \Pi} \mathcal{E} \bigpar{ \ppi, \sigmaa_{\ppi} } \mathds{1} \bigbra{ \ppi \text{ satisfies \eqref{eqn:p2_best_resp} for all } s \in \S} \\
    &\geq \max_{\ppi \in \Pi} \mathcal{E} \bigpar{ \ppi, \sigmaa_{\ppi} } \mathds{1} \bigbra{ \bigpar{\ppi, \sigmaa_{\ppi}} \text{ is a signaling game equilibrium} }.
\end{align*}

Hence $\mathcal{E}(\ppi^*, \sigmaa_{\ppi^*})$, which is also the optimal objective value for $\mathcal{L}_{c,k}$ is an upper bound on the excess payments that occur in any signaling game equilibrium.

Conversely, this bound is tight, because Lemma~\ref{lem:interim_to_exante} shows that every Bayesian persuasion equilibrium is also a signaling game equilibrium. 

\subsection{Proof of Lemma~\ref{lem:interim_to_exante}} \label{pf:lem:interim_to_exante}


Let $\bigpar{\ppi^*, \sigmaa_{\ppi^*}}$ be any Bayesian Persuasion equilibrium. We will show that $\bigpar{\ppi^*, \sigmaa_{\ppi^*}}$ is also a signaling game equilibrium. To this end, we will need the following observation.

\begin{observation}[No-Audit Optimality in Signaling Games]\label{obs:sg-no-audit-optimality}
Consider any strategy profile $\bigpar{\ppi, \sigmaa_{\ppi}}$ where $\sigmaa_{\ppi}$ denotes the administrator's best response to the user strategy $\ppi$ as described in~\eqref{eqn:p2_best_resp}. If $\sigmaa_{\ppi}$ is not identically zero, i.e., there exists $s \in \S$ for which $\sigmaa_{\ppi}(s) = 1$, then $\bigpar{\ppi, \sigmaa_{\ppi}}$ cannot be a signaling game equilibria. 
\end{observation}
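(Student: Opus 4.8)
\textbf{Proof proposal for Observation~\ref{obs:sg-no-audit-optimality}.}

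The plan is to prove the contrapositive directly: assume $\bigpar{\ppi,\sigmaa_\ppi}$ is a signaling game equilibrium and show $\sigmaa_\ppi \equiv 0$. The key leverage is Lemma~\ref{lem:signalingeqStrategy}, which already establishes that if $\sigmaa_\ppi(\widetilde s)=1$ for some signal $\widetilde s$, then $\bigpar{\ppi,\sigmaa_\ppi}$ is not a Bayesian persuasion equilibrium, via an explicit construction of a user strategy $\widetilde\ppi$ with strictly larger \emph{ex ante} (averaged-over-$m$) utility. So the real work is to bridge the gap between that ex ante deviation and the \emph{interim} deviation that the signaling game equilibrium definition (Definition~\ref{def:signalingGameEqOneRecipient}) requires: in a signaling game equilibrium we must exhibit a single type $m$ whose conditional strategy $\ppi(\cdot\mid m)$ the user of type $m$ would want to change, holding all other types' conditionals fixed, and anticipating the administrator's best response to the resulting joint strategy.

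First I would invoke Lemma~\ref{lem:signalingeqStrategy}: supposing for contradiction that $\sigmaa_\ppi(\widetilde s)=1$ for some $\widetilde s$, we obtain the strategy $\widetilde\ppi$ from the proof of that lemma with $\mathcal{U}^R(\widetilde\ppi,\sigmaa_{\widetilde\ppi}) > \mathcal{U}^R(\ppi,\sigmaa_\ppi)$, i.e. $\mathbb{E}_{m\sim\q}\bigbra{\mathcal{U}^R_m(\widetilde\ppi,\sigmaa_{\widetilde\ppi})} > \mathbb{E}_{m\sim\q}\bigbra{\mathcal{U}^R_m(\ppi,\sigmaa_\ppi)}$. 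Second — and this is the crux — I would note that $\widetilde\ppi$ differs from $\ppi$ only in the conditionals of a \emph{restricted set of types}: inspecting Step 1 of the proof of Lemma~\ref{lem:signalingeqStrategy}, the mass is moved only within rows $m=\widetilde s$ (the edges in $E_1$ are killed) — that is, $\widetilde\ppi(\cdot\mid m)=\ppi(\cdot\mid m)$ for every $m\neq \widetilde s$. Hence $\widetilde\ppi$ is exactly a unilateral deviation by the single user of type $\widetilde s$. Third, since the averaged utilities strictly increase and all terms with $m\neq\widetilde s$ are unchanged under $\widetilde\ppi$ (because the administrator's best response on signals is determined by the whole profile, but the contributions $\mathcal{U}^R_m$ for $m\neq\widetilde s$ only involve how type $m$ sends signals and the audit probabilities on those signals — here I need to be a little careful, see below), the strict increase must be concentrated in the $m=\widetilde s$ term, giving $\mathcal{U}^R_{\widetilde s}(\widetilde\ppi,\sigmaa_{\widetilde\ppi}) > \mathcal{U}^R_{\widetilde s}(\ppi,\sigmaa_\ppi)$. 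That contradicts clause (b) of Definition~\ref{def:signalingGameEqOneRecipient}, since the administrator in that definition is exactly playing its best response to the deviated profile.

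The step I expect to be the main obstacle is the ``concentration of the strict increase'' claim, because the administrator's best response $\sigmaa_{\widetilde\ppi}$ can differ from $\sigmaa_\ppi$ on signals other than $\widetilde s$, so the per-type utilities $\mathcal{U}^R_m$ for $m\neq\widetilde s$ are not \emph{a priori} unchanged. The way I would handle this is to use Step 2 of the proof of Lemma~\ref{lem:signalingeqStrategy}, which shows $\sigmaa_{\widetilde\ppi}(s)\le \sigmaa_\ppi(s)$ for \emph{every} $s$; since $\mathcal{U}^R_m$ is (weakly) decreasing in each $\sigmaa(s)$ (lower audit probabilities only help the user, cf. \eqref{eqn:sg_user_utility_fn}), lowering the audit profile from $\sigmaa_\ppi$ to $\sigmaa_{\widetilde\ppi}$ weakly \emph{raises} $\mathcal{U}^R_m(\ppi,\cdot)$ for each $m\neq\widetilde s$; combined with $\widetilde\ppi(\cdot\mid m)=\ppi(\cdot\mid m)$ this gives $\mathcal{U}^R_m(\widetilde\ppi,\sigmaa_{\widetilde\ppi})\ge \mathcal{U}^R_m(\ppi,\sigmaa_\ppi)$ for all $m\neq\widetilde s$. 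Then the strict inequality on the $\q$-average forces $\mathcal{U}^R_{\widetilde s}(\widetilde\ppi,\sigmaa_{\widetilde\ppi}) > \mathcal{U}^R_{\widetilde s}(\ppi,\sigmaa_\ppi)$, completing the contradiction. (As a sanity check one should also confirm the signaling game equilibrium definition's inner best-response constraint is met, which holds because $\sigmaa_{\widetilde\ppi}$ is by definition the administrator's best response to $\widetilde\ppi$.)
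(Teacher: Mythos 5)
There is a genuine gap at what you yourself identify as the crux. The strategy $\widetilde{\ppi}$ constructed in the proof of Lemma~\ref{lem:signalingeqStrategy} is \emph{not} a unilateral deviation by a single type: the set $E_2$ consists of pairs $(\s,\m)$ with $\m \neq \s'$, and the construction sets $\widetilde{\ppi}(\m,\m) = \ppi(\m,\m) + \ppi(\s',\m)$ on those pairs, so every type $\m \neq \s'$ that puts positive mass on the audited signal has its conditional $\ppi(\cdot\mid\m)$ changed (each such type reroutes its mass on $\s'$ to truthful reporting). Your claim that $\widetilde{\ppi}(\cdot\mid\m)=\ppi(\cdot\mid\m)$ for all $\m\neq\widetilde{s}$ is therefore false, and the rest of the argument collapses: even granting your (correct) monotonicity observation that $\sigmaa_{\widetilde{\ppi}}\leq\sigmaa_{\ppi}$ pointwise, the conclusion is only that some type strictly gains under a \emph{coordinated} deviation by many types. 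Definition~\ref{def:signalingGameEqOneRecipient} only forbids a single type improving while all other conditionals are held at their \emph{original} values (with the administrator re-best-responding to that unilateral change), so no contradiction is obtained. Passing from a profitable coordinated deviation to a profitable unilateral one is exactly the nontrivial content of the observation, and it is not supplied.

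The paper's proof avoids this by building a genuinely unilateral deviation from scratch rather than importing the construction of Lemma~\ref{lem:signalingeqStrategy}. Since $\sigmaa_{\ppi}(s)=1$, Equation~\eqref{eqn:p2_best_resp} forces the existence of a single type $\m\neq s$ with $\ppi(s\mid\m)>0$; only that type's conditional is modified, moving its mass on $s$ to the truthful signal $\m$. One then checks that under this one-row change the administrator's best response is unchanged on all signals other than $\m$ and weakly decreases on $\m$, and a direct computation shows type $\m$ gains at least $k\,\ppi(s\mid\m)>0$, contradicting clause (b) of the definition. If you want to keep your decomposition-based outline, you would need to replace the appeal to Lemma~\ref{lem:signalingeqStrategy} with this single-type surgery (or prove separately that a profitable coordinated deviation of this particular form implies a profitable unilateral one), which amounts to redoing the paper's argument.
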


\begin{proof}[Proof of Observation~\ref{obs:sg-no-audit-optimality}]
    Suppose there exists $s$ for which $\sigmaa_{\ppi}(s) = 1$. According to \eqref{eqn:p2_best_resp}, there must exist some $m \neq s$ for which $\ppi(s | m) > 0$. Consider $\ppi'(\cdot | m)$ defined as follows:
    \begin{align*}
        \ppi'(s' | m) &= \casewise{
            \begin{tabular}{cc}
                $\ppi(s' | m)$ & if $s' \not \in \bigbrace{s,m}$ \\
                $0$ & if $s' = s$ \\
                $\ppi(m | m) + \ppi(s | m)$ & if $s'=m$.
            \end{tabular}
        }
    \end{align*}

    For every $s' \not\in \bigbrace{s,m}$, $\ppi(s' | m) = \ppi'(s' | m)$, so by \eqref{eqn:p2_best_resp}, $\sigma_{\ppi}(s') = \sigmaa_{\ppi'}(s)$. For $s' = m$, when replacing $\ppi$ with $\ppi'$, the right hand side of \eqref{eqn:p2_best_resp} increases while the left hand side stays the same, implying $\sigma_{\ppi}(m) \geq \sigmaa_{\ppi'}(m)$. Using these observations, we note that 
    \begin{align*}
        \mathcal{U}^R_m(\ppi, \sigmaa_{\ppi}) &= \sum_{s' \in \S} \ppi(s' | m) \bigbra{f(s') - \sigma_{\ppi}(s') \mathbb{I}_{[s' \neq m]}\bigpar{\bigbra{f(s') - f(m)}_+ + k}} \\
        &= \sum_{s' \not\in \bigbrace{s,m}} \ppi(s' | m) \bigbra{f(s') - \sigma_{\ppi}(s') \mathbb{I}_{[s' \neq m]}\bigpar{\bigbra{f(s') - f(m)}_+ + k}} \\
        &+ \ppi(s | m) \bigbra{f(s) - \sigma_{\ppi}(s) \mathbb{I}_{[s \neq m]}\bigpar{\bigbra{f(s) - f(m)}_+ + k}} + \ppi(m | m) f(m) \\
        &\leq \sum_{s' \not\in \bigbrace{s,m}} \ppi'(s' | m) \bigbra{f(s') - \sigma_{\ppi'}(s') \mathbb{I}_{[s' \neq m]}\bigpar{\bigbra{f(s') - f(m)}_+ + k}} \\
        &+ \ppi(s | m) \bigbra{f(m) - k} + \ppi(m | m) f(m) \\
        &= - k \ppi(s | m) +  \sum_{s' \in \S} \ppi'(s' | m) \bigbra{f(s') - \sigma_{\ppi'}(s') \mathbb{I}_{[s' \neq m]}\bigpar{\bigbra{f(s') - f(m)}_+ + k}} \\
        &= - k \ppi(s | m) + \mathcal{U}^R_m(\ppi', \sigmaa_{\ppi'}).
    \end{align*}
    Since $\ppi(s | m) > 0$, we see that under the strategy profile $\bigpar{\ppi, \sigmaa_{\ppi}}$, type $m$ can improve its utility by changing its signaling strategy, hence $\bigpar{\ppi, \sigmaa_{\ppi}}$ cannot be a signaling game equilibrium. 
\end{proof}

Because $\bigpar{\ppi^*, \sigmaa_{\ppi^*}}$ is a Bayesian persuasion equilibrium, from Theorem~\ref{thm:main} we can deduce that $\sigmaa_{\ppi^*} \equiv 0$, and $\ppi^*$ satisfies inequality \eqref{eqn:p2_best_resp} for every $s \in \S$. 

Now pick any $m \in \S$. We will show that there does not exist $\ppi'(\cdot | m)$ which can replace $\ppi^*(\cdot | m)$ to increase $\mathcal{U}_m^R$ given the signaling strategies of the other user types: $\bigbrace{\ppi^*(\cdot | m')}_{m' \neq m}$. For notational convenience, let $\widetilde{\ppi}$ denote the user strategy obtained from $\ppi^*$ by replacing $\ppi^*(\cdot | m)$ with $\ppi'(\cdot | m)$.

Using the reasoning from Observation~\ref{obs:sg-no-audit-optimality}, we can limit the search of $\ppi'(\cdot | m)$ to those for which $\widetilde{\ppi}$ satisfies \eqref{eqn:p2_best_resp} for all $s \in \S$. To see why this is true, note that if $\sigmaa_{\widetilde{\ppi}}(s') = 1$ for some $s' \in \S$, then it must be the case that $\ppi'(s' | m) > 0$ (otherwise $\sigmaa_{\ppi^*}$ would not have been identically zero in the first place). Then, as shown in the proof of Observation~\ref{obs:sg-no-audit-optimality}, we can construct $\ppi''(\cdot | m)$ from $\ppi'(\cdot | m)$ which (a) gives strictly greater utility for type $m$ players and (b) the set $\bigbrace{s \in \S : \sigmaa_{\widetilde{\ppi}}(s) = 1}$ has strictly smaller cardinality when replacing $\ppi^*(\cdot | m)$ with $\ppi''(\cdot | m)$, as compared to replacing $\ppi^*(\cdot | m)$ with $\ppi'(\cdot | m)$. Hence, if the administrator audits multiple signals under $\ppi'(\cdot | m)$, applying the technique from Observation~\ref{obs:sg-no-audit-optimality} recursively will eventually produce a signaling strategy $\ppi''(\cdot | m)$ for which $\widetilde{\ppi}$ satisfies \eqref{eqn:p2_best_resp} for all $s \in \S$, and provides a larger value of $\mathcal{U}_m^R$ than $\ppi'(\cdot | m)$.

Hence the optimal deviation for type $m$ users can be computed from the following linear program:
\begin{align*}
    \underset{\widetilde{\ppi} \in \Pi}{\text{maximize }} &\mathcal{U}_m^R(\widetilde{\ppi}, 0) \\
    \text{s.t. } & \widetilde{\ppi}(\cdot | m') = \ppi^*(\cdot | m') \text{ for all } m' \neq m, \\
    & \widetilde{\ppi} \text{ satisfies } \eqref{eqn:p2_best_resp} \text{ for all } s \in \S. 
\end{align*}

Next, we note that when $\sigmaa(s) = 0$ for all $s \in \S$, $\mathcal{U}^R$ is \textit{separable}, meaning that $\mathcal{U}_{m'}^R$ depends only on $\ppi(\cdot | m)$ and not one any of the signaling strategies of the other types. Hence, if we write
\begin{align*}
    \mathcal{U}^R(\widetilde{\ppi}, \sigmaa_{\widetilde{\ppi}}) = \underbrace{\q_m}_{a} \mathcal{U}_m^R(\widetilde{\ppi}, \sigmaa_{\widetilde{\ppi}}) + \underbrace{\sum_{m' \neq m} \q_{m'} \mathcal{U}_{m'}^R(\widetilde{\ppi}, \sigmaa_{\widetilde{\ppi}})}_{b},
\end{align*}
then when $\sigmaa_{\widetilde{\ppi}} \equiv 0$, $a,b$ are positive constants that do not depend on $\ppi'(\cdot | m)$, meaning maximizing $\mathcal{U}_m^R(\widetilde{\ppi}, \sigmaa_{\widetilde{\ppi}})$ under constraints \eqref{eqn:p2_best_resp} is equivalent to maximizing $\mathcal{U}^R(\widetilde{\ppi}, \sigmaa_{\widetilde{\ppi}})$ under constraints \eqref{eqn:p2_best_resp}. From this, the optimal deviation for type $m$ can be computed from the following linear program:
\begin{align*}
    \underset{\widetilde{\ppi} \in \Pi}{\text{maximize }} &\mathcal{U}^R(\widetilde{\ppi}, 0) \\
    \text{s.t. } & \widetilde{\ppi}(\cdot | m') = \ppi^*(\cdot | m') \text{ for all } m' \neq m, \\
    & \widetilde{\ppi} \text{ satisfies } \eqref{eqn:p2_best_resp} \text{ for all } s \in \S. 
\end{align*}
However, by construction of $\bigpar{\ppi^*, \sigmaa_{\ppi^*}}$, $\ppi^*(\cdot | m)$ is a solution to this optimization problem because $\bigpar{\ppi^*, \sigmaa_{\ppi^*}}$ is a Bayesian persuasion equilibrium. Hence there does not exist a $\ppi'(\cdot | m)$ that can replace $\ppi^*(\cdot | m)$ to increase $\mathcal{U}_m^R$ given the other signaling strategies $\bigbrace{\ppi^*(\cdot | m')}_{m' \neq m}$, meaning that $\bigpar{\ppi^*, \sigmaa_{\ppi^*}}$ is also a signaling game equilibrium. 

To show that the excess payments in a Bayesian persuasion equilibria are an upper bound on the excess payments in any signaling game equilibria, let $\bigpar{\ppi_1, \sigmaa_1}$ be a Bayesian persuasion equilibrium and $\bigpar{\ppi_2, \sigmaa_2}$ be a signaling game equilibrium. By Observation~\ref{obs:sg-no-audit-optimality}, $\ppi_2$ satisfies \eqref{eqn:p2_best_resp} for all $s \in \S$ and hence it is feasible for the linear program \eqref{eq:obj-lp-bp}-\eqref{eq:no-audit-con-bp}. Because $\bigpar{\ppi_1, \sigmaa_1}$ is a Bayesian persuasion equilibrium, $\ppi_1$ is an optimal solution to this linear program, meaning that $\mathcal{U}^R\bigpar{\ppi_1, \sigmaa_1} \geq \mathcal{U}^R\bigpar{\ppi_2, \sigmaa_2}$. Finally, from Observation~\ref{obs:user_util_excess_payments} we have:

\begin{align*}
    \mathcal{E}\bigpar{\ppi_1, \sigmaa_1} = \mathcal{U}^R\bigpar{\ppi_1, \sigmaa_1} - \sum_{m} \q_m f(m) \geq \mathcal{U}^R\bigpar{\ppi_2, \sigmaa_2} - \sum_{m} \q_m f(m) = \mathcal{E}\bigpar{\ppi_2, \sigmaa_2}.
\end{align*}

\subsection{Proof of Corollary~\ref{cor:misrepotingProbBound}} \label{apdx:corPf-misreportingBound}

Let $(\ppi^*, \sigmaa^*)$ be a signaling game equilibrium. We can deduce from Observation~\ref{obs:sg-no-audit-optimality} that $\ppi^*$ satisfies \eqref{eqn:p2_best_resp} for every $s \in \S$. Thus we have

\begin{align*}
    k \sum_{\m \in \S \setminus \{s\}} \q_{\m} \ppi(\s | \m) &\leq \sum_{m \in \S} \q_{\m} \ppi(\s | \m) (c + \f(\m) - \f(\s)) \\
    \iff \sum_{\m \in \S \setminus \{s\}} \q_{\m} \ppi(\s | \m) (k - c + \f(\s) - \f(\m)) & \leq \q_{\s} \ppi(s | m = s) c \\
    \implies \sum_{\m \in \S \setminus \{s\}} \q_{\m} \ppi(\s | \m) (k - c + \f(\s) - \f(\m)) & \leq \q_{\s} c
\end{align*}
Thus for any $m \neq s$, the above inequality implies
\begin{align*}
    \q_{\m} \ppi(\s | \m) (k - c + \f(\s) - \f(\m)) & \leq \q_{\s} c,
\end{align*}
implying that $\ppi(\s | \m) \leq \frac{\q_{\s} c}{\q_{\m} (k-c+\f(\s) - \f(\m))}$, which implies the desired result, since $\ppi(\s | \m) \leq 1$ is always true.

\subsection{Proof of Corollary~\ref{cor:equilibrium_suboptimality}} \label{apdx:corPf-equilibrium_suboptimality}

Let $(\ppi, \sigmaa)$ be a Bayesian persuasion equilibrium in the audit game. To prove the desired result, we first establish the following two inequalities:
\begin{align}
    \mathbb{E}_{(s,m) \sim \ppi}[f(s)] - \mathbb{E}_{m \sim q}[f(m)] &\leq \Delta f_{\max} \mathbb{P}_{\ppi}(s \neq m), \label{eqn:equilibrium_suboptimality_ineq1} \\ 
    \mathbb{E}_{(s,m) \sim \ppi}[f(s)] - \mathbb{E}_{m \sim q}[f(m)] &\leq c - k \mathbb{P}_{\ppi}(s \neq m). \label{eqn:equilibrium_suboptimality_ineq2}
\end{align}
Once we establish these two inequalities, the desired result follows directly, since
\begin{align*}
    \mathbb{E}_{(s,m) \sim \ppi}[f(s)] - \mathbb{E}_{m \sim q}[f(m)] &\leq \max_{\mathbb{P}_{\ppi}(s \neq m)} \min \left( \Delta f_{\max} \mathbb{P}_{\ppi}(s \neq m), c - k \mathbb{P}_{\ppi}(s \neq m) \right) \\
    &= \frac{c \Delta f_{\max} }{k + \Delta f_{\max} },
\end{align*}
which is achieved when $\mathbb{P}_{\ppi}(s \neq m) = \frac{c}{k + \Delta f_{\max} }$. Thus all that remains is to establish \eqref{eqn:equilibrium_suboptimality_ineq1} and \eqref{eqn:equilibrium_suboptimality_ineq2}. 

\textit{Establishing \eqref{eqn:equilibrium_suboptimality_ineq1}:} By H\"{o}lder's inequality, 
\begin{align*}
    \mathbb{E}_{(s,m) \sim \ppi}[f(s)] - \mathbb{E}_{m \sim q}[f(m)] &= \mathbb{E} \left[ (f(s) - f(m)) \mathbb{I}_{[s \neq m]} \right] \\
    &\leq \mathbb{E} \left[ \Delta f_{\max} \mathbb{I}_{[s \neq m]} \right] = \Delta f_{\max} \mathbb{E} \left[ \mathbb{I}_{[s \neq m]} \right]\\
    &= \Delta f_{\max} \mathbb{P}_{\ppi}(s \neq m),
\end{align*}
where the inequality is because $ f(s) - f(m)  \leq \max_{s,s'} f(s) - f(m) = \Delta f_{\max}$ with probability 1.

\textit{Establishing \eqref{eqn:equilibrium_suboptimality_ineq2}:} Recall from \eqref{eqn:p2_best_resp} that the administrator will conduct an audit upon seeing a signal $s$ if and only if $\sum_{m \neq s} \ppi(m | s) (k + f(s) - f(m)) > c$. For an equilibrium $(\ppi, \sigmaa)$, the administrator will never audit, which means that $\sum_{m \neq s} \ppi(m | s) (k + f(s) - f(m)) \leq c$ for every $s \in \S$. Taking expectation over $s$, we see that
\begin{align*}
    c &\geq \mathbb{E}_{s \sim \ppi} \left[ \sum_{m \neq s} \ppi(m | s) (k + f(s) - f(m)) \right] \\
      & \sum_{s} \pi(s) \sum_{m \neq s} \pi(m | s) (k + f(s) - f(m)) \\
      &= \sum_{s,m} \ppi(\s,\m) (\f(\s) - \f(\m)) + k \sum_{s,m \neq s} \ppi(\s,\m) \\
      &= \mathbb{E}_{(s,m) \sim \ppi}[f(s)] - \mathbb{E}_{m \sim q}[f(m)] + k \mathbb{P}_{\ppi}(s \neq m),
\end{align*}
which establishes \eqref{eqn:equilibrium_suboptimality_ineq2}.

\subsection{Properties of the $|\S| = 2$ case} \label{apdx:S2_core_ideas}

In this section we derive some results that will be helpful for proofs that pertain to the $|\S| = 2$ setting. 

Let $S = \{L,H\}$ where $L, H$ represent the low and high types respectively with $\f(H) > \f(L)$ and $\Delta \f_{\max} = \f(H) - \f(L)$. Let $\q_L,\q_H$ denote the probabilities that a recipient has type $L, H$ respectively. In this simplified setting, we make the following observations:

\begin{enumerate}
    \item If $m = H$, choosing $s = H$ is a dominant strategy for the user. Indeed, if $m = H, s = H$ then the user receives utility $f(H)$ regardless of whether or not the administrator audits. On the other hand, if $m = H, s = L$, then the user receives $f(L)$ if the administrator does not audit, and $f(H) - k$ if the administrator audits.
    \item If $s = L$, then not auditing is a dominant strategy for the administrator. Indeed, by the first observation, $s = L$ only if $m = L$. Therefore the administrator's utility is $f(L)$ if no audit is conducted, and $f(L) - c$ if an audit is conducted. 
\end{enumerate}

From the first observation, an action $\ppi$ has optimal utility only if $\ppi(s = H | m = H) = 1$. From the second observation, we can conclude that $\sigmaa'(L) = 0$. 

We now prove the following lemmas which will be helpful in the proofs of Proposition~\ref{prop:counter-eg-nonExistence}, Proposition~\ref{prop:strongerBudgetCondition}, Theorem~\ref{thm:coalitions-ExEq}, and Theorem~\ref{thm:cost-comparison}. 

\begin{lemma}\label{lem:S=2_expected_utility}
    Let $\ppi$ be an action for which $\ppi(\s = H |\m = H) = 1$. Then the expected utility of $\ppi$ is given by
    \begin{align*}
        \mathbb{E}_{\m \sim \q, \s \sim \ppi(\cdot | \m)} \left[U^i(\sigma(\s), \m)\right] = \q_L \f(L) + \q_H \f(H) + \q_L \ppi(\s = H | \m = L) (\Delta \f_{\max} - \sigmaa(H) (k + \Delta \f_{\max}) )
    \end{align*}
\end{lemma}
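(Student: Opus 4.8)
The plan is to prove Lemma~\ref{lem:S=2_expected_utility} by direct computation, enumerating the four $(\s,\m)$ pairs and collecting terms. The assumption $\ppi(\s = H \mid \m = H) = 1$ immediately kills two of the four cases, so the expected utility reduces to a sum over the event $\{\m = H, \s = H\}$ and the event $\{\m = L\}$ (which splits into $\s = L$ and $\s = H$).

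First I would write the expected utility as
\begin{align*}
    \mathbb{E}_{\m \sim \q, \s \sim \ppi(\cdot|\m)}[U^i(\sigma(\s),\m)] &= \q_H \ppi(H|H)\, U^i(\sigma(H), H) \\
    &\quad + \q_L \ppi(L|L)\, U^i(\sigma(L), L) + \q_L \ppi(H|L)\, U^i(\sigma(H), L).
\end{align*}
For the first term, $\s = \m = H$ gives $U^i = \f(H)$ regardless of $\sigma(H)$, and $\ppi(H|H) = 1$, so it contributes $\q_H \f(H)$. For the second term, $\s = \m = L$ gives $U^i = \f(L)$ regardless of $\sigma(L)$, so it contributes $\q_L \ppi(L|L) \f(L)$. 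For the third term, $\s = H \neq \m = L$ and since $\f(H) > \f(L)$ we have $\f(\s) \geq \f(\m)$, so from~\eqref{eq:user-payoff} the payoff is $\f(L)$ if no audit occurs and $\min(\f(L),\f(H)) - k = \f(L) - k$ if an audit occurs; averaging over $a \sim \text{Bern}(\sigma(H))$ gives $\f(L) - \sigma(H) k$. Hence the third term contributes $\q_L \ppi(H|L)(\f(L) - \sigma(H) k)$.

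Next I would combine the $\s = L$ and $\s = H$ contributions for type $L$ using $\ppi(L|L) + \ppi(H|L) = 1$: the type-$L$ total is $\q_L \f(L) + \q_L \ppi(H|L)(- \sigma(H) k)$, wait --- more carefully, $\q_L \ppi(L|L)\f(L) + \q_L \ppi(H|L)(\f(L) - \sigma(H)k) = \q_L \f(L) - \q_L \ppi(H|L)\sigma(H)k$. This is not yet in the stated form, which has a $\Delta\f_{\max}$ term; the discrepancy is exactly because the stated expression adds and subtracts $\q_L \ppi(H|L)\f(H)$: indeed $\q_L\f(L) = \q_L\f(L) + \q_L\ppi(H|L)\f(H) - \q_L\ppi(H|L)\f(H)$, and regrouping $\q_L\f(L) - \q_L\ppi(H|L)\sigma(H)k = \q_L\f(L) + \q_L\ppi(H|L)\big((\f(H)-\f(L)) - \sigma(H)(k + \f(H)-\f(L))\big) + \q_L\ppi(H|L)\f(L) - \q_L\ppi(H|L)\f(H)$. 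Let me instead just verify the target directly: the claimed value is $\q_L\f(L) + \q_H\f(H) + \q_L\ppi(H|L)(\Delta\f_{\max} - \sigma(H)(k+\Delta\f_{\max}))$. Expanding the last term: $\q_L\ppi(H|L)\Delta\f_{\max} - \q_L\ppi(H|L)\sigma(H)k - \q_L\ppi(H|L)\sigma(H)\Delta\f_{\max}$. So the whole thing is $\q_L\f(L) + \q_H\f(H) + \q_L\ppi(H|L)\Delta\f_{\max}(1 - \sigma(H)) - \q_L\ppi(H|L)\sigma(H)k$. Comparing with my computed value $\q_L\f(L) + \q_H\f(H) - \q_L\ppi(H|L)\sigma(H)k$, these agree iff $\q_L\ppi(H|L)\Delta\f_{\max}(1-\sigma(H)) = 0$, which is false in general --- so I must have mishandled the no-audit case: when $\s = H$, $\m = L$, and no audit, the user receives $\f(\s) = \f(H)$, not $\f(L)$. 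Correcting this: the type-$L$, $\s=H$ payoff averaged over the audit is $(1-\sigma(H))\f(H) + \sigma(H)(\f(L) - k) = \f(H) - \sigma(H)(\Delta\f_{\max} + k)$, and then the type-$L$ total becomes $\q_L\ppi(L|L)\f(L) + \q_L\ppi(H|L)(\f(H) - \sigma(H)(\Delta\f_{\max}+k)) = \q_L\f(L) + \q_L\ppi(H|L)(\Delta\f_{\max} - \sigma(H)(\Delta\f_{\max}+k))$, which matches exactly.

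The main obstacle, as the above scratch-work shows, is simply bookkeeping the payoff~\eqref{eq:user-payoff} correctly in the no-audit branch (where the user is paid on the \emph{reported} type $\f(\s)$, which is $\f(H) > \f(L)$ here), together with correctly reducing the four-case sum using $\ppi(H|H)=1$ and $\ppi(L|L) = 1 - \ppi(H|L)$. There is no conceptual difficulty --- no appeal to equilibrium, no optimization --- so the write-up will be a short, careful case analysis: state the reduction, compute each of the three surviving terms, and collect. I would present it as roughly a half-page computation, flagging explicitly that the $\s = H, \m = L$ no-audit payoff is $\f(H)$ and the audit payoff is $\f(L) - k$ (using $\f(H) > \f(L)$ so that $\min(\f(H),\f(L)) = \f(L)$ and $[\f(H)-\f(L)]_+ = \Delta\f_{\max}$ from the expressions in~\eqref{eqn:sg_user_utility_fn}).
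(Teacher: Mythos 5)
Your final computation is correct and is essentially identical to the paper's proof: a direct three-term expansion using $\ppi(\s=H\mid\m=H)=1$, the payoff $\f(H)-\sigmaa(H)(k+\Delta\f_{\max})$ for the $\s=H,\m=L$ case, and the identity $\ppi(\s=L\mid\m=L)=1-\ppi(\s=H\mid\m=L)$ to collect terms. The only caveat is that your initial mishandling of the no-audit payoff (paying $\f(L)$ instead of $\f(\s)=\f(H)$) should not survive into the final write-up; you caught and corrected it, so the argument as it stands at the end matches the paper.
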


\begin{proof}[Proof of Lemma \ref{lem:S=2_expected_utility}]
    We show the result by direct calculation. By definition, we have 
    \begin{align*}
        &\mathbb{E}_{\m \sim \q, \s \sim \ppi(\cdot | \m)} \left[U^i(\sigma(\s), \m)\right] \\
        &= \q_L \ppi(\s = L | \m = L) \f(L) + \q_L \ppi(\s = H | \m = L) \left( \f(H) + \sigmaa(H) (\f(L) - \f(H) - k) \right) + \q_H \f(H) \\
        &= \q_L \f(L) + \q_H \f(H) + \q_L \ppi(\s = H | \m = L) (\Delta \f_{\max} - \sigmaa(H) (k + \Delta \f_{\max}) ).
    \end{align*}
    We used the equality $\ppi(\s = L | \m = L) = 1 - \ppi(\s = H | \m = L)$ to go from the second line to the third line. 
\end{proof}

\begin{lemma}\label{lem:S=2_audit_rule}
    We characterize the administrator's best response $\sigmaa$. First, from the second observation, we have $\sigmaa(L) = 0$. Furthermore, 
    
\begin{align*}
    \sigmaa(H) &= \left \{ 
    \begin{tabular}{cc}
        $0$ & if $\ppi(\s = H | \m = L) \leq \frac{\q_H c}{\q_L(k - c + \Delta \f_{\max})}$ \\
        $\min(1,\frac{B}{c})$ & otherwise.
    \end{tabular}
    \right.
\end{align*}

\end{lemma}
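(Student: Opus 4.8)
The plan is to prove Lemma~\ref{lem:S=2_audit_rule} by directly specializing the administrator's best-response rule~\eqref{eqn:p2_best_resp} to the two-type setting and then accounting for the budget constraint. The claim that $\sigmaa(L)=0$ is already justified by the second observation preceding the lemma statement, so the work is entirely about characterizing $\sigmaa(H)$.

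First I would plug $\s=H$ into the no-audit condition of~\eqref{eqn:p2_best_resp}. Since the only type $\m\neq H$ is $\m=L$, the sum over $\m\neq\s$ collapses to the single term with $\m=L$, and the sum over all $\m$ has two terms. The condition reads
\begin{align*}
\ppi(H|L)\,\q_L\bigpar{k+\bigbra{\f(H)-\f(L)}_+}\leq c\bigpar{\ppi(H|L)\,\q_L+\ppi(H|H)\,\q_H}.
\end{align*}
Using the first observation, at any candidate optimal user strategy $\ppi(H|H)=1$, so the right-hand side becomes $c\bigpar{\ppi(H|L)\,\q_L+\q_H}$; also $\bigbra{\f(H)-\f(L)}_+=\Delta\f_{\max}$. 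Rearranging,
\begin{align*}
\ppi(H|L)\,\q_L\bigpar{k+\Delta\f_{\max}-c}\leq c\,\q_H
\quad\Longleftrightarrow\quad
\ppi(H|L)\leq\frac{\q_H c}{\q_L(k-c+\Delta\f_{\max})},
\end{align*}
which is exactly the threshold stated in the lemma. When this inequality holds strictly, the administrator strictly prefers not to audit, so $\sigmaa(H)=0$; when it holds with equality the administrator is indifferent, but by the tie-breaking convention discussed around~\eqref{eqn:p2_best_resp} (auditing at $\sigmaa(s)=0$ on the equality set, as required for equilibrium existence), we still take $\sigmaa(H)=0$, so the first case covers $\ppi(H|L)\leq$ the threshold.

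In the complementary case, the unconstrained best response from~\eqref{eqn:p2_best_resp} is to audit with probability $1$; however, with a budget $B$, the administrator can only afford to audit with probability $\min(1,B/c)$, interpreting a fractional budget $B<c$ as the probability of conducting an audit (as stated in the modeling discussion). I would note that auditing as much as the budget allows is indeed the constrained optimum here, since $\mathcal{U}^A$ is affine and increasing in $\sigmaa(H)$ on this region (the coefficient of $\sigmaa(H)$ is positive precisely because~\eqref{eqn:p2_best_resp}'s inequality is violated), so the maximizer is at the largest feasible value $\min(1,B/c)$. This gives the second case and completes the characterization. I do not anticipate a genuine obstacle: the only subtlety is being careful about the tie-breaking at equality and the correct interpretation of the budget as an audit probability, both of which are already fixed by conventions established earlier in the paper.
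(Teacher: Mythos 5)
Your proposal is correct and follows essentially the same route as the paper's proof: both specialize the best-response rule~\eqref{eqn:p2_best_resp} to $\s=H$, use the dominant-strategy fact $\ppi(H|H)=1$ to collapse the sums, rearrange to obtain the stated threshold, and observe that when the threshold is violated the administrator's utility is affine and increasing in $\sigmaa(H)$, so it audits with the largest feasible probability $\min(1,B/c)$. Your extra remarks on tie-breaking at equality and the fractional-budget interpretation are consistent with the conventions the paper establishes.
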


\begin{proof}[Proof of Lemma \ref{lem:S=2_audit_rule}]
From \eqref{eqn:p2_best_resp}, note that the administrator's utility is an affine function of $\sigmaa(H)$. The coefficient of $\sigmaa(H)$ is non-positive (i.e., $\sigmaa(H) = 0$) if and only if

\begin{align*}
    k \q_L \ppi(s = H | m = L) &\leq  \q_L \ppi(s = H | m = L) (c - \Delta \f_{\max}) + \q_H \ppi (s = H | m = H) c \\
    \iff \q_L \ppi (s = H | m = L) (k - c + \Delta \f_{\max}) &\leq \q_H c \\
    \iff \ppi (s = H | m = L) &\leq \frac{\q_H c}{\q_L(k - c + \Delta \f_{\max})}.
\end{align*}

If this condition is not satisfied, then the administrator will make $\sigmaa(H)$ as large as possible. 
\end{proof}

\begin{lemma}\label{lem:S=2_no_audit_optimum}
    Let $\Pi$ be the set of all user strategies for which the administrator's best response is to never audit, i.e., $\sigmaa(H) = \sigmaa(L) = 0$. The optimal user utility obtainable by a strategy in $\Pi$ is
    \begin{align*}
        \max_{\ppi \in \Pi} \mathbb{E}_{\m \sim \q, \s \sim \ppi(\cdot | \m)} \left[U^i(\sigma(\s), \m)\right] = \q_L \f(L) + \q_H \f(H) + \q_L \min \left(1, \frac{\q_H c}{\q_L(k - c + \Delta \f_{\max})}\right) \Delta \f_{\max}. 
    \end{align*}
    Furthermore, the unique strategy $\ppi' \in \Pi$ which attains this optimal utility is given by
    \begin{align*}
        \ppi'(\s = H | \m = L) &:= \min \left(1, \frac{\q_H c}{\q_L(k - c + \Delta \f_{\max})}\right), \\
        \ppi'(\s = H | \m = H) &:= 1.
    \end{align*}
\end{lemma}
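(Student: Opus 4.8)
\textbf{Proof plan for Lemma~\ref{lem:S=2_no_audit_optimum}.}

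The plan is to use Lemma~\ref{lem:S=2_expected_utility} to reduce the optimization over $\ppi \in \Pi$ to a one-dimensional problem, and then use Lemma~\ref{lem:S=2_audit_rule} to identify the feasible region of that one-dimensional problem. First I would recall that, from the two observations preceding Lemma~\ref{lem:S=2_expected_utility}, any strategy with optimal utility has $\ppi(\s = H | \m = H) = 1$, so we may restrict attention to such strategies; then the only free parameter is $p := \ppi(\s = H | \m = L) \in [0,1]$. Since every $\ppi \in \Pi$ induces the administrator's best response $\sigmaa(H) = \sigmaa(L) = 0$, Lemma~\ref{lem:S=2_expected_utility} gives that the user's expected utility under such a $\ppi$ equals $\q_L \f(L) + \q_H \f(H) + \q_L\, p\, \Delta \f_{\max}$, which is strictly increasing in $p$ because $\Delta\f_{\max} > 0$ and $\q_L$ may be taken positive (the degenerate case $\q_L = 0$ makes the claim trivial, since then the utility is constant).

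Next I would characterize which values of $p$ are compatible with membership in $\Pi$. By Lemma~\ref{lem:S=2_audit_rule}, the administrator's best response has $\sigmaa(H) = 0$ if and only if $p \leq \frac{\q_H c}{\q_L(k - c + \Delta \f_{\max})}$; intersecting with $p \in [0,1]$, the feasible set for $p$ is the interval $[0, \bar p]$ where $\bar p := \min\!\left(1, \frac{\q_H c}{\q_L(k - c + \Delta \f_{\max})}\right)$. Since the objective is strictly increasing in $p$, the supremum over $\Pi$ is attained uniquely at $p = \bar p$, giving the optimal value $\q_L \f(L) + \q_H \f(H) + \q_L \bar p\, \Delta \f_{\max}$, which is exactly the claimed expression. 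The unique maximizing strategy is then $\ppi'(\s = H | \m = L) = \bar p$ and $\ppi'(\s = H | \m = H) = 1$; uniqueness follows because (i) any optimal strategy must set $\ppi(\s = H | \m = H) = 1$ by the dominant-strategy observation, and (ii) on the restricted family the objective is a strictly monotone function of the single remaining parameter $p$.

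I do not anticipate a serious obstacle here — the lemma is essentially bookkeeping once Lemmas~\ref{lem:S=2_expected_utility} and~\ref{lem:S=2_audit_rule} are in hand. The one point that needs a little care is the boundary behavior in Lemma~\ref{lem:S=2_audit_rule}: when $p$ exactly equals $\frac{\q_H c}{\q_L(k - c + \Delta \f_{\max})}$, the administrator is indifferent, and one must confirm (consistent with the convention adopted throughout the paper, and with Ingredient (i) in Section~\ref{subsec:misreportingSingle:key-ideas}) that $\sigmaa(H) = 0$ is the selection at this tie, so that $p = \bar p$ indeed lies in $\Pi$ and the maximum is attained rather than merely approached. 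A secondary minor point is to handle $\q_L = 0$ separately (trivial) and to note that the denominator $k - c + \Delta\f_{\max}$ is positive since $k \geq c$ and $\Delta\f_{\max} > 0$, so the ratio defining $\bar p$ is well-defined and nonnegative.
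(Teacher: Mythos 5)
Your proposal is correct and follows essentially the same route as the paper's proof: restrict to $\ppi(\s=H\mid \m=H)=1$ by the dominant-strategy observation, use Lemma~\ref{lem:S=2_audit_rule} to identify the feasible interval for $\ppi(\s=H\mid \m=L)$, and maximize the affine objective from Lemma~\ref{lem:S=2_expected_utility} at the right endpoint. Your extra attention to the tie-breaking convention at the boundary and the degenerate case $\q_L=0$ is a harmless refinement the paper leaves implicit.
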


\begin{proof}[Proof of Lemma \ref{lem:S=2_no_audit_optimum}]

Let $\ppi'$ be any optimal user utility strategy in $\Pi$, and let $\sigmaa'$ be the administrator's best response to $\ppi'$. From the first observation, we can conclude that $\ppi'(s = H | m = H) = 1$ and $\ppi' (s = L | m = H) = 0$. 

From Lemma \ref{lem:S=2_audit_rule}, $\ppi \in \Pi$ if and only if $\sigmaa(H = 0)$, which gives

\begin{align*}
\ppi (s = H | m = L) &\leq \frac{\q_H c}{\q_L(k - c + \Delta \f_{\max})}.
\end{align*}

From Lemma \ref{lem:S=2_expected_utility}, the expected utility of a strategy $\ppi \in \Pi$ is given by 

\begin{align*}
    &\mathbb{E}_{\m \sim \q, \s \sim \ppi(\cdot | \m)} \left[ \f(s) + \sigma(\s) \mathbb{I}_{[\s \neq \m]} (\f(m) - \f(\s) - k) \right] \\
    &= \q_L \left( \ppi(\s = L | \m = L) \f(L) + \ppi(\s = H | \m = L) \f(H) \right) + \q_H \f(H) \\
    &= \q_L \f(L) + \q_H \f(H) + \q_L \ppi(\s = H | \m = L) \Delta \f_{\max}.
\end{align*}

This reveals that the expected utility is an affine function of $\ppi(\s = H | \m = L)$. To maximize expected utility, $\ppi(\s = H | \m = L)$ should be maximized subject to the constraints imposed by $\Pi$, which also means that the optimal policy is unique. Hence, $\ppi'(\s = H | \m = L) = \min(1, \frac{\q_H c}{\q_L(k - c + \Delta \f_{\max})})$. 

\end{proof}

\subsection{Proof of Proposition~\ref{prop:counter-eg-nonExistence}} \label{apdx:pfcounter-eg-nonExistence}

Let $B < \c \frac{\Delta \f_{\max} \big( 1 - \min \{ \frac{q_{\max} \c}{q_{\min} (\k - \c + \Delta \f_{\max})} , 1\} \big)}{k+\Delta \f_{\max}}$. Consider a setting with two users, i.e., $n = 2$. Let $S = \{L,H\}$ where $L, H$ represent the low and high types respectively with $\f(H) > \f(L)$ and $\Delta \f_{\max} = \f(H) - \f(L)$. Let $\q_L,\q_H$ denote the probabilities that a recipient has type $L, H$ respectively. 

The proof has three main steps:

\begin{enumerate}
    \item First, we show that the best response of the administrator is to audit the user $i'$ which most frequently misreports their type, i.e., $i' = \arg\max_{i \in \{1,2\}} \ppi_i(\s = H | \m = L)$. 
    \item Second, we show that because the administrator's budget is very small, the audit penalty is not sufficient to deter users from misreporting their preferences. In other words, $\ppi(\s = H | \m = L) > \frac{\q_H c}{\q_L(k - c + \Delta \f_{\max})}$ gives users more utility than $\ppi(\s = H | \m = L) = \frac{\q_H c}{\q_L(k - c + \Delta \f_{\max})}$. 
    \item Third, we show that the first two steps leads to a race-to-the-bottom behavior. By the second step, both users want to misreport their preferences to increase their utility. However, the first step establishes that the administrator will audit the user which misreports most frequently. Thus both users want a misreporting probability that is positive but also smaller than the other user's misreporting probability. Such settings are a classical example where equilibria do not exist. 
\end{enumerate}

\hypertarget{audit_rule_max_violator}{\textit{Step 1:}} We show in the beginning of Appendix~\ref{apdx:S2_core_ideas} that choosing $\s = H$ is a dominant strategy for the user whenever $\m = H$. As a consequence, the administrator should never audit if $s = L$. With this in mind, when $\ppi(\s = H | \m = H) = 1$ and $\sigmaa(L) = 0$, the expected utility of the administrator when interacting with user $i \in \{1,2\}$ is given by:

\begin{align*}
    &\q_L \ppi_i(\s = L | \m = L) \f(L) \\
    &+ \q_L \ppi_i(\s = H | \m = L) \left( \f(H) + \sigmaa_i(H) \left( -c + \mathbb{I}_{[\s \neq \m]} (k + \f(H) - \f(L) )\right) \right) \\
    &+ \q_H \ppi_i(\s = H | \m = H) \left( \f(H) + \sigmaa_i(H) \left( -c + \mathbb{I}_{[\s \neq \m]} (k + \f(H) - \f(L) )\right) \right) \\
    & \\
    &= \q_L \ppi_i(\s = L | \m = L) \f(L) + \q_L \ppi_i(\s = H | \m = L) \f(H) + \q_H \f(H) \\
    &+ \sigmaa_i(H) \left(\q_L \ppi_i(\s = H | \m = L) (k + \f(H) - \f(L)) - \left[\q_H + \q_L \ppi_i(\s = H | \m = L)\right] c \right).
\end{align*}

Let $\rho_i := \q_L \ppi_i(\s = H | \m = L) (k + \f(H) - \f(L)) - \left[\q_H + \q_L \ppi_i(\s = H | \m = L)\right] c$. The important observation here is that the administrator's utility when interacting with user $i$ is affine in $\sigmaa_i(H)$. Furthermore, the administrator will prioritize auditing the user whose coefficient of $\sigmaa_i(H)$ is largest and positive. In other words, the administrator will audit user $i'$ where 
\begin{align*}
    i' = \arg\max_{i \in \{1,2\}} \rho_i
\end{align*}
and if $\rho_{i'} > 0$. 

Equivalently, the administrator will audit user $i'$ if 
\begin{align*}
    i' \in \arg\max_{i \in \{1,2\}} \rho_i \text{ and } \ppi_i(\s = H | \m = L) > \frac{\q_H c}{\q_L(k - c + \Delta \f_{\max})}.
\end{align*}

\textit{Step 2:} Using Lemma \ref{lem:S=2_expected_utility} from Appendix \ref{apdx:S2_core_ideas}, the expected utility received by user $i$ from action $\ppi$ is given by
\begin{align*}
    \mathbb{E}_{\m \sim \q, \s \sim \ppi_i(\cdot | \m)} \left[U^i(\sigma(\s), \m)\right] = \q_L \f(L) + \q_H \f(H) + \q_L \ppi(\s = H | \m = L) (\Delta \f_{\max} - \sigmaa_i(H) (k + \Delta \f_{\max}) )
\end{align*}

Note that for a fixed value of $\sigmaa_i^*(H)$, the expected utility is an affine function of $\ppi_i(\s = H | \m = L)$. Furthermore, according to Lemma \ref{lem:S=2_audit_rule}, 

\begin{align*}
    \sigmaa_i(H) &= \left \{ 
    \begin{tabular}{cc}
        $0$ & if $\ppi_i(\s = H | \m = L) \leq \frac{\q_H c}{\q_L(k - c + \Delta \f_{\max})}$ \\
        $\frac{B}{c}$ & otherwise.
    \end{tabular}
    \right.
\end{align*}

Hence the optimal value of $\mathbb{E}_{\m \sim \q, \s \sim \ppi(\cdot | \m)} \left[U^i(\sigma(\s), \m)\right]$ must be attained at $ \ppi_i(\s = H | \m = L) \in \{ 0, \frac{\q_H c}{\q_L(k - c + \Delta \f_{\max})}, 1 \}$. From Lemma \ref{lem:S=2_no_audit_optimum}, we know that $0$ cannot be the optimum. For $\ppi_i(\s = H | \m = L) = 1$, we note that

\begin{align*}
    &\mathbb{E}_{\m \sim \q, \s \sim \ppi(\cdot | \m)} \left[U^i(\sigma(\s), \m)\right] \\
    &= \q_L \f(L) + \q_H \f(H) + \q_L \left(\Delta \f_{\max} - \frac{B}{c} (k + \Delta \f_{\max}) \right) \\
    &> \q_L \f(L) + \q_H \f(H) + \q_L \left(\Delta \f_{\max} - \frac{\Delta \f_{\max} \big( 1 - \min \{ \frac{\q_H \c}{\q_L (\k - \c + \Delta \f_{\max})} , 1\} \big) (k + \Delta \f_{\max})}{k+\Delta \f_{\max}} \right) \\
    &= \q_L \f(L) + \q_H \f(H) + \q_L \left(\Delta \f_{\max} - \Delta \f_{\max} \big( 1 - \min \{ \frac{\q_H \c}{\q_L (\k - \c + \Delta \f_{\max})} , 1\} \big) \right) \\
    &= \q_L \f(L) + \q_H \f(H) + \q_L \left(\Delta \f_{\max} \min \{ \frac{\q_H \c}{\q_L (\k - \c + \Delta \f_{\max})} , 1\} \right) 
\end{align*}
which shows that $\ppi_i(\s = H | \m = L) = 1$ has greater utility than $\ppi_i(\s = H | \m = L) = \frac{\q_H c}{\q_L(k - c + \Delta \f_{\max})}$, since the final line is the expected utility obtained by the user when $\ppi_i(\s = H | \m = L) = \frac{\q_H c}{\q_L(k - c + \Delta \f_{\max})}$ (established by Lemma \ref{lem:S=2_no_audit_optimum}).

\textit{Step 3:} For notational convenience, let $p_i := \ppi_i(\s = H | \m = L)$. Consider any action profile $(p_1,p_2)$ for the players. We will show that no equilibrium exists by showing that there is always at least one player that can improve its utility by changing its probability of misreporting. There are three cases to consider:
\begin{enumerate}
    \item There exists $i$ for which $p_i < \frac{\q_H c}{\q_L(k - c + \Delta \f_{\max})}$. By Lemma \ref{lem:S=2_no_audit_optimum}, user $i$ can improve its utility by increasing $p_i$ to $\frac{\q_H c}{\q_L(k - c + \Delta \f_{\max})}$. By step 2 of this proof, user $i$ can further increase its utility by increasing $p_i$ from $\frac{\q_H c}{\q_L(k - c + \Delta \f_{\max})}$ to $1$. Hence user $i$ can improve its utility by unilaterally changing its action. 

    \item $p_1,p_2 > \frac{\q_H c}{\q_L(k - c + \Delta \f_{\max})}$ and $p_1 \neq p_2$. Assume without loss of generality that $p_1 < p_2$. Due to step 1, we know that user 1 will not be audited (i.e., $\sigmaa_1(H) = 0$) provided that $p_1 \in [0,p_2)$. From Lemma \ref{lem:S=2_expected_utility}, when $\sigmaa_1(H) = 0$, the expected utility of user $1$ is an increasing and affine function of $p_1$. Therefore user 1 can unilaterally improve its utility by changing $p_1$ to any value in the open interval $(p_1,p_2)$.

    \item $p_1,p_2 > \frac{\q_H c}{\q_L(k - c + \Delta \f_{\max})}$ and $p_1 = p_2$. In this case, it is possible that both $\sigma_1(H)$ and $\sigma_2(H)$ are positive. Due to the budget constraint, we have $\sigmaa_1(H) + \sigmaa_2(H) = \frac{B}{c}$. Without loss of generality suppose $\sigmaa_1(H) > \sigmaa_2(H)$, implying that $\sigmaa_1(H) \geq \frac{B}{2c}$. From Lemma \ref{lem:S=2_expected_utility}, the expected utility for user 1 is:
    \begin{align*}
        & \q_L \f(L) + \q_H \f(H) + \q_L p_1 \left( \Delta \f_{\max} - \sigmaa_1(H)(k + \Delta \f_{\max}) \right) \\
        &\leq \q_L \f(L) + \q_H \f(H) + \q_L p_1 \left( \Delta \f_{\max} - \frac{B}{2c}(k + \Delta \f_{\max}) \right).
    \end{align*}
    By the reasoning of step 1, if user $1$ chooses a misreporting probability $\pi_1(\s = H | \m = L) < p_1$, then they will no longer be audited. Thus user $1$ can reduce $p_1$ by an infintisimal amount so that it is no longer tied for having the highest misreporting probability, thereby eliminating its audit penalty. In particular, changing $p_1$ to $p_1' < p_2$ changes user $1$'s audit probability from $\sigmaa_1(H)$ to $\sigmaa_1'(H) = 0$. User $1$'s new utility after this change is thus
    \begin{align*}
        \q_L \f(L) + \q_H \f(H) + \q_L p_1' \Delta \f_{\max}.
    \end{align*}
    This change leads to a utility improvement if  
    \begin{align*}
        \q_L p_1' \Delta \f_{\max} &> \q_L p_1 \left( \Delta \f_{\max} - \frac{B}{2c}(k + \Delta \f_{\max}) \right) \\
        \iff p_1' &> p_1 \frac{\Delta \f_{\max} - \frac{B}{2c}(k + \Delta \f_{\max})}{ \Delta \f_{\max} }.
    \end{align*}
    Thus user 1 can unilaterally improve its utility by changing $p_1$ to any value in the open interval $\left( p_2 \frac{\Delta \f_{\max} - \frac{B}{2c}(k + \Delta \f_{\max})}{ \Delta \f_{\max} } , p_2 \right)$.
\end{enumerate}

\subsection{Proof of Theorem~\ref{thm:eqEx-suff-budget}} \label{apdx:pfEx-suff-budget}

The proof is nearly identical to that of Lemma \ref{lem:signalingeqStrategy}. The only difference is regarding inequality \hyperlink{eqn:truth_noninferiority}{$(e)$}. In the original proof we used the fact that $\sigmaa^*(\s') = 1$ to justify inequality \hyperlink{eqn:truth_noninferiority}{$(e)$}. However, in the budgeted setting, if $B < c$, then $\sigmaa^*(s) \in [0, \frac{B}{c}]$ for all $\s \in \S$, and hence $\sigmaa^*(\s)$ can never be $1$. Thus the budgeted setting requires slightly more careful analysis, which will leverage the fact that inequality \hyperlink{eqn:truth_noninferiority}{$(e)$} hold whenever $\sigma^*(\s') \geq \frac{\f(\s') - \f(\m)}{k + \f(\s') - \f(\m)}$. 

To begin, first note that whenever $B < c$, the administrator's best response is given by the following modification of \eqref{eqn:p2_best_resp}:

\begin{align*}
    \sigmaa^*(\s ; \B) &= 
    \left\{ 
        \begin{tabular}{cc}
            $0$ & if $\k \sum_{\m \in \S \backslash \{\s\}} \ppi(\s|\m) q_{\m} \leq \sum_{\m \in \S} q_{\m} \ppi(\s|\m) (\c+\f(\m)-\f(\s))$ \\
            $\frac{B}{c}$ & otherwise. 
        \end{tabular}
    \right.
\end{align*}

Next, note that whenever $B \geq \frac{c \Delta \f_{\max}}{k + \Delta \f_{\max}}$, for any $\s,\m \in\S$ with $\f(\s) \geq \f(\m)$ we have:

\begin{align*}
    \frac{B}{c} &\geq \frac{\Delta \f_{\max}}{k + \Delta \f_{\max}} \\
    &= 1 - \frac{k}{k + \Delta \f_{\max} } \\
    &\geq 1 - \frac{k}{k + \f(\s) - \f(\m)} \\
    &= \frac{\f(\s) - \f(\m)}{k + \f(\s) - \f(\m)}.
\end{align*}
Thus $\sigmaa^*(\s;B) \geq \frac{\f(\s) - \f(\m)}{k + \f(\s) - \f(\m)}$ for all $\s$. In particular, it holds for $\s'$, and therefore inequality \hyperlink{eqn:truth_noninferiority}{$(e)$} holds even in the budgeted setting provided $B \geq \frac{c \Delta \f_{\max}}{k + \Delta \f_{\max}}$. The remainder of the proof follows identically to the proof of Lemma \ref{lem:signalingeqStrategy}. 

\subsection{Proof of Proposition~\ref{prop:strongerBudgetCondition}} \label{apdx:pfStrongerBudgetCondition}

We will show that under the condition $B \geq \c \frac{\Delta \f_{\max} \big( 1 - \min \{ \frac{q_{\max} \c}{q_{\min} (\k - \c + \Delta \f_{\max})} , 1\} \big)}{k+\Delta \f_{\max}}$, an equilibrium exists where the administrator does not audit, and all users's strategies satisfy constraints \eqref{eqn:p2_best_resp}. Throughout the proof we will use some results from Appendix \ref{apdx:S2_core_ideas}. 

Let $S = \{L,H\}$ where $L, H$ represent the low and high types respectively with $\f(H) > \f(L)$ and $\Delta \f_{\max} = \f(H) - \f(L)$. Let $\q_L,\q_H$ denote the probabilities that a recipient has type $L, H$ respectively. 

First note that the budgeted setting is equivalent to the unlimited budget setting if $B \geq c$, in which case Theorem \ref{thm:main} establishes existence and computability of equilibria. Thus in the rest of the proof we consider the $B < c$ case. 

We show in the beginning of Appendix~\ref{apdx:S2_core_ideas} that choosing $\s = H$ is a dominant strategy for the user whenever $\m = H$. Therefore all that remains is to determine the optimal value for $\ppi(\s = H | \m = L)$.

Using Lemma \ref{lem:S=2_expected_utility} from Appendix \ref{apdx:S2_core_ideas}, the expected utility of any user action $\ppi$ is given by
\begin{align*}
    \mathbb{E}_{\m \sim \q, \s \sim \ppi(\cdot | \m)} \left[U^i(\sigma(\s), \m)\right] = \q_L \f(L) + \q_H \f(H) + \q_L \ppi(\s = H | \m = L) (\Delta \f_{\max} - \sigmaa(H) (k + \Delta \f_{\max}) )
\end{align*}

Note that for a fixed value of $\sigmaa(H)$, the expected utility is an affine function of $\ppi(\s = H | \m = L)$. Furthermore, according to Lemma \ref{lem:S=2_audit_rule}, 

\begin{align*}
    \sigmaa(H) &= \left \{ 
    \begin{tabular}{cc}
        $0$ & if $\ppi(\s = H | \m = L) \leq \frac{\q_H c}{\q_L(k - c + \Delta \f_{\max})}$ \\
        $\frac{B}{c}$ & otherwise.
    \end{tabular}
    \right.
\end{align*}

Hence the optimal value of $\mathbb{E}_{\m \sim \q, \s \sim \ppi(\cdot | \m)} \left[U^i(\sigma(\s), \m)\right]$ must be attained at $ \ppi(\s = H | \m = L) \in \{ 0, \frac{\q_H c}{\q_L(k - c + \Delta \f_{\max})}, 1 \}$. From Lemma \ref{lem:S=2_no_audit_optimum}, we know that $0$ cannot be the optimum. For $\ppi(\s = H | \m = L) = 1$, we note that

\begin{align*}
    &\mathbb{E}_{\m \sim \q, \s \sim \ppi(\cdot | \m)} \left[U^i(\sigma(\s), \m)\right] \\
    &= \q_L \f(L) + \q_H \f(H) + \q_L \left(\Delta \f_{\max} - \frac{B}{c} (k + \Delta \f_{\max}) \right) \\
    &\leq \q_L \f(L) + \q_H \f(H) + \q_L \left(\Delta \f_{\max} - \frac{\Delta \f_{\max} \big( 1 - \min \{ \frac{q_{\max} \c}{q_{\min} (\k - \c + \Delta \f_{\max})} , 1\} \big) (k + \Delta \f_{\max})}{k+\Delta \f_{\max}} \right) \\
    &= \q_L \f(L) + \q_H \f(H) + \q_L \left(\Delta \f_{\max} - \Delta \f_{\max} \big( 1 - \min \{ \frac{q_{\max} \c}{q_{\min} (\k - \c + \Delta \f_{\max})} , 1\} \big) \right) \\
    &= \q_L \f(L) + \q_H \f(H) + \q_L \left(\Delta \f_{\max} \min \{ \frac{q_{\max} \c}{q_{\min} (\k - \c + \Delta \f_{\max})} , 1\} \right) 
\end{align*}
The final line is the expected utility obtained when $\ppi(\s = H | \m = L) = \frac{\q_H c}{\q_L(k - c + \Delta \f_{\max})}$, due to Lemma \ref{lem:S=2_no_audit_optimum}. The inequality is due to the lower bound on $B$ in the condition of Proposition~\ref{prop:strongerBudgetCondition}. 

Hence the optimal user action is given by

\begin{align*}
    \ppi(\s = H | \m = L) &= \frac{\q_H c}{\q_L(k - c + \Delta \f_{\max})}, \\
    \ppi(\s = H | \m = H) &= 1
\end{align*}
and by Lemma \ref{lem:S=2_no_audit_optimum}, the administrator's best response is to never audit. 

\subsection{Proof of Theorem~\ref{thm:coalitions-ExEq}} \label{apdx:pfCoalitions-ExEq}

Let $S = \{L,H\}$ where $L, H$ represent the low and high types respectively with $\f(H) > \f(L)$ and $\Delta \f_{\max} = \f(H) - \f(L)$. Let $\q_L,\q_H$ denote the probabilities that a recipient has type $L, H$ respectively. 

Consider the strategy profile $\ppi^* := \{ \ppi_i^*\}_{i \in [n]}$ where users play the following strategy:
\begin{align*}
    \ppi_i^*(\s = H | \m = L) &= \frac{\q_H c}{\q_L(k - c + \Delta \f_{\max})},\\
    \ppi_i^*(\s = H | \m = H) &= 1.
\end{align*}
By Lemma \ref{lem:S=2_audit_rule}, the administrator's best response is to never audit. 

We will show that the strategy profile $\ppi^*$ is an equilibrium under $l$-coalitions by showing that no coalition of size at most $l$ can unilaterally deviate and receive pareto-improvements to their utilities. Let $\mathcal{C}$ be any coalition of size at most $l$, and consider any action profile $p = \{p_i\}_{i \in \mathcal{C}}$ where $p_i := \ppi_i(\s = H | \m = L)$. There are two cases to consider.

\textit{Case 1:} If $\max_{i \in \mathcal{C}} p_i \leq \frac{\q_H c}{\q_L(k - c + \Delta \f_{\max})}$, then by Lemma \ref{lem:S=2_audit_rule}, the administrator will not audit any of the users in $\mathcal{C}$, but by Lemma \ref{lem:S=2_no_audit_optimum}, none of the users in $\mathcal{C}$ have better utility than they would have if they play according to $\ppi^*$. Hence $p$ does not lead to pareto-improvement for all members of the coalition.

\textit{Case 2:} If $\max_{i \in \mathcal{C}} p_i > \frac{\q_H c}{\q_L(k - c + \Delta \f_{\max})}$, then define the set $\mathcal{C}' := \arg\max_{i \in \mathcal{C}} p_i$. We showed in \hyperlink{audit_rule_max_violator}{step 1} of the proof of Proposition \ref{prop:counter-eg-nonExistence} that by maximizing its utility, it prioritizes auditing users with the largest value of $\ppi_i(\s = H | \m = L)$. Since non-coalition users are playing according to $\ppi^*$ and are therefore never audited by the administrator, the administrator prioritize its budget to audit the users in $\mathcal{C}'$. Depending on how large the administrator's budget is, there are two things that can happen: 
\begin{itemize}
    \item If $B > c|\mathcal{C'} |$, then $\sigmaa_i(H) = 1$ for all $i \in \mathcal{C}'$, i.e., the administrator can fully audit every member of $\mathcal{C}'$.
    \item If $B \leq c |\mathcal{C'}|$, then $\sum_{i \in \mathcal{C}'} \sigmaa_i(H) = \frac{B}{c}$, i.e., the administrator does not have enough budget to fully audit all members of $\mathcal{C}'$, but nevertheless it will spend its entire budget to audit $\mathcal{C}'$. In particular, there must exist some $j \in \mathcal{C}'$ for which $\sigmaa_j(H) \geq \frac{B}{c |\mathcal{C}'| }$. 
\end{itemize}
Regardless of whether $B$ is bigger than $c |\mathcal{C'}|$, there exists some $j \in \mathcal{C}'$
for which $p_j = \ppi_j(\s = H | \m = L) > \frac{\q_H c}{\q_L(k - c + \Delta \f_{\max})}$ and $\sigmaa_j(H) \geq \frac{B}{c |\mathcal{C}'| }$. 

If $B \geq l c \frac{\Delta \f_{\max}}{k + \Delta \f_{\max}}$ as specified by the conditions of Theorem~\ref{thm:coalitions-ExEq}, then 
\begin{align*}
    \sigmaa_j(H) \geq \frac{B}{c |\mathcal{C}'| } \geq \frac{l}{ |\mathcal{C}'| } \frac{\Delta \f_{\max}}{k + \Delta \f_{\max}} \geq \frac{\Delta \f_{\max}}{k + \Delta \f_{\max}}.
\end{align*}
In other words, user $j$ is violating constraint \ref{eqn:p2_best_resp}, and the administrator has allocated a budget of at least $c \frac{\Delta \f_{\max}}{k + \Delta \f_{\max}}$ to audit user $j$. Theorem \ref{thm:eqEx-suff-budget} thus implies that user $j$ has a lower utility under $p$ than it would have if it followed $\ppi^*$. Therefore $p$ does not provide a pareto-improvement in utility for all members of $\mathcal{C}$. 

Since we showed that no coalition of at most $l$ users can unilaterally deviate from $\ppi^*$ to obtain a pareto-improvement to their utilities, $\ppi^*$ is an equilibrium under $l$-coalitions, which establishes the desired result. 

\subsection{Proof of Theorem~\ref{thm:cost-comparison}} \label{apdx:pfCostCompTwoType}

Let $S = \{L,H\}$ where $L, H$ represent the low and high types respectively with $\f(H) > \f(L)$ and $\Delta \f_{\max} = \f(H) - \f(L)$. Let $\q_L,\q_H$ denote the probabilities that a recipient has type $L, H$ respectively. 

Recall that the cost of a mechanism is comprised of (i) its budget required to run the mechanism and (ii) the excess payments made to the users relative to the setting when all users report their types truthfully. 

First, note that if there is no audit, then a user's optimal action is to always signal the high type, i.e. $\ppi(\s = H | \m = L) = 1, \ppi(\s = H | \m = H) = 1$. Therefore,

\begin{align*}
    C^{\text{No Audit}} = \f(H) - (q_L \f(L) + \q_H \f(H)) = \q_L( \f(H) - \f(L) ) = \q_L \Delta \f_{\max}. 
\end{align*}

For the audit mechanism, note that its budget is
\begin{align*}
    B = \frac{\c \Delta f_{\max}}{\k+\Delta f_{\max}} \left( 1 - \min \left( \frac{\q_H \c}{\q_L (\k - \c + \Delta f_{\max})} , 1\right) \right)
\end{align*}
and by Lemma \ref{lem:S=2_no_audit_optimum}, its excess payments is
\begin{align*}
   & \q_L \f(L) + \q_H \f(H) + \q_L \min \left(1, \frac{\q_H c}{\q_L(k - c + \Delta \f_{\max})}\right) \Delta \f_{\max} - (\q_L \f(L) + \q_H \f(H)) \\
   &= \q_L \min \left(1, \frac{\q_H c}{\q_L(k - c + \Delta \f_{\max})}\right) \Delta \f_{\max}.
\end{align*}

Therefore the audit mechanism's cost is
\begin{align*}
    C^{\text{Audit}} = \frac{\c \Delta f_{\max}}{\k+\Delta f_{\max}} \left( 1 - \min \left( \frac{\q_H \c}{\q_L (\k - \c + \Delta f_{\max})} , 1\right) \right) + \q_L \min \left(1, \frac{\q_H c}{\q_L(k - c + \Delta \f_{\max})}\right) \Delta \f_{\max}.
\end{align*}
When comparing $C^{\text{Audit}}$ to $C^{\text{No Audit}}$, there are two cases to consider. 

\textit{Case 1:} If $\q_L < \frac{c}{k + \Delta \f_{\max}}$, then $\min \left(1, \frac{\q_H c}{\q_L(k - c + \Delta \f_{\max})}\right) = 1$, and therefore
\begin{align*}
    C^{\text{Audit}} = \q_L \Delta \f_{\max} = C^{\text{No Audit}}.
\end{align*}

\textit{Case 2:} If $\q_L \geq \frac{c}{k + \Delta \f_{\max}}$, then $\min \left(1, \frac{\q_H c}{\q_L(k - c + \Delta \f_{\max})}\right) = \frac{\q_H c}{\q_L(k - c + \Delta \f_{\max})}$, and therefore
\begin{align*}
    C^{\text{Audit}} &= \frac{\c \Delta f_{\max}}{\k+\Delta f_{\max}} \left( 1 - \frac{\q_H \c}{\q_L (\k - \c + \Delta f_{\max})} \right) + \q_L \frac{\q_H c}{\q_L(k - c + \Delta \f_{\max})} \Delta \f_{\max} \\
    &\leq \q_L \Delta f_{\max} \left( 1 - \frac{\q_H \c}{\q_L (\k - \c + \Delta f_{\max})} \right) + \q_L \frac{\q_H c}{\q_L(k - c + \Delta \f_{\max})} \Delta \f_{\max} \\
    &= \q_L \Delta \f_{\max}.
\end{align*}
Hence in either case we have shown that $C^{\text{Audit}} \leq C^{\text{No Audit}}$, which establishes the desired result. 

\subsection{Proof of Corollary~\ref{cor:cost-comparison-multiRecipients}} \label{apdx:pfCorCost-comparison-Multi}

To prove this claim, we first note that the total cost of not performing an audit is $C^{\text{No Audit}}_{\n} = \n q_{\min} \Delta f_{\max} = \n C^{\text{No Audit}}$. Next, we consider two cases (as in the proof of Theorem~\ref{thm:cost-comparison}): (i) $q_{min} \leq \frac{c}{k+\Delta f_{\max}}$ and (ii) $q_{min} \geq \frac{c}{k+\Delta f_{\max}}$. In the regime when $q_{\min} \leq \frac{c}{k+\Delta f_{\max}}$, no budget is set aside for the audit and thus the audit mechanism reduces to the no audit setting as the budget $B = 0$ in this case. Hence, the total expected cost when $q_{\min}$ is below the above specified threshold is is $C^{\text{Audit}}_{(\n, \m)} = \n q_{\min} \Delta f_{\max} = C^{\text{No Audit}}_{\n}$. On the other hand, in the regime when $q_L \geq \frac{c}{k+\Delta f_{\max}}$, the administrator sets aside a budget scaling with the size of the largest coalition $\m$ to conduct audits and thus the total expected cost is given by
\begin{align*}
    C^{\text{Audit}}_{(n, m)} &\stackrel{(a)}{=} \m \c \frac{(\Delta f_{\max}) \left(1- \frac{(1-q_{\min})\c}{q_{\min}(\k-\c+\Delta f_{\max})} \right)}{\k+\Delta f_{\max}} + \n \pi^*(s_{\max}|s_{\min}), \\
    &\stackrel{(b)}{=} \m \c \frac{(\Delta f_{\max}) \left(1- \frac{q_{\max}\c}{q_{\min}(\k-\c+\Delta f_{\max})} \right)}{\k+\Delta f_{\max}} + \n \frac{q_{\max}\c}{\k-\c+\Delta f_{\max}},
\end{align*}
where (a) follows from the definition of $C^{\text{Audit}}_{(n, m)}$, which is a sum of the budget threshold corresponding to Proposition~\ref{prop:strongerBudgetCondition} for the setting when the size of the largest user coalition is $\m$ and the associated excess payments that are made to the low type users, and (b) follows from the fact that $\pi^*(s_{\max}|s_{\min}) = \frac{q_{\max} c}{q_{\min} (\k - \c + \Delta \f_{\max})}$ at the equilibrium. Finally, noting that $C^{\text{Audit}} =  \c \frac{(\Delta f_{\max}) \left(1- \frac{(1-q_{\min})\c}{q_{\min}(\k-\c+\Delta f_{\max})} \right)}{\k+\Delta f_{\max}} + \frac{q_{\max}\c}{\k-\c+\Delta f_{\max}}$ in the proof of Theorem~\ref{thm:cost-comparison}, $C^{\text{Audit}} \leq C^{\text{No Audit}}$, and $\m \leq \n$, it follows that
\begin{align*}
    C^{\text{Audit}}_{(\n, \m)} \leq \n C^{\text{Audit}} \leq \n C^{\text{No Audit}} = C^{\text{No Audit}}_{\n},
\end{align*}
which establishes our claim.

\subsection{Proof of Theorem~\ref{thm:cost-comparison-multitype}}\label{pf:thm:cost-comparison-multitype}

First, note that $C^{\text{Audit}} = B + \mathbb{E}_{\ppi}[f(s) - f(m)]$. Since $B = \frac{c \Delta \f_{\max}}{k + \f_{\max}}$, we have
\begin{align*}
    k &\geq \Delta f_{\max} \left(\frac{c}{\max_{s'} f(s') - \mathbb{E}_{s \sim \ppi} [f(s)] } - 1 \right) \\
    \implies B &\leq \frac{c \Delta \f_{\max} }{\Delta f_{\max} \left(\frac{c}{\max_{s'} f(s') - \mathbb{E}_{s \sim \ppi} [f(s)] } - 1 \right) + \Delta \f_{\max}} \\
    &= \max_{s'} f(s') - \mathbb{E}_{\ppi}[f(s)],
\end{align*}
and hence $C^{\text{Audit}} \leq \max_{s'} f(s') - \mathbb{E}_{\ppi}[f(s)] + \mathbb{E}_{\ppi}[f(s) - f(m)] = \max_{s'} f(s') - \mathbb{E}_{\ppi}[f(m)]$. 

On the other hand, in an absence of an audit, recipients will always signal $s' := \arg\max_{s \in \S} f(s)$, leading to $C^{\text{No Audit}} = \max_{s'} f(s') - \mathbb{E}_{\ppi}[f(m)]$, and hence $C^{\text{Audit}} \leq C^{\text{No Audit}}$.

\section{Derivation of Utility Functions from Payoffs}\label{app:utility_derivations}

\subsection{Deriving the Administrator's Utility function}\label{sec:admin_deriv}

The administrator's utility is intended to represent its expected payoffs. Thus in this section we will show that $\mathbb{E}[U^A(a(s), m)] = \mathcal{U}_A(\ppi, \sigmaa)$, where the randomness is taken over the recipient's signaling strategy, their type and the administrator's audit strategy. To this end, it is sufficient to show that 
\begin{align}\label{eqn:admin_equal}
    U^A(a(s), m) = - f(\s) + a(\s) \left( -c + \max(f(\s) - f(\m), 0) + \mathbb{I}_{[s \neq m]} k \right),
\end{align}
since taking expectation on both sides gives the desired result (since the expectation of $a(s)$ is $\sigma(s)$). We proceed by considering 3 different cases. \\

\noindent \textbf{Case 1:} If the administrator does not audit, then $a(s) = 0$. Hence, 
\begin{align*}
    - f(\s) + a(\s) \left( -c + \max(f(\s) - f(\m), 0) + \mathbb{I}_{[s \neq m]} k \right) = -f(\s).
\end{align*}
\noindent \textbf{Case 2:} If the administrator conducts an audit, and the recipient is signalling truthfully, then $a(s) = 1$ and $s = m$. Hence,
\begin{align*}
    - f(\s) + a(\s) \left( -c + \max(f(\s) - f(\m), 0) + \mathbb{I}_{[s \neq m]} k \right) = -c-f(\s) = -c-f(\m).
\end{align*}
\noindent \textbf{Case 3:} If the administrator conducts an audit, and the recipient is signalling untruthfully, then $a(s) = 1$ and $s \neq m$. If $f(\s) > f(\m)$, we have:
\begin{align*}
    - f(\s) + a(\s) \left( -c + \max(f(\s) - f(\m), 0) + \mathbb{I}_{[s \neq m]} k \right) &= - f(\s) + \left( -c + f(\s) - f(\m) + k \right) \\
    &= k-c-f(\m).
\end{align*}
If instead $f(\s) < f(\m)$, we have
\begin{align*}
    - f(\s) + a(\s) \left( -c + \max(f(\s) - f(\m), 0) + \mathbb{I}_{[s \neq m]} k \right) &= - f(\s) + \left( -c +  k \right) \\
    &= k-c-f(\s).
\end{align*}
Hence in either situation we have $- f(\s) + a(\s) \left( -c + \max(f(\s) - f(\m), 0) + \mathbb{I}_{[s \neq m]} k \right) = k-c-\min(f(\s),f(\m))$. \\

\noindent Comparing the outcomes of the three cases with the definition of $U^A(a(s), m)$ from \eqref{eq:admin-payoff}, we see that \eqref{eqn:admin_equal} holds, and thus $\mathbb{E}[U^A(a(s), m)] = \mathcal{U}_A(\ppi, \sigmaa)$.

\subsection{Deriving the Recipient's Utility function}\label{sec:recipient_deriv}

The recipient's utility is intended to represent its expected payoffs. Thus in this section we will show that $\mathbb{E}[U(a(s), m)] = \mathcal{U}_R(\ppi, \sigmaa)$, where the randomness is taken over the recipient's signaling strategy, their type and the administrator's audit strategy. This derivation follows very similarly to that of Section \ref{sec:admin_deriv}. To this end, it is sufficient to show that 
\begin{align}\label{eqn:recipient_equal}
    U(a(s), m) = f(\s) + a(\s) \left( \min(f(\m) - f(\s), 0) - \mathbb{I}_{[s \neq m]} k \right),
\end{align}
since taking expectation on both sides gives the desired result (since the expectation of $a(s)$ is $\sigma(s)$). We proceed by considering 3 different cases. \\

\noindent \textbf{Case 1:} If the administrator does not audit, then $a(s) = 0$. Hence, 
\begin{align*}
    f(\s) + a(\s) \left( \min(f(\m) - f(\s), 0) - \mathbb{I}_{[s \neq m]} k \right) = f(s).
\end{align*}
\noindent \textbf{Case 2:} If the administrator conducts an audit, and the recipient is signalling truthfully, then $a(s) = 1$ and $s = m$. Hence,
\begin{align*}
    f(\s) + a(\s) \left( \min(f(\m) - f(\s), 0) - \mathbb{I}_{[s \neq m]} k \right) = f(s) = f(m).
\end{align*}
\noindent \textbf{Case 3:} If the administrator conducts an audit, and the recipient is signalling untruthfully, then $a(s) = 1$ and $s \neq m$. If $f(\s) > f(\m)$, we have:
\begin{align*}
    f(\s) + a(\s) \left( \min(f(\m) - f(\s), 0) - \mathbb{I}_{[s \neq m]} k \right) &= f(s) + \left( f(\m) - f(\s) - k \right) \\
    &= f(\m) - k.
\end{align*}
If instead $f(\s) < f(\m)$, we have
\begin{align*}
    f(\s) + a(\s) \left( \min(f(\m) - f(\s), 0) - \mathbb{I}_{[s \neq m]} k \right) = f(s) - k.
\end{align*}
Hence in either situation we have $f(\s) + a(\s) \left( \min(f(\m) - f(\s), 0) - \mathbb{I}_{[s \neq m]} k \right) = \min(f(\s),f(\m)) - k$. \\

\noindent Comparing the outcomes of the three cases with the definition of $U(a(s), m)$ from \eqref{eq:user-payoff}, we see that \eqref{eqn:recipient_equal} holds, and thus $\mathbb{E}[U(a(s), m)] = \mathcal{U}_R(\ppi, \sigmaa)$.

\section{Non-equivalence of Signaling Game and Bayesian Persuasion Equilibria}\label{app:equilibria_nonequiv}

In this section we will show that there are some signaling game equilibria that are not Bayesian persuasion equilibria for the audit game. 

Consider a setting with three user types $\S = \bigbrace{x,y,z}$ with $f(z) = f(y) + \Delta f, f(y) = f(x) + \Delta f$ for some positive value $\Delta f$. Furthermore, $\q_x = \q_y = \q_z = 1/3$. There are 4 players in this game: the administrator, type $x$ users, type $y$ users, and type $z$ users. We will also require that $\Delta f \geq 2c$. As a reminder, we consider the setting where $\k \geq \c$. \\

\noindent Consider the following signaling strategy profile\footnote{where $\ppi_u(v) := \mathbb{P}(m = v | s = u)$} $(\ppi_x, \ppi_y, \ppi_z, \sigmaa_{\ppi})$ defined by

\begin{align*}
    &&\ppi_x(x) = \frac{\Delta f + \k - 2\c}{\Delta f + \k - \c}, &&\ppi_x(y) = \frac{\c}{\Delta f + \k - \c}, &&\ppi_x(z) = 0. \\
    &&\ppi_y(x) = 0, &&\ppi_y(y) = \frac{\Delta f + \k - 2\c}{\Delta f + \k - \c}, &&\ppi_y(z) = \frac{\c}{\Delta f + \k - \c}. \\
    &&\ppi_z(x) = 0, &&\ppi_z(y) = 0, && \ppi_z(z) = 1.
\end{align*}
As a reminder, $\sigmaa_{\ppi}$ is the administrator's best response to the users' signaling strategies as defined in \eqref{eqn:p2_best_resp}.
We will first show that $(\ppi_x, \ppi_y, \ppi_z, \sigmaa_{\ppi})$ is a signaling game equilibrium for the audit game. Let $\sigmaa_0$ denote the administrator strategy that does not audit any of the signals. \\

By Bayes Rule, we have
\begin{align*}
    &&\mathbb{P}(m = x | s = z) = 0, && \mathbb{P}(m = y | s = z) = \frac{c}{\Delta f + k}, && \mathbb{P}(m = z | s = z) = \frac{\Delta f + k - c}{\Delta f + k}, \\
    &&\mathbb{P}(m = x | s = y) = \frac{c}{\Delta f + k}, && \mathbb{P}(m = y | s = y) = \frac{\Delta f + k - c}{\Delta f + k}, && \mathbb{P}(m = z | s = y) = 0,
\end{align*}
Hence 
\begin{align*}
    &\sum_{m \neq z} \mathbb{P}(m | s = z) \bigpar{\bigpar{f(z) - f(m)}_+ + k} = \mathbb{P}(m = y | s = z) (\Delta f + k) = c, \\
    &\sum_{m \neq y} \mathbb{P}(m | s = y) \bigpar{\bigpar{f(y) - f(m)}_+ + k} = \mathbb{P}(m = x | s = y) (\Delta f + k) = c. 
\end{align*}
Thus according to \eqref{eqn:p2_best_resp}, the administrator's best response to $(\ppi_x, \ppi_y, \ppi_z)$ is to not audit any of the signals. \\

\noindent For users of type $z$, signalling anything other than $z$ will result in the same or smaller payoff, so $\ppi_z$ is optimal given the strategies of other players $(\ppi_x, \ppi_y, \sigmaa_{\ppi})$. \\

\noindent For users of type $y$, increasing $\ppi_y(x)$ leads to lower payoff (you are better off giving this probability to $\ppi_y(y)$ instead, as reporting $y$ gives both higher payoff and cannot be punished by an audit as it is a truthful signal). Thus we will focus on the setting where $\ppi_y(x) = 0$. Given that the administrator is not auditing any signals, a user of type $y$ has an expected payoff of $(1 - \ppi_y(z))f(y) + \ppi_y(z) f(z) = f(y) + \ppi_y(z) \Delta f$. Hence choosing $\ppi_y(z) < \frac{c}{\Delta f + k - c}$ decreases the expected utility. On the other hand, for any $\ppi_y(z) > \frac{c}{\Delta f + k - c}$ we have
\begin{align*}
    \sum_{m \neq z} \mathbb{P}(m | s = z) \bigpar{\bigpar{f(z) - f(m)}_+ + k} > c,
\end{align*}
meaning that the administrator will now audit whenever receiving the signal $z$. Hence, in this case the expected utility is $(1 - \ppi_y(z)) f(y) + \ppi_y(z) (f(y) - k) = f(y) - \ppi_y(z) k$, which is less than what the user gets for choosing $\ppi_y(z) = \frac{c}{\Delta f + k - c}$. Hence $\ppi_y$ is optimal given the strategies of other players $(\ppi_x, \ppi_z, \sigmaa_{\ppi})$. \\

\noindent For users of type $x$, given that the administrator does not audit any signals, the expected payoff is $\ppi_x(x) f(x) + \ppi_x(y) f(y) + \ppi_x(z) f(z) = f(x) + \ppi_x(y) \Delta f + \ppi_x(z) 2 \Delta f$. By construction, $\ppi_x(y), \ppi_x(z)$ are the largest possible values, given the strategies of the other user types, so that the administrator will not audit any of the signals. Hence $\ppi_x$ has better utility than any other strategy for which the administrator's best response conducts no audits. 

To address the remaining strategies, consider $(p_x, p_y, p_z)$ for which the administrator audits signal $y$ but not $z$. The expected utility of this strategy is then $p_x f(x) + p_y (f(x) - k) + p_z f(z)$. This strategy is inferior to the strategy $(p_x + p_y, 0, p_z)$ which has an expected utility of $(p_x + p_y) f(x) + p_z f(z)$. Note that when the user plays $(p_x + p_y, 0, p_z)$, the administrator does not audit any of the signals, meaning $(p_x + p_y, 0, p_z)$ is inferior to $\ppi_x$. Hence $(p_x, p_y, p_z)$ is inferior to $\ppi_x$. An analogous argument shows that all strategies for which the administrator audits signal $z$ and/or $y$ must also be inferior to $\ppi_x$. Hence $\ppi_x$ is optimal given the strategies of other players $(\ppi_y,\ppi_z, \sigmaa_{\ppi})$. \\

We have shown that $(\ppi_x, \ppi_y, \ppi_z, \sigmaa_0)$ is a signaling game equilibrium in the audit game. However, this is not a Bayesian persuasion equilibrium because the following signaling strategy has higher excess payments (i.e., a coordinated deviation by all user types can increase the average utility of players):
\begin{align*}
    &&\ppi'_x(x) = 1 - \frac{c}{\Delta f + k - c} - \frac{c}{2\Delta f + k - c}, &&\ppi'_x(y) = \frac{c}{\Delta f + k - c}, && \ppi'_x(z) = \frac{c}{2 \Delta f + k - c}, \\
    &&\ppi'_y(x) = 0, && \ppi'_y(y)=1, &&\ppi'_y(z) = 0, \\
    &&\ppi'_z(x) = 0, && \ppi'_z(y)=0, &&\ppi'_z(z) = 1.
\end{align*}
The excess payments under $(\ppi_x, \ppi_y, \ppi_z)$ is:
\begin{align*}
    \q_x \mathbb{P}(s = y | m = x) \Delta f + \q_y \mathbb{P}(s = z | m = y) \Delta f = \frac{2 c \Delta f}{3(\Delta f + k - c)}.
\end{align*}
However, the excess payments under $(\ppi'_x, \ppi'_y, \ppi'_z)$ is larger:
\begin{align*}
    \q_x \mathbb{P}(s = y | m = x) \Delta f + \q_x \mathbb{P}(s = z | m = x) 2 \Delta f = \frac{c \Delta f}{3 (\Delta f + k - c)} + \frac{c 2 \Delta f}{3 (2 \Delta f + k - c)}.
\end{align*} 

\section{Additional Theoretical Analyses and Results}


\subsection{Uniqueness of Equilibria for $|\S| = 2$} \label{apdx:equilibriumUniqueness}

Let $S = \{L,H\}$ where $L, H$ represent the low and high types respectively with $\f(H) > \f(L)$ and $\Delta \f_{\max} = \f(H) - \f(L)$. Let $\q_L,\q_H$ denote the probabilities that a recipient has type $L, H$ respectively. 

We prove this result by using Lemma~\ref{lem:S=2_expected_utility} and Lemma~\ref{lem:S=2_no_audit_optimum}. Consider the user strategy defined in Lemma~\ref{lem:S=2_no_audit_optimum}:

\begin{align*}
    \ppi'(\s = H | \m = L) &:= \min \left(1, \frac{\q_H c}{\q_L(k - c + \Delta \f_{\max})}\right), \\
    \ppi'(\s = H | \m = H) &:= 1.
\end{align*}

Let $\Pi$ be the set of no-audit strategies for the user, i.e., user strategies for which the administrator's best response is to never audit, meaning $\sigmaa(H) = \sigmaa(L) = 0$

By construction of $\ppi'$, the administrator's best response to $\ppi'$ is $\sigmaa(H) = \sigmaa(L) = 0$, and the administrator cannot improve its utility by unilaterally changing its strategy. 

On the other hand, choosing any other strategy will lead to a strictly lower utility for the user. If $\ppi(\s = H | \m = L) < \min \left(1, \frac{\q_H c}{\q_L(k - c + \Delta \f_{\max})}\right)$, then by Lemma~\ref{lem:S=2_no_audit_optimum}, a unilateral switch from $\ppi$ to $\ppi'$ strictly increases the user's utility. On the other hand if $\ppi(\s = H | \m = L) > \min \left(1, \frac{\q_H c}{\q_L(k - c + \Delta \f_{\max})}\right)$, then $\sigmaa(H) = 1$, and by Lemma~\ref{lem:S=2_expected_utility}, changing $\ppi(\s = H | \m = L)$ to $0$ strictly increases the user's utility. 

For the strategy profile $(\ppi', \sigmaa \equiv 0)$, neither player can improve their utility via unilateral deviation, thus this strategy profile is an equilibrium. Furthermore, for any other user strategy $\ppi$, the user can improve its utility via unilateral deviation. Therefore $(\ppi', \sigmaa \equiv 0)$ is the unique equilibrium. 

\subsection{Existence of Equilibria when $|\S| = 2$ Regardless of Budget $B$} \label{apdx:existence-regardless-of-budget}

Let $S = \{L,H\}$ where $L, H$ represent the low and high types respectively with $\f(H) > \f(L)$ and $\Delta \f_{\max} = \f(H) - \f(L)$. Let $\q_L,\q_H$ denote the probabilities that a recipient has type $L, H$ respectively. 

Using Lemma \ref{lem:S=2_expected_utility} from Appendix \ref{apdx:S2_core_ideas}, the expected utility of any user action $\ppi$ is given by
\begin{align*}
    \mathbb{E}_{\m \sim \q, \s \sim \ppi(\cdot | \m)} \left[U^i(\sigma(\s), \m)\right] = \q_L \f(L) + \q_H \f(H) + \q_L \ppi(\s = H | \m = L) (\Delta \f_{\max} - \sigmaa(H) (k + \Delta \f_{\max}) )
\end{align*}

Note that for a fixed value of $\sigmaa^*(H)$, the expected utility is an affine function of $\ppi(\s = H | \m = L)$. Furthermore, according to Lemma \ref{lem:S=2_audit_rule}, 

\begin{align*}
    \sigmaa(H) &= \left \{ 
    \begin{tabular}{cc}
        $0$ & if $\ppi(\s = H | \m = L) \leq \frac{\q_H c}{\q_L(k - c + \Delta \f_{\max})}$ \\
        $\frac{B}{c}$ & otherwise.
    \end{tabular}
    \right.
\end{align*}

Hence the optimal value of $\mathbb{E}_{\m \sim \q, \s \sim \ppi(\cdot | \m)} \left[U^i(\sigma(\s), \m)\right]$ must be attained at $ \ppi(\s = H | \m = L) \in \{ 0, \frac{\q_H c}{\q_L(k - c + \Delta \f_{\max})}, 1 \}$.

We discuss at the beginning of Appendix \ref{apdx:S2_core_ideas} why $\ppi(\s = H | \m = H) = 1$ is a dominant strategy for the user. As a consequence, $\sigmaa(L) = 0$ will always hold at an equilibrium. As for $\ppi(\s = H | \m = L)$ and $\sigmaa(H)$, there are two cases to consider:

If $B \leq \c \frac{\Delta \f_{\max} \big( 1 - \min \{ \frac{\q_H \c}{\q_L (\k - \c + \Delta \f_{\max})} , 1\} \big)}{k+\Delta \f_{\max}}$, then $\ppi(\s = H | \m = L) = 1$ is optimal, and thus it and its best response $\sigmaa(H) = B/c$ is an equilibrium.

If $B > \c \frac{\Delta \f_{\max} \big( 1 - \min \{ \frac{\q_H \c}{\q_L (\k - \c + \Delta \f_{\max})} , 1\} \big)}{k+\Delta \f_{\max}}$, then $\ppi(\s = H | \m = L) = \frac{\q_H c}{\q_L(k - c + \Delta \f_{\max})}$ is optimal, and thus it and its best response $\sigmaa(H) = 0$ is an equilibrium.

\subsection{Generalization of Theorem~\ref{thm:cost-comparison-multitype} to Multiple-User Setting} \label{apdx:generalizationThm6}


Let $\q_1, \q_2, ..., \q_n$ be the marginal distribution of the $n$ users' types, and let $\ppi_1,...,\ppi_n$ denote equilibrium actions of the $n$ players, i.e., the solutions to \eqref{eq:obj-lp-bp}-\eqref{eq:no-audit-con-bp}. We use $(\s_i,\m_i) \sim \ppi_i$ to represent sampling $\s_i,\m_i$ from their joint distribution by first sampling $\m_i$ from $\q_i$, and then sampling $\s_i$ from $\ppi_i(\cdot | \m_i)$. 

In such a setting, if 

\begin{align*}
    k \geq \Delta f_{\max} \left(\frac{c}{ \sum_{i=1}^n \max_{s'} f(s') - \mathbb{E}_{(s,m) \sim \ppi_i} [f(s)] } - 1 \right),
\end{align*}

then by the same reasoning as Theorem~\ref{thm:cost-comparison-multitype}, we have
\begin{align*}
    B \leq \sum_{i=1}^n \max_{s'} f(s') - \mathbb{E}_{(s,m) \sim \ppi_i} [f(s)],
\end{align*}
meaning that

\begin{align*}
    C^{\text{Audit}} &:= \text{Budget } + \text{ Excess Payments} \\
    &\leq \sum_{i=1}^n \max_{s'} f(s') - \mathbb{E}_{(s,m) \sim \ppi_i} [f(s)] + \sum_{i=1}^n \mathbb{E}_{(s,m) \sim \ppi_i} [f(s)] - \mathbb{E}_{\m \sim \q_i}[\f(\m)] \\
    &= \sum_{i=1}^n \max_{s'} f(s') - \mathbb{E}_{\m \sim \q_i}[\f(\m)] \\
    &= C^{\text{No Audit}}.
\end{align*}

\newpage 
\subsection{User utility in the unbudgeted two type audit game}\label{apdx:two_type_fig}

\begin{figure}[h]
    \centering
    \includegraphics[width=0.8\textwidth]{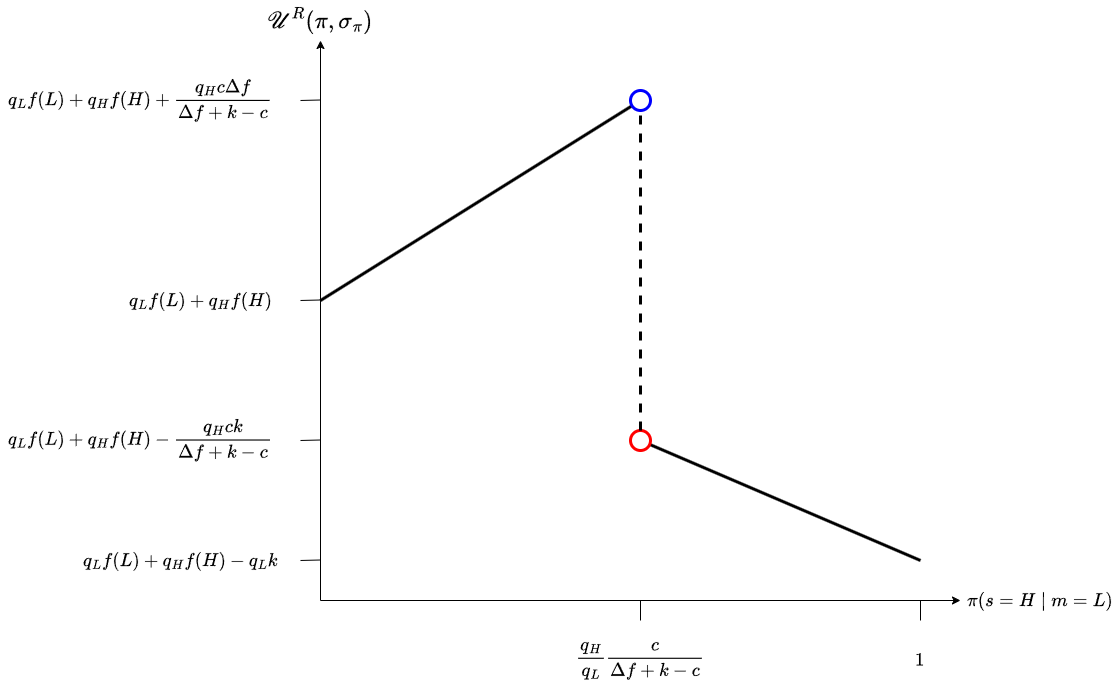}
    \caption{User utility $\mathcal{U}^R$ in a two type setting $\S = \bigbrace{L,H}$ with $f(H) - f(L) = \Delta f \geq 0$. The plot is made assuming that (a) users of type $H$ will always signal truthfully, since this is a dominant strategy and (b) the administrator is always playing a best response according to \eqref{eqn:p2_best_resp}. Under these assumptions, the user strategy is entirely determined by the misreporting probability $\ppi(s = H | m = L)$. We analyze the plot by looking at three different regimes: regime 1 is  $[0, \frac{\q_H}{\q_L} \frac{c}{\Delta f + k - c})$, regime 2 is $(\frac{\q_H}{\q_L} \frac{c}{\Delta f + k - c}, 1]$ and regime 3 is when $\ppi(s = H | m = L) = \frac{\q_H}{\q_L} \frac{c}{\Delta f + k - c}$. Regime 1 corresponds to the case where $\sigmaa(H) = 0$ is the administrator's unique best response. In the absence of an audit, $\mathcal{U}^R$ is an increasing function of the misreporting probability. Regime 2 corresponds to the case where $\sigmaa(H) = 1$ is the administrator's unique best response. In this regime, $\mathcal{U}^R$ is decreasing, since a higher misreporting probability increases the probability of getting caught and paying the fine. In regime 3 the administrator is indifferent between auditing and not auditing, and hence any $\sigmaa(H) \in [0,1]$ is a best response. In terms of the user utility, $\mathcal{U}^R$ interpolates from the blue point to the red point along the dashed line as $\sigmaa(H)$ varies from $0$ to $1$. Importantly, this figure shows that $\mathcal{U}^R$ has a maximum if and only if $\sigmaa(H)$ is chosen to be $0$ in regime 3. As a consequence, this says that the game has an equilibrium if and only if the administrator plays $\sigmaa(H) = 0$ in regime 3. Fortunately, despite equilibria not existing when $\sigmaa(H) > 0$ in regime 3, we still have $\epsilon$-equilibria for any $\epsilon \geq 0$ by choosing $\ppi(s = H | m = L) \in [\frac{\q_H}{\q_L} \frac{c}{\Delta f + k - c} - \frac{\epsilon}{q_L \Delta f}, \frac{\q_H}{\q_L} \frac{c}{\Delta f + k - c})$. }
    \label{fig:ag_two_players}
\end{figure}
\newpage

\section{Additional Numerical Results}

In this section, we depict the variation in the maximum misreporting probability of users for different ranges of the audit fine $\k$ and audit cost $\c$ for the model parameters presented in Section~\ref{sec:experiment-setup} for three different values of $q_{\min} \in \{ 0.25, 0.5, 0.75 \}$ in Figure~\ref{fig:misreportingProb}. In particular, we let the audit fine $\k$ vary in $\$100$ increments between $\$100$ to $\$1000$ and the audit cost $\c$ vary between $\$25$ to $\$50$ and observe from Figure~\ref{fig:misreportingProb} that as the audit fine increases and the audit cost reduces, the maximum misreporting probability also reduces (see Corollary~\ref{cor:misrepotingProbBound}). Furthermore, we observe that the misreporting probability of users increases as $q_{\min}$ decreases, which follows as the administrator is less likely to conduct audits when a larger proportion of the population are of the high type, which enables lower type users to misreport more when $q_{\min}$ is small.

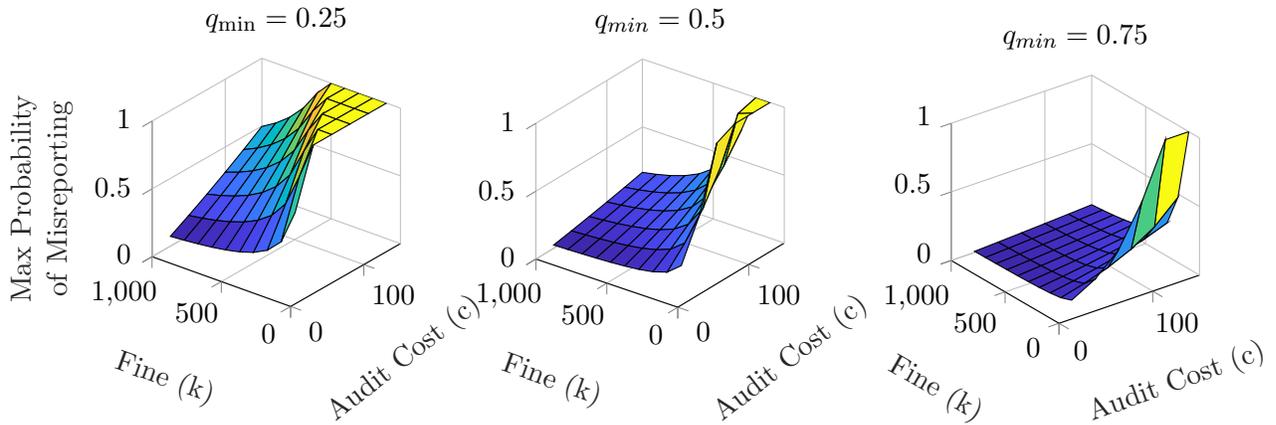
\begin{figure}[tbh!]
  \centering
  \begin{subfigure}[b]{0.3\columnwidth}
%
%
\begin{tikzpicture}

\begin{axis}[%
width=1.3in,
height=1.3in,
at={(0in,0in)},
scale only axis,
xmin=0,
xmax=150,
tick align=outside,
xlabel style={font=\color{white!15!black}, rotate = 38},
xlabel={Audit Cost (c)},
ymin=0,
ymax=1000,
ylabel style={font=\color{white!15!black}, rotate = 340},
ylabel={Fine (k)},
zmin=0,
zmax=1,
zlabel style={font=\color{white!15!black}, align = center},
zlabel={Max Probability \\ of Misreporting},
title style={font=\bfseries},
title={$q_{\min} = 0.25$},
view={-51.6}{30.6008759124088},
axis background/.style={fill=white},
axis x line*=bottom,
axis y line*=left,
axis z line*=left,
xmajorgrids,
ymajorgrids,
zmajorgrids,
legend style={at={(1.03,1)}, anchor=north west, legend cell align=left, align=left, draw=white!15!black}
]

\addplot3[%
surf,
shader=flat corner, draw=black, z buffer=sort, colormap={mymap}{[1pt] rgb(0pt)=(0.2422,0.1504,0.6603); rgb(1pt)=(0.25039,0.164995,0.707614); rgb(2pt)=(0.257771,0.181781,0.751138); rgb(3pt)=(0.264729,0.197757,0.795214); rgb(4pt)=(0.270648,0.214676,0.836371); rgb(5pt)=(0.275114,0.234238,0.870986); rgb(6pt)=(0.2783,0.255871,0.899071); rgb(7pt)=(0.280333,0.278233,0.9221); rgb(8pt)=(0.281338,0.300595,0.941376); rgb(9pt)=(0.281014,0.322757,0.957886); rgb(10pt)=(0.279467,0.344671,0.971676); rgb(11pt)=(0.275971,0.366681,0.982905); rgb(12pt)=(0.269914,0.3892,0.9906); rgb(13pt)=(0.260243,0.412329,0.995157); rgb(14pt)=(0.244033,0.435833,0.998833); rgb(15pt)=(0.220643,0.460257,0.997286); rgb(16pt)=(0.196333,0.484719,0.989152); rgb(17pt)=(0.183405,0.507371,0.979795); rgb(18pt)=(0.178643,0.528857,0.968157); rgb(19pt)=(0.176438,0.549905,0.952019); rgb(20pt)=(0.168743,0.570262,0.935871); rgb(21pt)=(0.154,0.5902,0.9218); rgb(22pt)=(0.146029,0.609119,0.907857); rgb(23pt)=(0.138024,0.627629,0.89729); rgb(24pt)=(0.124814,0.645929,0.888343); rgb(25pt)=(0.111252,0.6635,0.876314); rgb(26pt)=(0.0952095,0.679829,0.859781); rgb(27pt)=(0.0688714,0.694771,0.839357); rgb(28pt)=(0.0296667,0.708167,0.816333); rgb(29pt)=(0.00357143,0.720267,0.7917); rgb(30pt)=(0.00665714,0.731214,0.766014); rgb(31pt)=(0.0433286,0.741095,0.73941); rgb(32pt)=(0.0963952,0.75,0.712038); rgb(33pt)=(0.140771,0.7584,0.684157); rgb(34pt)=(0.1717,0.766962,0.655443); rgb(35pt)=(0.193767,0.775767,0.6251); rgb(36pt)=(0.216086,0.7843,0.5923); rgb(37pt)=(0.246957,0.791795,0.556743); rgb(38pt)=(0.290614,0.79729,0.518829); rgb(39pt)=(0.340643,0.8008,0.478857); rgb(40pt)=(0.3909,0.802871,0.435448); rgb(41pt)=(0.445629,0.802419,0.390919); rgb(42pt)=(0.5044,0.7993,0.348); rgb(43pt)=(0.561562,0.794233,0.304481); rgb(44pt)=(0.617395,0.787619,0.261238); rgb(45pt)=(0.671986,0.779271,0.2227); rgb(46pt)=(0.7242,0.769843,0.191029); rgb(47pt)=(0.773833,0.759805,0.16461); rgb(48pt)=(0.820314,0.749814,0.153529); rgb(49pt)=(0.863433,0.7406,0.159633); rgb(50pt)=(0.903543,0.733029,0.177414); rgb(51pt)=(0.939257,0.728786,0.209957); rgb(52pt)=(0.972757,0.729771,0.239443); rgb(53pt)=(0.995648,0.743371,0.237148); rgb(54pt)=(0.996986,0.765857,0.219943); rgb(55pt)=(0.995205,0.789252,0.202762); rgb(56pt)=(0.9892,0.813567,0.188533); rgb(57pt)=(0.978629,0.838629,0.176557); rgb(58pt)=(0.967648,0.8639,0.16429); rgb(59pt)=(0.96101,0.889019,0.153676); rgb(60pt)=(0.959671,0.913457,0.142257); rgb(61pt)=(0.962795,0.937338,0.12651); rgb(62pt)=(0.969114,0.960629,0.106362); rgb(63pt)=(0.9769,0.9839,0.0805)}, mesh/rows=6]
table[row sep=crcr, point meta=\thisrow{c}] {%
x	y	z	c\\
25	100	0.576923076923077	0.576923076923077\\
25	200	0.326086956521739	0.326086956521739\\
25	300	0.227272727272727	0.227272727272727\\
25	400	0.174418604651163	0.174418604651163\\
25	500	0.141509433962264	0.141509433962264\\
25	600	0.119047619047619	0.119047619047619\\
25	700	0.102739726027397	0.102739726027397\\
25	800	0.0903614457831325	0.0903614457831325\\
25	900	0.0806451612903226	0.0806451612903226\\
25	1000	0.0728155339805825	0.0728155339805825\\
50	100	1	1\\
50	200	0.731707317073171	0.731707317073171\\
50	300	0.491803278688525	0.491803278688525\\
50	400	0.37037037037037	0.37037037037037\\
50	500	0.297029702970297	0.297029702970297\\
50	600	0.247933884297521	0.247933884297521\\
50	700	0.212765957446809	0.212765957446809\\
50	800	0.186335403726708	0.186335403726708\\
50	900	0.165745856353591	0.165745856353591\\
50	1000	0.149253731343284	0.149253731343284\\
75	100	1	1\\
75	200	1	1\\
75	300	0.803571428571429	0.803571428571429\\
75	400	0.592105263157895	0.592105263157895\\
75	500	0.46875	0.46875\\
75	600	0.387931034482759	0.387931034482759\\
75	700	0.330882352941176	0.330882352941176\\
75	800	0.288461538461538	0.288461538461538\\
75	900	0.255681818181818	0.255681818181818\\
75	1000	0.229591836734694	0.229591836734694\\
100	100	1	1\\
100	200	1	1\\
100	300	1	1\\
100	400	0.845070422535211	0.845070422535211\\
100	500	0.659340659340659	0.659340659340659\\
100	600	0.540540540540541	0.540540540540541\\
100	700	0.458015267175573	0.458015267175573\\
100	800	0.397350993377483	0.397350993377483\\
100	900	0.350877192982456	0.350877192982456\\
100	1000	0.31413612565445	0.31413612565445\\
125	100	1	1\\
125	200	1	1\\
125	300	1	1\\
125	400	1	1\\
125	500	0.872093023255814	0.872093023255814\\
125	600	0.707547169811321	0.707547169811321\\
125	700	0.595238095238095	0.595238095238095\\
125	800	0.513698630136986	0.513698630136986\\
125	900	0.451807228915663	0.451807228915663\\
125	1000	0.403225806451613	0.403225806451613\\
150	100	1	1\\
150	200	1	1\\
150	300	1	1\\
150	400	1	1\\
150	500	1	1\\
150	600	0.891089108910891	0.891089108910891\\
150	700	0.743801652892562	0.743801652892562\\
150	800	0.638297872340426	0.638297872340426\\
150	900	0.559006211180124	0.559006211180124\\
150	1000	0.497237569060773	0.497237569060773\\
};


\end{axis}

\begin{axis}[%
width=0in,
height=0in,
at={(0in,0in)},
scale only axis,
xmin=0,
xmax=1,
ymin=0,
ymax=1,
axis line style={draw=none},
ticks=none,
axis x line*=bottom,
axis y line*=left,
legend style={legend cell align=left, align=left, draw=white!15!black}
]
\end{axis}
\end{tikzpicture}%
  \end{subfigure} \hspace{28pt}
  \begin{subfigure}[b]{0.3\columnwidth}
%
%
\begin{tikzpicture}

\begin{axis}[%
width=1.3in,
height=1.3in,
at={(0in,0in)},
scale only axis,
xmin=0,
xmax=150,
tick align=outside,
xlabel style={font=\color{white!15!black}, rotate = 38},
xlabel={Audit Cost (c)},
ymin=0,
ymax=1000,
ylabel style={font=\color{white!15!black}, rotate = 340},
ylabel={Fine (k)},
zmin=0,
zmax=1,
zlabel style={font=\color{white!15!black}},
title style={font=\bfseries},
title={$q_{min} = 0.5$},
view={-53.1}{30.6008759124088},
axis background/.style={fill=white},
axis x line*=bottom,
axis y line*=left,
axis z line*=left,
xmajorgrids,
ymajorgrids,
zmajorgrids,
legend style={at={(1.03,1)}, anchor=north west, legend cell align=left, align=left, draw=white!15!black}
]

\addplot3[%
surf,
shader=flat corner, draw=black, z buffer=sort, colormap={mymap}{[1pt] rgb(0pt)=(0.2422,0.1504,0.6603); rgb(1pt)=(0.25039,0.164995,0.707614); rgb(2pt)=(0.257771,0.181781,0.751138); rgb(3pt)=(0.264729,0.197757,0.795214); rgb(4pt)=(0.270648,0.214676,0.836371); rgb(5pt)=(0.275114,0.234238,0.870986); rgb(6pt)=(0.2783,0.255871,0.899071); rgb(7pt)=(0.280333,0.278233,0.9221); rgb(8pt)=(0.281338,0.300595,0.941376); rgb(9pt)=(0.281014,0.322757,0.957886); rgb(10pt)=(0.279467,0.344671,0.971676); rgb(11pt)=(0.275971,0.366681,0.982905); rgb(12pt)=(0.269914,0.3892,0.9906); rgb(13pt)=(0.260243,0.412329,0.995157); rgb(14pt)=(0.244033,0.435833,0.998833); rgb(15pt)=(0.220643,0.460257,0.997286); rgb(16pt)=(0.196333,0.484719,0.989152); rgb(17pt)=(0.183405,0.507371,0.979795); rgb(18pt)=(0.178643,0.528857,0.968157); rgb(19pt)=(0.176438,0.549905,0.952019); rgb(20pt)=(0.168743,0.570262,0.935871); rgb(21pt)=(0.154,0.5902,0.9218); rgb(22pt)=(0.146029,0.609119,0.907857); rgb(23pt)=(0.138024,0.627629,0.89729); rgb(24pt)=(0.124814,0.645929,0.888343); rgb(25pt)=(0.111252,0.6635,0.876314); rgb(26pt)=(0.0952095,0.679829,0.859781); rgb(27pt)=(0.0688714,0.694771,0.839357); rgb(28pt)=(0.0296667,0.708167,0.816333); rgb(29pt)=(0.00357143,0.720267,0.7917); rgb(30pt)=(0.00665714,0.731214,0.766014); rgb(31pt)=(0.0433286,0.741095,0.73941); rgb(32pt)=(0.0963952,0.75,0.712038); rgb(33pt)=(0.140771,0.7584,0.684157); rgb(34pt)=(0.1717,0.766962,0.655443); rgb(35pt)=(0.193767,0.775767,0.6251); rgb(36pt)=(0.216086,0.7843,0.5923); rgb(37pt)=(0.246957,0.791795,0.556743); rgb(38pt)=(0.290614,0.79729,0.518829); rgb(39pt)=(0.340643,0.8008,0.478857); rgb(40pt)=(0.3909,0.802871,0.435448); rgb(41pt)=(0.445629,0.802419,0.390919); rgb(42pt)=(0.5044,0.7993,0.348); rgb(43pt)=(0.561562,0.794233,0.304481); rgb(44pt)=(0.617395,0.787619,0.261238); rgb(45pt)=(0.671986,0.779271,0.2227); rgb(46pt)=(0.7242,0.769843,0.191029); rgb(47pt)=(0.773833,0.759805,0.16461); rgb(48pt)=(0.820314,0.749814,0.153529); rgb(49pt)=(0.863433,0.7406,0.159633); rgb(50pt)=(0.903543,0.733029,0.177414); rgb(51pt)=(0.939257,0.728786,0.209957); rgb(52pt)=(0.972757,0.729771,0.239443); rgb(53pt)=(0.995648,0.743371,0.237148); rgb(54pt)=(0.996986,0.765857,0.219943); rgb(55pt)=(0.995205,0.789252,0.202762); rgb(56pt)=(0.9892,0.813567,0.188533); rgb(57pt)=(0.978629,0.838629,0.176557); rgb(58pt)=(0.967648,0.8639,0.16429); rgb(59pt)=(0.96101,0.889019,0.153676); rgb(60pt)=(0.959671,0.913457,0.142257); rgb(61pt)=(0.962795,0.937338,0.12651); rgb(62pt)=(0.969114,0.960629,0.106362); rgb(63pt)=(0.9769,0.9839,0.0805)}, mesh/rows=6]
table[row sep=crcr, point meta=\thisrow{c}] {%
x	y	z	c\\
25	100	0.192307692307692	0.192307692307692\\
25	200	0.108695652173913	0.108695652173913\\
25	300	0.0757575757575758	0.0757575757575758\\
25	400	0.0581395348837209	0.0581395348837209\\
25	500	0.0471698113207547	0.0471698113207547\\
25	600	0.0396825396825397	0.0396825396825397\\
25	700	0.0342465753424658	0.0342465753424658\\
25	800	0.0301204819277108	0.0301204819277108\\
25	900	0.0268817204301075	0.0268817204301075\\
25	1000	0.0242718446601942	0.0242718446601942\\
50	100	0.476190476190476	0.476190476190476\\
50	200	0.24390243902439	0.24390243902439\\
50	300	0.163934426229508	0.163934426229508\\
50	400	0.123456790123457	0.123456790123457\\
50	500	0.099009900990099	0.099009900990099\\
50	600	0.0826446280991736	0.0826446280991736\\
50	700	0.0709219858156028	0.0709219858156028\\
50	800	0.062111801242236	0.062111801242236\\
50	900	0.0552486187845304	0.0552486187845304\\
50	1000	0.0497512437810945	0.0497512437810945\\
75	100	0.9375	0.9375\\
75	200	0.416666666666667	0.416666666666667\\
75	300	0.267857142857143	0.267857142857143\\
75	400	0.197368421052632	0.197368421052632\\
75	500	0.15625	0.15625\\
75	600	0.129310344827586	0.129310344827586\\
75	700	0.110294117647059	0.110294117647059\\
75	800	0.0961538461538462	0.0961538461538462\\
75	900	0.0852272727272727	0.0852272727272727\\
75	1000	0.076530612244898	0.076530612244898\\
100	100	1	1\\
100	200	0.645161290322581	0.645161290322581\\
100	300	0.392156862745098	0.392156862745098\\
100	400	0.28169014084507	0.28169014084507\\
100	500	0.21978021978022	0.21978021978022\\
100	600	0.18018018018018	0.18018018018018\\
100	700	0.152671755725191	0.152671755725191\\
100	800	0.132450331125828	0.132450331125828\\
100	900	0.116959064327485	0.116959064327485\\
100	1000	0.104712041884817	0.104712041884817\\
125	100	1	1\\
125	200	0.961538461538462	0.961538461538462\\
125	300	0.543478260869565	0.543478260869565\\
125	400	0.378787878787879	0.378787878787879\\
125	500	0.290697674418605	0.290697674418605\\
125	600	0.235849056603774	0.235849056603774\\
125	700	0.198412698412698	0.198412698412698\\
125	800	0.171232876712329	0.171232876712329\\
125	900	0.150602409638554	0.150602409638554\\
125	1000	0.134408602150538	0.134408602150538\\
150	100	1	1\\
150	200	1	1\\
150	300	0.731707317073171	0.731707317073171\\
150	400	0.491803278688525	0.491803278688525\\
150	500	0.37037037037037	0.37037037037037\\
150	600	0.297029702970297	0.297029702970297\\
150	700	0.247933884297521	0.247933884297521\\
150	800	0.212765957446809	0.212765957446809\\
150	900	0.186335403726708	0.186335403726708\\
150	1000	0.165745856353591	0.165745856353591\\
};

\end{axis}

\begin{axis}[%
width=0in,
height=0in,
at={(0in,0in)},
scale only axis,
xmin=0,
xmax=1,
ymin=0,
ymax=1,
axis line style={draw=none},
ticks=none,
axis x line*=bottom,
axis y line*=left,
legend style={legend cell align=left, align=left, draw=white!15!black}
]
\end{axis}
\end{tikzpicture}%
  \end{subfigure} \hspace{5pt}
  \begin{subfigure}[b]{0.3\columnwidth}
%
%
\begin{tikzpicture}

\begin{axis}[%
width=1.3in,
height=1.3in,
at={(0in,0in)},
scale only axis,
xmin=0,
xmax=150,
tick align=outside,
xlabel style={font=\color{white!15!black}, rotate = 20},
xlabel={Audit Cost (c)},
ymin=0,
ymax=1000,
ylabel style={font=\color{white!15!black}, rotate = 330},
ylabel={Fine (k)},
zmin=0,
zmax=1,
zlabel style={font=\color{white!15!black}},
view={-37.5}{30},
title style={font=\bfseries},
title={$q_{min} = 0.75$},
axis background/.style={fill=white},
axis x line*=bottom,
axis y line*=left,
axis z line*=left,
xmajorgrids,
ymajorgrids,
zmajorgrids,
legend style={at={(1.03,1)}, anchor=north west, legend cell align=left, align=left, draw=white!15!black}
]

\addplot3[%
surf,
shader=flat corner, draw=black, z buffer=sort, colormap={mymap}{[1pt] rgb(0pt)=(0.2422,0.1504,0.6603); rgb(1pt)=(0.25039,0.164995,0.707614); rgb(2pt)=(0.257771,0.181781,0.751138); rgb(3pt)=(0.264729,0.197757,0.795214); rgb(4pt)=(0.270648,0.214676,0.836371); rgb(5pt)=(0.275114,0.234238,0.870986); rgb(6pt)=(0.2783,0.255871,0.899071); rgb(7pt)=(0.280333,0.278233,0.9221); rgb(8pt)=(0.281338,0.300595,0.941376); rgb(9pt)=(0.281014,0.322757,0.957886); rgb(10pt)=(0.279467,0.344671,0.971676); rgb(11pt)=(0.275971,0.366681,0.982905); rgb(12pt)=(0.269914,0.3892,0.9906); rgb(13pt)=(0.260243,0.412329,0.995157); rgb(14pt)=(0.244033,0.435833,0.998833); rgb(15pt)=(0.220643,0.460257,0.997286); rgb(16pt)=(0.196333,0.484719,0.989152); rgb(17pt)=(0.183405,0.507371,0.979795); rgb(18pt)=(0.178643,0.528857,0.968157); rgb(19pt)=(0.176438,0.549905,0.952019); rgb(20pt)=(0.168743,0.570262,0.935871); rgb(21pt)=(0.154,0.5902,0.9218); rgb(22pt)=(0.146029,0.609119,0.907857); rgb(23pt)=(0.138024,0.627629,0.89729); rgb(24pt)=(0.124814,0.645929,0.888343); rgb(25pt)=(0.111252,0.6635,0.876314); rgb(26pt)=(0.0952095,0.679829,0.859781); rgb(27pt)=(0.0688714,0.694771,0.839357); rgb(28pt)=(0.0296667,0.708167,0.816333); rgb(29pt)=(0.00357143,0.720267,0.7917); rgb(30pt)=(0.00665714,0.731214,0.766014); rgb(31pt)=(0.0433286,0.741095,0.73941); rgb(32pt)=(0.0963952,0.75,0.712038); rgb(33pt)=(0.140771,0.7584,0.684157); rgb(34pt)=(0.1717,0.766962,0.655443); rgb(35pt)=(0.193767,0.775767,0.6251); rgb(36pt)=(0.216086,0.7843,0.5923); rgb(37pt)=(0.246957,0.791795,0.556743); rgb(38pt)=(0.290614,0.79729,0.518829); rgb(39pt)=(0.340643,0.8008,0.478857); rgb(40pt)=(0.3909,0.802871,0.435448); rgb(41pt)=(0.445629,0.802419,0.390919); rgb(42pt)=(0.5044,0.7993,0.348); rgb(43pt)=(0.561562,0.794233,0.304481); rgb(44pt)=(0.617395,0.787619,0.261238); rgb(45pt)=(0.671986,0.779271,0.2227); rgb(46pt)=(0.7242,0.769843,0.191029); rgb(47pt)=(0.773833,0.759805,0.16461); rgb(48pt)=(0.820314,0.749814,0.153529); rgb(49pt)=(0.863433,0.7406,0.159633); rgb(50pt)=(0.903543,0.733029,0.177414); rgb(51pt)=(0.939257,0.728786,0.209957); rgb(52pt)=(0.972757,0.729771,0.239443); rgb(53pt)=(0.995648,0.743371,0.237148); rgb(54pt)=(0.996986,0.765857,0.219943); rgb(55pt)=(0.995205,0.789252,0.202762); rgb(56pt)=(0.9892,0.813567,0.188533); rgb(57pt)=(0.978629,0.838629,0.176557); rgb(58pt)=(0.967648,0.8639,0.16429); rgb(59pt)=(0.96101,0.889019,0.153676); rgb(60pt)=(0.959671,0.913457,0.142257); rgb(61pt)=(0.962795,0.937338,0.12651); rgb(62pt)=(0.969114,0.960629,0.106362); rgb(63pt)=(0.9769,0.9839,0.0805)}, mesh/rows=6]
table[row sep=crcr, point meta=\thisrow{c}] {%
x	y	z	c\\
25	100	0.0641025641025641	0.0641025641025641\\
25	200	0.036231884057971	0.036231884057971\\
25	300	0.0252525252525253	0.0252525252525253\\
25	400	0.0193798449612403	0.0193798449612403\\
25	500	0.0157232704402516	0.0157232704402516\\
25	600	0.0132275132275132	0.0132275132275132\\
25	700	0.0114155251141553	0.0114155251141553\\
25	800	0.0100401606425703	0.0100401606425703\\
25	900	0.00896057347670251	0.00896057347670251\\
25	1000	0.00809061488673139	0.00809061488673139\\
50	100	0.158730158730159	0.158730158730159\\
50	200	0.0813008130081301	0.0813008130081301\\
50	300	0.0546448087431694	0.0546448087431694\\
50	400	0.0411522633744856	0.0411522633744856\\
50	500	0.033003300330033	0.033003300330033\\
50	600	0.0275482093663912	0.0275482093663912\\
50	700	0.0236406619385343	0.0236406619385343\\
50	800	0.020703933747412	0.020703933747412\\
50	900	0.0184162062615101	0.0184162062615101\\
50	1000	0.0165837479270315	0.0165837479270315\\
75	100	0.3125	0.3125\\
75	200	0.138888888888889	0.138888888888889\\
75	300	0.0892857142857143	0.0892857142857143\\
75	400	0.0657894736842105	0.0657894736842105\\
75	500	0.0520833333333333	0.0520833333333333\\
75	600	0.0431034482758621	0.0431034482758621\\
75	700	0.0367647058823529	0.0367647058823529\\
75	800	0.032051282051282	0.032051282051282\\
75	900	0.0284090909090909	0.0284090909090909\\
75	1000	0.0255102040816327	0.0255102040816327\\
100	100	0.606060606060606	0.606060606060606\\
100	200	0.21505376344086	0.21505376344086\\
100	300	0.130718954248366	0.130718954248366\\
100	400	0.0938967136150235	0.0938967136150235\\
100	500	0.0732600732600733	0.0732600732600733\\
100	600	0.0600600600600601	0.0600600600600601\\
100	700	0.0508905852417303	0.0508905852417303\\
100	800	0.0441501103752759	0.0441501103752759\\
100	900	0.0389863547758285	0.0389863547758285\\
100	1000	0.0349040139616056	0.0349040139616056\\
125	100	1	1\\
125	200	0.320512820512821	0.320512820512821\\
125	300	0.181159420289855	0.181159420289855\\
125	400	0.126262626262626	0.126262626262626\\
125	500	0.0968992248062016	0.0968992248062016\\
125	600	0.0786163522012579	0.0786163522012579\\
125	700	0.0661375661375661	0.0661375661375661\\
125	800	0.0570776255707763	0.0570776255707763\\
125	900	0.0502008032128514	0.0502008032128514\\
125	1000	0.0448028673835125	0.0448028673835125\\
150	100	1	1\\
150	200	0.476190476190476	0.476190476190476\\
150	300	0.24390243902439	0.24390243902439\\
150	400	0.163934426229508	0.163934426229508\\
150	500	0.123456790123457	0.123456790123457\\
150	600	0.099009900990099	0.099009900990099\\
150	700	0.0826446280991736	0.0826446280991736\\
150	800	0.0709219858156028	0.0709219858156028\\
150	900	0.062111801242236	0.062111801242236\\
150	1000	0.0552486187845304	0.0552486187845304\\
};


\end{axis}

\begin{axis}[%
width=0in,
height=0in,
at={(0in,0in)},
scale only axis,
xmin=0,
xmax=1,
ymin=0,
ymax=1,
axis line style={draw=none},
ticks=none,
axis x line*=bottom,
axis y line*=left,
legend style={legend cell align=left, align=left, draw=white!15!black}
]
\end{axis}
\end{tikzpicture}%
  \end{subfigure}
     \vspace{-50pt}
    \caption{{\small \sf Variation in the maximum misreporting probability of users for a range of audit fines $\k$ and audit costs $\c$ for three different values of $q_{\min}$. }} 
    \label{fig:misreportingProb}
\end{figure}

\section{Black Market Fraud} \label{sec:blackMarket}

In this section, we describe how digital signature schemes can be used in artificial currency systems to prevent the exchange of artificial currency for money, thereby preventing black market fraud. To this end, we first describe the mathematical properties of digital signatures and refer the readers to chapter 13 of \cite{BonehShoup20} for additional details. Then, we show how digital signatures can be used to create artificial currency systems where users cannot purchase others' credits with money. The digital signatures also provide a principled defense against double-spending and counterfeit attempts, which are important challenges facing existing credit systems \cite{carboncreditfraud, verra_carboncreditfraud} for tracking greenhouse emissions. 

As a reminder, artificial currency systems often give rise to black markets which facilitate the following two types of exchange: (a) artificial currency can be directly exchanged for money, or (b) artificial currency is first exchanged for goods, which are then exchanged for money. The existence of such markets will bias the allocation of goods toward high income individuals, thus undermining the desired equity properties of the artificial currency system. 

We focus our discussion on preventing type (a) exchanges since this is sufficient to prevent formation of black markets if the artificial currency is only exchangeable for \textit{transient} goods. By transient, we mean that the goods that must be either immediately consumed or discarded forever (e.g., food banks serving thanksgiving dinner, and federal transit benefits under sufficient supervision. See Appendix \ref{apdx:signatures} for more details). In particular, transient goods cannot be re-sold once they are obtained, preventing type (b) exchange. However, if the goods are non-transient, there is nothing preventing a recipient from exchanging their artificial currency for goods, and then selling the goods for money, and we defer a deeper exploration of this setting to future research.

\subsection{Digital Signature Schemes}

A digital signature scheme is a mathematical object that allows users to endorse (i.e., sign) messages in an unforgable way. In a digital signature scheme, each participant has two keys: a secret key and a public key. As the names suggest, the value of the secret key should only be known by its owner, and the value of the public key is known to all participants. The secret key serves as the participant's digital identity that it can use to sign messages. The public key serves as a mailbox or address that can be used to verify the validity of signatures.

Mathematically, a digital signature scheme is a tuple $(\gen, \sign, \verify)$. $\gen$ is a randomized function which outputs valid a secret key and public key pair $(\sk, \pk)$ that satisfy certain mathematical relationships. Users obtain their credentials by running this $\gen$ function. The probability that $\gen$ produces any key pair is very low, which is important to ensure that no two users have the same secret keys. $\sign$ is a randomized function used to sign messages. Given a message $\theta$, a user with keys $(\sk, \pk)$ can sign $\theta$ by evaluating $\sigma := \sign(\sk, \theta)$. The $\verify$ function is used to verify the legitimacy of signatures. Namely, given a message $m$ and a signature $\sigma$, $\verify(\pk, \theta, \sigma)$ will evaluate to $1$ if $\sigma$ is a possible output to the randomized function $\sign(\sk, \theta)$, where $\sk$ is the secret key corresponding to $\pk$. Otherwise $\verify(\pk, \theta, \sigma) = 0$.

Digital signature schemes that are used in practice have two additional properties: 

\noindent \textbf{Correctness:} For a key pair $(\sk, \pk)$ produced by \gen, signatures produced by $\sk$ will be accepted by the $\verify(\pk, \cdot)$ function with high probability. Namely, for any $m$ and any valid key pair $(\sk,\pk)$, $\mathbb{P} \left[ \verify(\pk, \theta, \sign(\sk, \theta)) \right] \approx 1$. 

\noindent \textbf{Security:} It is hard to forge a valid signature for a public key $\pk$ without knowledge of the corresponding secret key $\sk$. Namely, for any valid key pair $(\sk, \pk)$ and any message $m$ that has never been signed by $\sk$, it is computationally infeasible to find $\sigma'$ satisfying $\mathbb{P}[\verify(\pk, \theta, \sigma') = 1]> 0.01$ without knowledge of $\sk$. 

We refer the readers to Appendix~\ref{apdx:signatures} and \cite{BonehShoup20} for more details on secure digital signature schemes. 

\subsection{An artificial currency system with digital signatures}

We now present an artificial currency which uses a secure digital signature scheme during the minting and spending of currency to prevent the exchange of the currency for money. First, we introduce an assumption on the value of identity which we will need for our scheme. We then sketch the key ideas, after which we describe the cryptographic setup, minting, and spending procedures. 

\begin{assumption}\label{assumption:identity}
    No person would willingly give away their identity (including ownership rights to all of their possessions and credentials) regardless of what is being offered in return.
\end{assumption}

Assumption~\ref{assumption:identity} is reasonable, as people do not often share their passwords or social security numbers with others, and identity theft can be charged as a felony in the United States. 

\textbf{Key Ideas:} At a high level, each unit of artificial currency (e.g., coin) contains (a) a signature from the administrator and (b) a public key. The former ensures that any user can check the validity of the coin, i.e., that it was minted by the administrator and is not a counterfeit coin. The latter ensures that the only person who can spend the coin is the one who owns the secret key corresponding to the public key specified in the coin. Since a person's secret key is their digtial identity, no one is willing to reveal their secret key to anyone else under Assumption~\ref{assumption:identity}. Thus, in order to use someone else's coin, one must know that person's secret key. However, because no one is willing to reveal their secret key, there is no incentive for one person to try buying another person's coins, thereby eliminating black markets for the artificial currency.

\textbf{Cryptographic Setup:} Let $(\gen, \sign, \verify)$ be a digital signature scheme that satisfies the correctness and security properties. The administrator has a key pair $(\sk^*, \pk^*)$ and each of the $\n$ users have a key pair $\{(\sk_i, \pk_i)\}_{i=1}^n$. $\sk_i$ is the digital representation of user $i$'s identity, and under Assumption~\ref{assumption:identity}, user $i$ is not willing to share this with anyone else for any reason. Each person knows everyone else's public key. Only the administrator knows $\sk^*$ and only the $i$'th user knows $\sk_i$. All key pairs are obtained by running the \gen{} function. 

\textbf{Minting Artificial Currency:} To issue one unit of artificial currency (i.e., a coin) to the $i$'th user, the administrator first constructs the coin's metadata $\theta$. $\theta$ is a tuple that must contain a unique coin number to distinguish it from other coins issued to user $i$. It can also include other information, like the timeframe for which the coin is valid. The full coin is then the tuple $(\pk_i, \theta, \tau^*(\pk_i, \theta))$, where $\tau^*(\pk_i, \theta) = \sign(\sk^*, (\pk_i,\theta))$ is a signature on $(\pk_i, \theta)$ created by the administrator's secret key. Any user can thus check the validity of the coin by checking whether $\verify(\pk^*, (\pk_i,\theta), \tau^*(\pk_i,\theta))$ evaluates to $1$ or not. Because the signature scheme is secure, only the administrator can create coins that pass the validity check, and therefore creating counterfeit coins is not computationally feasible in this system. 

\textbf{Spending Artificial Currency:} When user $i$ wants to spend its artificial currency to receive goods from the administrator, user $i$ and the administrator first construct a raw receipt $r_0$. This raw receipt contains a description of the goods being purchased, the price (in artificial currency) and a list of coins to be spent for the purchase. The administrator then sends a large random number $Z$ to the user. User $i$ will sign the tuple $(r_0, Z)$, i.e., compute $\tau_i := \sign(\sk_i, (r_0, Z))$ and give the full receipt, $r := (r_0, Z, \tau_i)$ to the administrator. The administrator approves the transaction if and only if (a) the public key listed on each of the coins in $r_0$ match the public key of user $i$, (b) none of the coins listed in $r_0$ appear in any previously approved receipt, and (c) $\verify(\pk_i, (r_0, Z), \tau_i) = 1$. If $r$ is approved, it is added to the administrator's database of approved receipts, thus preventing any of the coins listed in $r_0$ from ever being spent a second time. By design, a receipt $r$ will be approved only if all coins listed in $r_0$ have the same public key $\pk_i$, and $\tau_i$ is a valid signature on $(r_0, Z)$ with respect to $\pk_i$. Because the digital signature scheme is secure, only user $i$ can use a coin which contains $\pk_i$, and any coin can be used only once. 

\section{Additional Details on Black Markets for Artificial Currency}\label{apdx:signatures}

\subsection{Discussion on transient goods} \label{apdx:transient-good-discussion}

Transient goods are goods which, upon purchase, must be either immediately consumed or discarded forever. In particular, the goods cannot be traded or sold in an after-market. In practice, transience is often enforced by supervision or surveillance, as is the case for Thanksgiving food banks and federal transit benefits, which we will elaborate on in the following paragraphs.

During Thanksgiving, many food banks offer free meals for those in need. In pre-pandemic settings, these meals are typically consumed at the food bank. The volunteers at the food banks verify the identities of those who enter, and act as supervisors preventing entry of ineligible users and preventing anyone from taking large volumes of food out of the food bank. These properties make such Thanksgiving dinners transient. Because ineligible users are now allowed entry, and the food must be consumed at the food bank, it cannot be taken out and sold in an after-market.

Federal transit benefits programs provide affordable transit to eligible users. Such goods are transient provided (a) the screening process is sufficiently thorough so that ineligible users do not gain access to benefits, and (b) there is sufficient supervision at public transit centers (workers who monitor the ticket machines and the turnstiles) to prevent an eligible user's benefits from being used by an ineligible user. 

\subsection{Additional Details on Signature Schemes}

Mathematically, a digital signature scheme is a tuple $(\lambda, \gen, \messagespace, \skspace, \pkspace, \signaturespace, \sign, \verify)$ where
\begin{enumerate}
    \item $\lambda \in \mathbb{Z}_+$ is a security parameter, where larger values correspond to stronger security. 
    \item $\messagespace, \skspace, \pkspace, \signaturespace$ are the domains of messages, secret keys, public keys and signatures respectively.
    \item $\gen$ is a randomized function which outputs valid a secret key and public key pair $(\sk, \pk)$. The runtime of \gen{} must be polynomial in $\lambda$. For any valid key pair $(\sk, \pk)$, the probability that \gen{} outputs $(\sk, \pk)$ must be upper bounded by $2^{-\lambda}$. 
    \item $\sign : \skspace \times \messagespace \rightarrow \signaturespace$ is a function that uses a secret key to create a signature on a message. The runtime of \sign{} must be polynomial in $\lambda$.
    \item $\verify : \pkspace \times \messagespace \times \signaturespace \rightarrow \{0,1\}$ is a function that uses a public key to determine whether a signature on a message is valid. The runtime of \verify{} must be polynomial in $\lambda$.
\end{enumerate}

If a user whose secret and public key pair is $(\sk,\pk)$ wants to sign a message $\theta$ (to show endorsement of the data contained within $\theta$), they evaluate the sign function with their secret key $\sk$ and the message $\theta$ as input to produce a signature $\tau \sim \sign(\pk, \theta)$, which means we sample $\tau$ from the distribution specified by $\sign(\pk, \theta)$. Then, any observer can check the validity of $\tau$ by evaluating the $\verify$ function, i.e., compute $\verify(\pk, \theta, \tau)$ and check if the result is $1$. 

Digital signature schemes that are used in practice have two additional properties: \\

\noindent \textbf{Correctness:} For a key pair $(\sk, \pk)$ produced by \gen, signatures produced by $\sk$ will be accepted by the $\verify(\pk, \cdot)$ function with high probability. Mathematically, this means that for every message $\theta \in \messagespace$, 
\begin{align*}
    \mathbb{P} \left( \verify(\pk, \theta, \sign(\sk, \theta)) = 1\right) \geq 1 - 2^{-\lambda}.
\end{align*}

\noindent \textbf{Security:} It is hard to forge a valid signature for a public key $\pk$ without knowledge of the corresponding secret key $\sk$. Concretely, consider the following game between Alice and Bob:
\begin{enumerate}
    \item Alice uses \gen{} to create a key pair $(\sk, \pk)$. She reveals $\pk$ to Bob.
    \item Bob picks $k$ messages $\{\theta_i\}_{i=1}^k$ and sends them to Alice, where $k \leq c_0 \lambda^{c_1}$ for some constants $c_0,c_1$. Alice responds with signatures $\{\tau_i\}_{i=1}^k$ where $\tau_i \sim \sign(\sk, \theta_i)$. 
    \item Given the signatures received from Alice, Bob picks a new message $\theta \not \in \{ m_i\}_{i=1}^\tau$ and tries to forge a signature $\tau$ for $\theta$. 
\end{enumerate}
Bob wins the game if $\verify(\pk, \theta, \tau) = 1$. A signature scheme is secure if the probability of Bob winning the game is sufficiently low:
\begin{align*}
    \mathbb{P} \left( \verify(\pk, \theta, \tau) = 1 \right) \leq \negligible(\lambda)
\end{align*}
where $\negligible$ is a non-negative decreasing function (which depends on $c_0,c_1$) where $\lim_{\lambda \rightarrow \infty} \lambda^c \negligible(\lambda) = 0$ for all $c > 0$, i.e., it is a function decaying faster than any polynomial.

\end{document}